\newtheorem {theorem} {Theorem}
\newtheorem {corollary} {Corollary}[section]
\newtheorem {lemma} {Lemma}
\theoremstyle{definition}
\newtheorem{proposition}{Proposition}
\theoremstyle{plain}
\newcommand{\ka}{\ket{a}}
\newcommand{\kphibe}{\ket{\phi}_{BE}}
\newcommand{\mq}{ \sqrt{q(1 - q)}}
\newcommand{\sqd}{\sqrt{D}}
\newcommand{\kb}[2]{\ket{#1}\bra{#2}}
\newcommand{\bk}[2]{ \braket{#1 | #2}}
\newcommand{\expmt}{\textbf{\texttt{Exp}}}
\newcommand{\half}{\frac{1}{2}}
\newcommand{\ot}{\otimes}
\newcommand{\mbi}{\mathbb{I}}
\newcommand{\eba}{e_b^a}
\newcommand{\ebap}{e_b^{a'}}
\newcommand{\ebpap}{e_{b'}^{a'}}
\newcommand{\sqobd}{\frac{1}{\sqrt{D}}}
\newcommand{\obd}{\frac{1}{D}}
\newcommand{\ab}{\alpha \beta}
\DeclareMathOperator{\Tr}{Tr}
\newcommand{\ezz}{e_0^0}
\newcommand{\ezo}{e_0^1}
\newcommand{\eoz}{e_1^0}
\newcommand{\eoo}{e_1^1}
\newcommand{\onebyd}{\frac{1}{D}}
\newcommand{\qbtwo}{ \frac{q}{2}}
\newcommand{\omqbtwo}{ \frac{1 - q}{2}}
\newcommand{\vgm}{\vec{\gamma}}
\newcommand{\vmin}{v_{\gmo}^{min}}
\newcommand{\vmax}{v_{\gmo}^{max}}
\newcommand{\vgmmin}{\vgm_1^{min}}
\newcommand{\vgmmax}{\vgm_1^{max}}
\newcommand{\gmo}{\gamma_1}
\DeclarePairedDelimiter\autobracket{(}{)}
\newcommand{\br}[1]{\autobracket*{#1}}
\newcommand{\hbzesig}{H(B^Z | E)_\sigma}
\newcommand{\hbzerho}{H(B^Z | E)_\rho}
\newcommand{\Z}{\mathcal{Z}}
\newcommand{\X}{\mathcal{X}}
\newcommand{\modefull}{\texttt{MODE$=$FULL}}
\newcommand{\modepartial}{\texttt{MODE$=$PARTIAL}}
\newcommand{\zbasis}{\{\ket{0}, \ket{1}, ..., \ket{D - 1}\}}
\providecommand{\keywords}[1]{\textbf{\textit{Index terms---}} #1}
\title{New Security Proof of a Restricted High-Dimensional QKD Protocol}
\author[1]{Hasan Iqbal \thanks{hasan.iqbal@uconn.edu}}
\author[1]{Walter O. Krawec}
\affil[1]{Department of Computer Science and Engineering,  University of Connecticut, Storrs, CT 06269, USA}
\begin{document}
	\maketitle
	
	\begin{abstract}
		High-dimensional states (HD) are promising for quantum key distribution (QKD) due to their noise tolerance and efficiency. However, creating, and measuring, HD states is technologically challenging, thus making it important to study HD-QKD protocols where Alice and Bob are restricted in their quantum capabilities. In this paper, we revisit a particular HD-QKD protocol, introduced in (PRA 97 (4):042347, 2018), which does not require Alice and Bob to be capable of sending and measuring in full mutually unbiased bases. In a way, the protocol is a HD version of the three state BB84: one full basis is used for key distillation, but only a single state is used, from an alternative basis, for testing the fidelity of the channel.

          The previous proof of security for this protocol has relied on numerical methods, making it difficult to evaluate for high dimensions.  In this work, we provide a new proof of security and  evaluate the key-rate for high dimensional states, beyond what could previously be evaluated.  Furthermore, our new proof produces better results than prior work for dimensions greater than eight, and shows that HD-states can benefit restricted protocols of this nature.
	\end{abstract}

\keywords{Quantum Key Distribution, Quantum Information Theory, Quantum Information Processing}
	
	\section{Introduction}
	Quantum key distribution(QKD) offers provable unconditional security of a shared secret key between two parties Alice and Bob, against an adversary Eve, whose capabilities are only limited by the laws of nature. QKD is one of the more mature areas of quantum information theory \cite{nielsen2002quantum, watrous2008theory, wilde2013quantum}  and has been studied extensively over the last few decades; for a survey, the reader is referred to \cite{qkd-survey-pirandola,qkd-survey-scarani,qkd-survey-omar}. Along with the importance of studying the limits of possibility in the quantum realm, QKD protocols also have crucial real-life significance. Indeed, common cryptographic systems with computational assumptions in use today, may be vulnerable to computational speedups in emerging quantum algorithms \cite{jordan}, coupled with increasingly advancing quantum technologies such as quantum computers \cite{arute2019quantum, chow2021ibm}. To ensure the continuance of security and privacy of our data against this backdrop, designing and analyzing the security of QKD protocols are critical, as these protocols are going to be an integral part of our upcoming quantum-proof communication networks \cite{poppe2008outline, liao2018satellite, awschalom2020long, ribezzo2022deploying}.

	However, implementing QKD systems is hard, due to the inherent difficulties of working with quantum resources \cite{brassard2000security, diamanti2016practical}. So it is important for the protocol designers to be prudent in how much quantum resources they use in a protocol. With this consideration in mind, people have tried to build such protocols with fewer quantum resources\cite{bennett1992quantum, bbm92, boyer2007quantum, zou2009semiquantum}.  Along this line is the so-called ``3-state-BB84'' protocol \cite{mor1998no,fung2006security,branciard2006zero} where the difference with the standard BB84 protocol is that Alice can only send one $\X$ basis state. This protocol and result was extended by Tamaki et al. in \cite{tamaki2014loss} where they examined the case when Alice's state preparation source is extremely noisy and she ends up preparing imperfect states to send to Bob and only sends one of the $\X$ basis states.


	Nurul et. al., \cite{islam2018securing} extended Tamaki et al.'s \cite{tamaki2014loss} result for higher dimensional states where the dimension of the Hilbert spaces that Alice and Bob use, may be greater than two. We denote this variant of the 3-State-BB84 protocol as ``HD-3-State-BB84'' protocol (though it is technically a $D+1$ state protocol, it is inspired by the three-state BB84 protocol, thus the name choice here). As opposed to the two-dimensional or qubit based cases, high dimensional or qudit based QKD systems \cite{hd-survey} exhibit higher noise tolerance in security analysis \cite{bechmann2000quantum, cerf2002security, acin2003security,chau2005unconditionally, sheridan2010security, chau2015quantum,vlachou2018quantum, iqbal2021analysis,yao2022quantum }  and there are remarkable advancements in actual implementations recently \cite{mower2013high, islam2017provably,lee2019large, wang2020high, vagniluca2020efficient,da2021path}. Of course, proving the security of a QKD protocol against the most general form of attack, otherwise known as the coherent attacks, remains one of the challenging aspects of any QKD system. The authors in \cite{islam2018securing} have used a numerical optimization-based method, more precisely, semi-definite programming \cite{vandenberghe1996semidefinite}, to upper bound the phase-error rate of the HD-3-State-BB84, which is necessary to prove security using the Shor-Preskill method \cite{shor2000simple}. People have used SDP-based analysis to analyze QKD systems before  \cite{bunandar2020numerical, winick2018reliable, moroder2006one, george2021numerical} in different contexts. However, due to the prohibitive computational complexity of SDP-based numerical optimization in this case for $D > 10$, analyzing noise tolerance in this method \cite{islam2018securing} becomes less compelling. So, one may look for analytical results for arbitrary dimensions that avoid computational complexities.
	
	In this work, we take the same higher-dimensional variant of the three-state protocol (HD-3-State-BB84) considered by Nurul et. al. \cite{islam2018securing} and analyze its security. Our proof is general enough to work under any noise model - for depolarizing channels we are able to derive an explicit analytical equation while for other channels we reduce the problem to finding eigenvalues of a relatively simple matrix which can be done quickly numerically.  Our new analysis method removes the need for computational optimization, and provides a fast way to compute achievable key rate in arbitrary dimensions, depending only on the dimensions of the Hilbert spaces $D$ and channel noise parameter $q$.
	
	We make several contributions in this work.  First, we revisit a high-dimensional protocol, introduced in \cite{islam2018securing}, and derive a new, information theoretic, security analysis for it. Unlike prior work in \cite{islam2018securing}, our method is analytical and can be used to analyze the key-rate of the protocol for any dimension (though we only derive an exact equation for the case of a depolarization channel noise model - the most often analyzed case in QKD security - other channels can be analyzed easily with our method also).  Previous work, relying on numerical optimization methods, could only realistically be evaluated for relatively small dimension.  The protocol we analyze is a HD-variant of the three-state BB84 protocol and only requires Alice to send a single $\X$ basis state.  Furthermore, we also consider a simpler version of the protocol where Bob only needs to be able to distinguish a single high-dimensional superposition state, making his measurement apparatus significantly simpler.  This would be important should an experimental implementation of this protocol be constructed ever.  Indeed, oftentimes, being able to distinguish all $D$ states of one basis is easy, while being able to also distinguish all $D$ states from an alternative basis is technologically challenging (as with, for instance, time-bin encoding where performing a $Z$ basis measurement requires only recording the time of arrival, while making a measurement in an alternative, mutually unbiased basis, requires cascading interferometers \cite{brougham2013security}).
	
	Our proof makes use of several techniques, based on extensions of methods we developed in \cite{iqbal2020high} for an alternative HD-QKD protocol, that may be broadly applicable to other (HD)-QKD protocols that lack certain ``symmetries'' which often make security proofs easier.  We first design a ``toy'' protocol and analyze its security; we then take advantage of the continuity of von Neumann entropy to promote this analysis to the actual QKD protocol.  Along the way, we prove a new continuity bound for conditional von Neumann entropy of certain quantum states which is tighter than previous work.  While our bound is only applicable to certain forms of classical-quantum systems, our method of proof for this continuity bound may be useful to prove more general results in the future.
	
	Finally, we evaluate our resulting key-rate bounds for this protocol and compare, when possible, to prior work by \cite{islam2018securing}. Importantly, unlike prior work, which showed a decreasing trend in the noise tolerance of the protocol as the dimension increased, we actually show an increasing trend in noise tolerance as the dimension increases.  Thus, in this work, we prove that high-dimensional states can indeed benefit this restricted three-state style protocol - \emph{a previously open question}.  Since high-dimensional states have been shown to benefit the theoretical performance of other QKD protocols \cite{hd-survey}, our work here sheds further light on the potential benefits of HD states to ``simpler'' protocols such as the one analyzed in this paper.

	\section{Preliminaries}
	A density operator $\rho$ describing a quantum state is a Hermitian, positive semi-definite operator with unit trace. In this work, subscripts of a density operator $\rho$, such as in $\rho_{ABE}$, means that it is a quantum system shared among parties Alice, Bob, and Eve. We may drop these subscripts if the context is clear. We use $H(A)_\rho$ to denote the von Neumann entropy of $\rho_A$. The trace norm or the Schatten $p$-norm for $p = 1$ of a density operator $\rho$ is defined as $\|\rho\|_1 := \Tr(\sqrt{\rho \rho^\dagger})$. The trace distance between two density operators $\rho$ and $\sigma$ is defined as $\half \|\rho - \sigma\|_1$.
	We denote the binary entropy function as $h(x) = -x \log x - ( 1- x) \log (1 - x)$ for $0 \le x \le 1$, where the logarithm is of base 2, which is also true for all logarithms used in this work. The Shannon entropy of a probability distribution $\vec{p}$ with $n$ outcomes, is denoted in the usual way as: $H(\vec{p}) = -\sum_{i = 1}^n p_i \log p_i$. The conditional entropy of a bipartite quantum system $\rho_{BE}$, denoted by $H(B |E)_\rho$ is defined as $H(B | E)_\rho = H(BE)_\rho - H(E)_\rho$, where again, we may drop the subscript. We use notation $\Z = \zbasis$ to denote the $D$-dimensional computational basis and use $\X$ to denote the Fourier basis; i.e., $\X = \{ \mathcal{F} \ket{0}, \mathcal{F} \ket{1}, ..., \mathcal{F} \ket{D - 1}\} = \{\ket{x_0},\cdots, \ket{x_{D-1}}\}$. The definition of Fourier transform operator $\mathcal{F}$ is the following:
	\begin{align*}
		\ket{x_a} = \mathcal{F} \ka  = \frac{1}{\sqd} \sum_{b = 0}^{D - 1} \exp\left(\frac{-\pi i a b}{D} \right) \ket{b}.
	\end{align*}
	When a qudit is sent from Alice to Bob, adversary Eve may attack the qudits in different strategies \cite{qkd-survey-pirandola, qkd-survey-scarani}. In one such strategy, namely collective attacks, Eve acts on the traveling qudits in an i.i.d manner, but she may measure her own memory register coherently and at any future time, potentially even after the classical communication part between Alice and Bob is complete. This class of attacks is widely considered in the QKD literature \cite{biham2002security, heid2006efficiency, acin2007device, branciard2008upper, scarani2008quantum, pirandola2014quantum} and has been proven to imply security against general attacks in certain protocols \cite{renner2005information, christandl2009postselection}. Under such an attack scenario, if we have a tripartite density operator $\rho_{A^ZB^ZE}$ where random variables $A^Z, B^Z$ are the classical results of measuring quantum memories $A$ and $B$ in the $\Z$ basis, it was shown by Devetak and Winter \cite{devetak2005distillation, renner2005information, qkd-survey-pirandola} that the key rate $K$ in an asymptotic scenario, is the following:
	\begin{align}
		K = \inf[H(B^Z|E) - H(B^Z|A^Z)],  \label{eq:winterkeyrate}
	\end{align}
	where the infimum is taken over all possible collective attacks that could be performed by Eve which agree with the observed channel statistics.

        An important concept in quantum information theory, and especially useful in quantum cryptography, are \emph{entropic uncertainty relations} (see \cite{coles2017entropic} for a survey) which relate the amount of uncertainty in a system based on the measurement overlap of the measurements performed.  In particular, let $\rho_{AE}$ be a quantum state where the $A$ register is $d$-dimensional.  Then, it was shown in \cite{berta2010uncertainty} that if a measurement is performed on the $A$ register in either the $Z$ (resulting in random variable $A_Z$) or $X$ (resulting in random variable $A_X$) basis, it holds that:
        \begin{equation}\label{eq:eu-1}
          H(A_Z|E) + H(A_X) \ge \log_2D
        \end{equation}
        Note that, actually, the result in \cite{berta2010uncertainty} is more general, however we only present the above version as that is the one which will be important to our work later.

        In \cite{krawec2020new}, an alternative entropic uncertainty relation was shown where, given $\rho_{AE}$, if a measurement in either the $Z$ basis, or a restricted POVM measurement of the form $\{\kb{x_0}{x_0}, I-\kb{x_0}{x_0}\}$ is performed, then it holds that:
        \begin{equation}\label{eq:eu-2}
          H(A_Z|E) + \frac{H_D(Q_X)}{\log_D2} \ge \log_2D,
        \end{equation}
        where $H_D(x)$ is the $D$-ary entropy function:
        \begin{equation}
          H_D(x) = x\log_D(D-1) - x\log_D(x) - (1-x)\log_D(1-x)
        \end{equation}
        and $Q_X$ is the probability of receiving outcome $I-\kb{x_0}{x_0}$.  Note that Equation \ref{eq:eu-2} follows immediately from Theorem 2 in \cite{krawec2020new} by the asymptotic equipartition property of quantum min entropy \cite{tomamichel2009fully}.  See Corollary \ref{cor:eu-new}.
	
	\section{The Protocol}
	
	The protocol is a high-dimensional variant of the three-state BB84 protocol, introduced in \cite{islam2018securing}.  While that paper considered a larger class of protocol, here we consider the ``simplest'' version.  We also consider two versions of the protocol: $\modefull$ and $\modepartial$.  There are two $D$-dimensional bases $\Z$ and $\X$ as defined previously and one distinguished $\X$ basis state we denote simply $\ket{x_0}$ (though it may be any of the $\X$ basis states so long as the choice is public knowledge).  Alice is able to send any $\Z$ basis state or the distinguished $\ket{x_0}$ state; Bob is able to perform (1) a full $\Z$ basis measurement or (2) he can measure in the full $\X$ basis (if $\modefull$) or he can only distinguish $\ket{x_0}$ from any of the other $\ket{x_i}$, $i > 0$ (if $\modepartial$).  That is, he is always able to measure in the full $\Z$ basis, but he does not need to be able to perform a full $\X$ basis measurement if $\modepartial$.
	
	The prepare-and-measure version of the protocol has Alice choosing, independently at random on each round, whether that round is a Key Round or a Test Round.  In the former case, she chooses one of the $\ket{a}\in\Z$ states to prepare and send.  Otherwise, if this is a Test Round, she sends $\ket{x_0}$.  When Bob receives a quantum state, he independently chooses to measure in the $\Z$ basis or in the $\X$ basis (though, as stated, in the latter case he need only be able to distinguish $\ket{x_0}$ from the other $\X$ basis states if $\modepartial$).  Alice and Bob disclose over the authenticated channel what basis choice they each made.  If Alice and Bob both used the $\Z$ basis, they share a key digit; otherwise, they use the round for testing the fidelity of the channel (Bob should always receive $\ket{x_0}$, and any other outcome will represent noise).  Later, for a randomly chosen subset of Key Rounds, Alice and Bob will disclose their full choices and measurement outcomes to determine the $\Z$ basis error rate of the channel.  Should the noise in either basis be ``too large'' (to be discussed), the parties abort.
	
	In our security proof, we actually analyze a partial-entanglement-based version of the protocol.  Here, Alice prepares either an entangled state, for Key Rounds, or a separable one, for Test Rounds.  The protocol is formally stated in Protocol \eqref{prot:hd-3-State-BB84}.  It is trivial to see that the security of this partial entanglement-based protocol will imply the security of the prepare and measure one.

	\begin{algorithm}
		\caption{High-dimensional Restricted BB84 (HD-3-State-BB84) \label{prot:hd-3-State-BB84}}
		$ $\newline
		\textbf{Public Parameters:} The dimension of the Hilbert space $D \ge 2$ and the bases $\Z$ and $\X$ as well as Alice's choice of the single $\X$ basis state $\ket{x_0} = \sqobd \sum_{a = 0}^{D -
			1} \ka$. Also the protocol \texttt{MODE}, namely $\modefull$ or $\modepartial$.
		$ $\newline
		\textbf{Quantum Communication Stage:} The quantum communication stage of the protocol will repeat the following until a sufficiently large classical raw-key is obtained by Alice and Bob:
		\begin{enumerate}
			\item Alice chooses randomly whether a particular round is a key generating round (Key Round) or a channel estimation round (Test Round). If it is a Key Round, she prepares an entangled state $\ket{\psi}_{AT} \coloneqq \frac{1}{\sqrt{D}} \sum_{a = 0}^{D - 1} \ket{a, a}_{AT}$ and sends the $T$ register to Bob through the communication channel. Otherwise, in a Test Round, she sends $\ket{x_0}$.
			
			\item Bob chooses randomly to measure in the $\Z$ basis or the $\X$ basis.  If the latter, and if $\modefull$, he measures in the full $X$ basis; otherwise, if $\modepartial$, he actually measures using the two outcome POVM $\{\ket{x_0}\bra{x_0}, I - \ket{x_0}\bra{x_0}\}$.
			
			\item Alice informs Bob over the authenticated channel about the choice for this round, whether it is a Key Round or a Test Round. Bob tells Alice his choice of measurement basis. If this is a Key Round and Bob uses the $\Z$ basis, then Alice measures her own register also in the $\Z$ basis, in which case, this round can contribute towards the raw key. Otherwise, if this is a Test Round and Bob uses the $\X$ basis, then this round can contribute towards estimating Eve's disturbance.
		\end{enumerate}
		\textbf{Classical Communication Stage: } Alice and Bob proceed with error correction and privacy amplification to obtain a secret key if the protocol was not aborted.
	\end{algorithm}

	
	\section{Security Analysis}
	The ultimate goal of our security analysis is to obtain a lower bound on the achievable key rate for protocol \eqref{prot:hd-3-State-BB84} according to equation \eqref{eq:winterkeyrate}. However, in equation \eqref{eq:winterkeyrate}, the entropy involving Eve's quantum memory $E$ and Bob's classical random variable $B^Z$, denoted as $H(B^Z | E)$, is not straightforward to calculate. Our goal in this section is to calculate a lower bound on this quantity $H(B^Z | E)$, resulting from protocol \eqref{prot:hd-3-State-BB84}. Note that the other quantity $H(B^Z|A^Z)$ in equation \eqref{eq:winterkeyrate} is directly observable by Alice and Bob, as this only involves their $\Z$ basis measurement results, and with classical communications, they can estimate this quantity. We proceed with our security analysis of protocol \eqref{prot:hd-3-State-BB84} under a collective attack scenario according to which the key rate equation\eqref{eq:winterkeyrate} was derived. We only consider ideal single qubit, lossless channels in this work and leave defense against practical attacks \cite{huang2018implementation,diamanti2016practical, lo2014secure} as interesting future works.
	
	The method that we are using to analyze the security of protocol \eqref{prot:hd-3-State-BB84} is based on the works in \cite{krawec2018key, iqbal2020high}. We prove the security of protocol \eqref{prot:hd-3-State-BB84} in three steps.  First, we consider an alternative ``toy'' protocol which is actually a form of quantum random number generation protocol, where Bob gets a random string, but Alice does not get any information.  We work out the resulting quantum states for the toy protocol and the real one, assuming Eve performed the same attack against both.  Next, we analyze the von Neumann entropy of the toy protocol; for this, we can use entropic uncertainty \cite{berta2010uncertainty}.  Finally, we take advantage of the continuity of von Neumann entropy \cite{winter2016tight} to promote the analysis from the toy protocol to the real one.  Along the way, we also derive a new continuity bound for von Neumann entropy for certain types of quantum states.

	\subsection{First step - Calculate density operators for states that Alice sends and Bob measures in the $\Z$ basis:}
	
	We first consider collective attacks - those are attacks where Eve prepares the same signal state each round.  We also consider the asymptotic scenario and so can assume that Alice and Bob have perfect statistics on the channel (i.e., we do not worry about finite sampling precision in this work).  Consider Eve's attack operator $U$ applied on each round of the protocol.
	First, let's define $U$'s action on a basis state $\ka \in \zbasis$ in the following way:
	\begin{align*}
		U\ka_T \ot \ket{\chi}_E = \sum_{b = 0}^{D - 1} \ket{b , \eba}_{TE},
	\end{align*}
	where $\ket{\chi}_E$ is a pure state in Eve's memory and states $\ket{e_i^j}$ are arbitrary states in Eve's memory. Note that, subscript $T$ and $E$ represent the transit and Eve's memory registers respectively.
	The unitarity of operator $U$ ensures that for any $\ka \in \zbasis$:
	\begin{align}
		\sum_{b = 0}^{D - 1} \bk{\eba}{\eba} = 1. \label{eq:zunitary}
	\end{align}
	When Alice decides a particular round to be a Key Round, she prepares an entangled state where each of the $\Z$ basis states is equally likely to appear. More precisely, Alice prepares a state $\ket{\psi}_{AT} \coloneqq \frac{1}{D} \sum_{a = 0}^{D - 1} \ket{a, a}_{AT}$ and then sends the transit part($T$) of the prepared state through the quantum channel, keeping her own register ($A$) to herself. Eve attacks this register $T$ with operator $U$ and the state evolves to the following:
	\begin{align}
		\left(\mbi \ot U \right)\ket{\psi}_{AT} \ot \ket{\chi}_E &= \left(\mbi \ot U \right)  \sqobd \sum_{a = 0}^{D - 1} \ket{a, a}_{AT} \ot \ket{\chi}_E \nonumber\\
		&= \sqobd \sum_{a = 0}^{D - 1} \ka_A \ot U \ka_T \ot \ket{\chi}_E \nonumber\\
		&= \sqobd \sum_{a = 0}^{D - 1} \ka_A \ot \sum_{b = 0}^{D - 1} \ket{b, \eba}_{BE} \coloneqq \ket{\psi}_{ABE}. \nonumber
	\end{align}
	In the last equality, we rename the transit register $T$ that arrives at Bob's lab to $B$. 
	
	Bob applies a $\Z$ basis measurement on register $B$, and the state collapses from this superposition to the states that agree with his measurement outcomes. After that, Alice also measures her subsystem in the $\Z$ basis and the state $\rho_{ABE}$, after their measurements, becomes the following mixed state:
	\begin{align}
		\rho_{A^ZB^ZE} = & \obd \sum_{a, b} \kb{a}{a}_A \ot \kb{b}{b}_{B} \ot \kb{\eba}{\eba}_{E}. \label{eq:sigabe}
	\end{align}
	Note that, this is the only case where a key bit could be generated. Because we want to bound the conditional entropy between Bob and Eve only, as we are using reverse-reconciliation \cite{qkd-survey-pirandola} as discussed earlier, we need to trace out Alice and get the joint cq-state between Bob and Eve. This results in:
	\begin{align}
		\Tr_A (\rho_{A^ZB^ZE}) = \rho_{B^ZE} = \obd \sum_{a, b} \kb{b}{b}_{B} \ot \kb{\eba}{\eba}_{E} \label{eq:rhobefirst}.
	\end{align}
	From this density operator, we need to calculate $H(B^Z|E)_\rho$. However, as discussed earlier, this is not straightforward. So we analyze the case when Alice sends a fixed $\X$-basis state $ \ket{x_0} = \sqobd \sum_{a = 0}^{D - 1} \ka_T$ and Bob measures in the $\Z$ basis. Then we shall show how to bound $H(B^Z|E)_\rho$ based on the information obtained from this case. We denote the subscript as the transit register $T$ because Alice sends this state in the quantum channel without keeping a copy in her lab. After Eve uses the unitary attack operator $U$ to attack this state, it evolves to:
	\begin{align}
		U \ket{x_0}_T \otimes \ket{\chi}_E &=  U \sqobd \sum_{a = 0}^{D - 1} \ka_T \otimes \ket{\chi}_E  \nonumber\\
		&= \sqobd \sum_{a = 0}^{D - 1} U\ka_T \otimes \ket{\chi}_E =\sqobd \sum_{a, b} \ket{b, e_b^a}_{BE} \coloneqq \kphibe \label{eq:phibe}.
	\end{align}
	The unitarity of $U$ ensures that,
	\begin{align}
		\bk{\phi}{\phi}_{BE} =  &\left(\sqobd \sum_{a, b} \bra{b, e_b^a}_{BE} \right) \left(\sqobd \sum_{a', b'} \ket{b', e_{b'}^{a'}}_{BE} \right) = \obd  \sum_{a, a', b} \braket{\eba | \ebap}_E  = 1. \label{eq:phibeunitarity}
	\end{align}
	Again, we rename the transit register $T$ to Bob's register $B$ as the state arrives at his lab. Now we take the outer product of state $\kphibe$ to obtain:
	\begin{align*}
		\sigma_{BE} \coloneqq \kb{\phi}{\phi}_{BE} = &\left( \sqobd \sum_{a, b} \ket{b, \eba}_{BE} \right)\left(\sqobd \sum_{a', b'} \bra{b',\ebpap}_{BE} \right) \\
		&=  \frac{1}{D} \sum_{a, a', b, b'} \kb{b}{b'}_B \ot \kb{\eba}{\ebpap}_E.
	\end{align*}
	Now, after receiving the state, he measures it in the $\Z$ basis. We need this density operator to perform continuity bound analysis in the third step of our analysis. After Bob measures this state in the $\Z$ basis, it becomes the following mixed state:
	\begin{align}
		\sigma_{B^ZE} = \obd \sum_{a, a', b} \kb{b}{b}_{B} \ot  \kb{\eba}{\ebap}_{E} \label{eq:sigbefirst}.
	\end{align}
	With these two density operators $\sigma_{B^ZE}$ and $\rho_{B^ZE}$ in our hands, where $\rho_{B^ZE}$ is needed to generate key bits and $\sigma_{B^ZE}$ to be used in continuity bound analysis, we can proceed to the next step.
	
	\subsection{Second Step - Calculate a lower bound on the key rate } Now we present the second part of the security analysis in the form of the following theorem. Here, $B^X$ is the random variable that represents Bob's $\X$ basis measurement outcomes, when Alice sends $\ket{x_0}$.  For $\modefull$, $B^X$ takes on $D$ possible values since $B$ is able to perform a full $\X$ basis measurement, while for $\modepartial$, $B^X$ takes only two values.
	\begin{theorem}\label{thm:keyratelemma}
		The key rate of protocol \eqref{prot:hd-3-State-BB84} when $\modefull$ is lower bounded by:
		\begin{align}
			K \ge \log_2D - \Delta - H(B^X)_\sigma - leak_{EC}, \label{eq:keyratethm-full}
		\end{align}
                and, when $\modepartial$, is lower bounded by:
                \begin{equation}\label{eq:keyratethm-partial}
                  K \ge \log_2D - \Delta - \frac{H_d(B^X)_\sigma}{\log_D2} - leak_{EC}.
                \end{equation}
              \end{theorem}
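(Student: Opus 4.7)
\bigskip

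\noindent\textbf{Proof Proposal.}
My plan is to prove the bound in three stages matching the overall strategy outlined above: (i) bound $H(B^Z|E)_\sigma$ on the toy protocol state by entropic uncertainty; (ii) transfer this bound to $H(B^Z|E)_\rho$ on the real protocol state via a continuity argument; and (iii) combine with the Devetak--Winter formula \eqref{eq:winterkeyrate}.

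For stage (i), observe that $\ket{\phi}_{BE}$ in \eqref{eq:phibe} is a pure bipartite state on $BE$, obtained from Alice preparing the single signal $\ket{x_0}$ and Eve applying $U$. Thus the entropic uncertainty relation of \cite{berta2010uncertainty} applies directly to $\sigma_{BE}=\kb{\phi}{\phi}_{BE}$. For $\modefull$, equation \eqref{eq:eu-1} gives $H(B^Z|E)_\sigma+H(B^X)_\sigma\ge \log_2 D$. For $\modepartial$, the restricted-POVM version \eqref{eq:eu-2} (applied as Corollary~\ref{cor:eu-new}) gives $H(B^Z|E)_\sigma + H_D(B^X)_\sigma/\log_D 2 \ge \log_2 D$, where now $B^X$ is the two-outcome POVM random variable. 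In each case I rearrange to obtain a lower bound on $H(B^Z|E)_\sigma$ in terms of quantities that Alice and Bob can estimate from Test Rounds of the real protocol.

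For stage (ii), I need to promote the bound on $H(B^Z|E)_\sigma$ to a bound on $H(B^Z|E)_\rho$, since the key is distilled from Key Rounds, not from the toy signal. Crucially, the states $\rho_{B^ZE}$ of \eqref{eq:rhobefirst} and $\sigma_{B^ZE}$ of \eqref{eq:sigbefirst} are built from the \emph{same} Eve operator $U$ and the same memory vectors $\ket{e_b^a}$; only the Alice side differs (entangled vs.\ $\ket{x_0}$). Both are classical-quantum states with the same classical register $B^Z$. I will apply a continuity bound for conditional von Neumann entropy of cq-states to conclude
\begin{equation*}
H(B^Z|E)_\rho \;\ge\; H(B^Z|E)_\sigma \;-\; \Delta,
\end{equation*}
where $\Delta$ is a function of the trace distance $\tfrac{1}{2}\trd{\rho_{B^ZE}-\sigma_{B^ZE}}$ and of $D$. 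A direct application of Winter's generic bound \cite{winter2016tight} would give one valid $\Delta$; however, since both density operators share the classical $B$-register structure with the same diagonal labels $b$ and the same off-diagonal--free $B$-marginal, I expect a specialized, tighter $\Delta$ to be available, matching the paper's claim of a new continuity bound. I will bound the trace distance by expanding both operators explicitly in terms of $\ket{e_b^a}$ and exploiting the structural similarity (both involve the same $\kb{b}{b}_B\otimes(\cdot)_E$ blocks), reducing the problem to controlling terms that vanish when Eve's attack is symmetric across $\ket{a}$.

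For stage (iii), substituting the chain
\begin{equation*}
H(B^Z|E)_\rho \;\ge\; \log_2 D - (\text{$X$-basis entropy term}) - \Delta
\end{equation*}
into \eqref{eq:winterkeyrate}, and identifying $H(B^Z|A^Z)_\rho$ with the error-correction leakage $leak_{EC}$ (since on Key Rounds Alice's and Bob's $Z$-basis outcomes are classical random variables with the observed joint distribution), yields both \eqref{eq:keyratethm-full} and \eqref{eq:keyratethm-partial} in one stroke depending on which uncertainty relation is used in stage (i). The main obstacle, I anticipate, is stage (ii): proving a continuity bound sharp enough to keep the final key-rate positive at meaningful noise levels, and expressing the trace distance $\tfrac{1}{2}\trd{\rho_{B^ZE}-\sigma_{B^ZE}}$ in terms of observable channel statistics (so that $\Delta$ is actually computable by Alice and Bob). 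Controlling the off-diagonal cross terms $\braket{e_b^a|e_b^{a'}}$ for $a\ne a'$, which appear in $\sigma_{B^ZE}$ but are absent from $\rho_{B^ZE}$, will be the heart of this step.
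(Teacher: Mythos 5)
Your proposal follows essentially the same route as the paper's proof: entropic uncertainty (Equation \eqref{eq:eu-1} for \modefull, Corollary \ref{cor:eu-new} for \modepartial) applied to the toy state $\sigma$, transfer to $\rho$ via $\Delta$, and substitution into the Devetak--Winter formula with $leak_{EC}=H(B^Z|A^Z)$. The only remark is that, since the theorem defines $\Delta$ as $|\hbzerho-\hbzesig|$ itself, the step $\hbzerho\ge\hbzesig-\Delta$ is immediate by definition, and the continuity/trace-distance machinery you describe in stage (ii) belongs to the subsequent task of actually computing $\Delta$ rather than to the proof of this theorem.
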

	where $\Delta = | \hbzerho - \hbzesig |$ based on the $\rho$ and $\sigma$ found in equations \eqref{eq:rhobefirst} and \eqref{eq:sigbefirst} respectively. Additionally, $leak_{EC}$ is the information leaked during error correction \cite{renner2005information}.
	\begin{proof}
		In the case when Alice sends a single $\X$ basis state which, after Bob's measurement, results in the density operator $\sigma_{BE}$, we know that by Equation \ref{eq:eu-1}, we have for $\modefull$:
		\begin{align}
			H(B^Z | E)_\sigma &+ H(B^X | A)_\sigma \ge \log(D) \nonumber\\
			\implies H(B^Z | E)_\sigma &\ge \log(D) - H(B^X | A)_\sigma \nonumber\\
			&\ge \log(D) - H(B^X)_\sigma \label{eq:rholowerberta},
		\end{align}
		where $H(B^X)_\sigma$ is easily estimated using the observed error rate in the Test Round case.  Then combining equation \eqref{eq:rholowerberta} with the definition of $\Delta$ as mentioned before, and remembering that $leak_{EC} = H(B^Z | A^Z)$ asymptotically, if they use an optimal error correction reconciliation protocol \cite{renner2005information}, Devetak-Winter's key rate equation from \cite{devetak2005distillation} finishes the proof for $\modefull$.  For $\modepartial$, the same arguments can be used, but with Equation \ref{eq:eu-2}, thus completing the proof.

              \end{proof}

              To calculate the keyrate, one must bound $\Delta$.  The following lemma proven in \cite{iqbal2020high} for an alternative protocol, helps with this as we shortly see:
	\begin{lemma}
		\label{lem:evesattackoperator}
		(Adopted from \cite{iqbal2020high}) Assuming Alice and Bob only use mutually unbiased bases for state encoding, it is to Eve's advantage to send an initial state satisfying the following orthogonality constraint:
		\begin{equation*}
			\braket{\eba | \ebap} = \begin{cases}
				p(b | a), & \text{ if }  a = a' \\
				0, & \text{otherwise},
			\end{cases}
		\end{equation*}
		where,  $p(b | a) = \bk{\eba}{\eba}$ denotes the probability of Bob's $\Z$ basis measurement outcome being a specific $\ket{b} \in \Z$ given that Alice sent a state $\ka \in \Z$.
	\end{lemma}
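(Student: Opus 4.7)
The plan is to show that for any collective attack $U$ consistent with the observed MUB statistics, there exists an extended attack $U'$ satisfying the claimed orthogonality that makes Eve at least as powerful as $U$. The core idea is that Eve may always augment her ancilla with a classical copy of Alice's $\Z$-basis input label, and by monotonicity of conditional entropy this cannot reduce the side information she carries about Bob's raw key.

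Concretely, I would introduce an auxiliary Eve register $E'$ initialized in $\ket{0}_{E'}$ and define the extended attack
\begin{align*}
U'\ka_T\ket{\chi}_E\ket{0}_{E'} \;=\; \sum_{b=0}^{D-1}\ket{b,\eba}_{TE}\otimes\ket{a}_{E'},
\end{align*}
extended to arbitrary inputs by linearity. Using equation \eqref{eq:zunitary}, $U'$ preserves inner products on $\Z$-basis inputs: for $a\neq a'$ the extra factor $\braket{a|a'}_{E'}$ annihilates the overlap, and for $a=a'$ it reduces to $\sum_{b}\braket{\eba|\eba}=1$. Hence $U'$ extends to a unitary. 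Relabeling Eve's ancilla states as $\ket{\widetilde{e}_b^a}\coloneqq\ket{\eba}_E\otimes\ket{a}_{E'}$ gives $\braket{\widetilde{e}_b^a|\widetilde{e}_b^{a'}}=\braket{a|a'}\,\braket{\eba|\ebap}$, which vanishes for $a\neq a'$ and equals $p(b|a)$ for $a=a'$, establishing the orthogonality relation. Since $EE'\supseteq E$, monotonicity of conditional entropy under tracing out $E'$ gives $H(B^Z\mid EE')_{U'}\le H(B^Z\mid E)_{U}$, so any lower bound proved against attacks of the restricted form $U'$ applies to $U$ as well, and Eve may indeed assume the orthogonal structure without loss of generality.

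The main obstacle is reconciling $U'$ with the observed $\X$-basis statistics: the naive extension above attaches $\ket{a}_{E'}$ to each $\Z$-basis branch, which decoheres the cross-terms $\braket{\eba|\ebap}$ for $a\neq a'$ that govern the $\X$-basis outcomes Bob records. This is precisely where the MUB hypothesis enters: because Alice also encodes in $\X$, the full Alice-to-Bob channel is determined by the observations, so by Stinespring's theorem Eve's purification is unique up to a local isometry on her register. I would use this freedom to choose the isometry so that the enlarged ancilla states simultaneously reproduce the observed $\X$-basis statistics and satisfy the required orthogonality; an equivalent way to think of this is that on $\X$-basis test inputs the new branches coherently combine to match the prior marginals while storing on $E'$ exactly the information that Eve could have gleaned from a symmetric analysis in the Fourier-conjugate basis. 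Carrying out this construction in detail is the technical core of \cite{iqbal2020high}, which I would invoke rather than re-derive.
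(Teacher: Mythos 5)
There is a genuine gap here, and it sits exactly where you flag it. Your extended attack $U'\ka_T\ket{\chi}_E\ket{0}_{E'}=\sum_b\ket{b,\eba}_{TE}\ot\ket{a}_{E'}$ zeroes out \emph{every} cross term $\braket{\widetilde{e}_b^a|\widetilde{e}_{b'}^{a'}}$ with $a\ne a'$, including those with $b\ne b'$. But the paper's computation of $p(x_i)$ shows that the $b\ne b'$, $a\ne a'$ overlaps $\braket{\eba|\ebpap}$ are precisely the quantities that \emph{do} appear in Bob's observed $\X$-basis statistics. Consequently $U'$ induces a different channel on Bob's register (the reduced state $\Tr_{EE'}$ of $U'\ket{x_0}$ differs from that of $U\ket{x_0}$), so $U'$ is generally not consistent with the same observed data as $U$. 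Since the key rate is an infimum over attacks \emph{consistent with the observed statistics}, your monotonicity step $H(B^Z|EE')_{U'}\le H(B^Z|E)_U$ does not let you replace $U$ by $U'$: the bound would be evaluated at a different (typically noisier) point of the statistics. The Stinespring patch you invoke to repair this does not apply, because isometric freedom relates different purifications of the \emph{same} channel, whereas your $U'$ implements a different channel; and there is no local isometry on $E$ that kills only the same-$b$ cross terms while preserving the $b\ne b'$ ones, since such a map is not inner-product preserving in general. Deferring the repair to \cite{iqbal2020high} does not close the gap, as that argument (reproduced in this paper) is not of the form you describe.

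The paper's proof takes a different and more direct route: it shows that the specific overlaps the lemma constrains, namely $\braket{\eba|\ebap}$ with $a\ne a'$ and the \emph{same} $b$, never appear individually in any observable. They are absent from the $\Z$-basis statistics (Equation \eqref{eq:sigabe}), and in the $\X$-basis outcome probability $p(x_i)$ they enter only through the aggregate $\obd\sum_{a,a',b}\braket{\eba|\ebap}$, which is pinned to $1$ by the unitarity/normalization constraint \eqref{eq:phibeunitarity} regardless of the individual values. Hence Eve may freely set these overlaps to zero without perturbing any statistic, and orthogonality only increases her distinguishing power. If you want to salvage your "give Eve a free copy of $a$" intuition, you would need a construction that decoheres only within each fixed-$b$ block, which is not what appending $\ket{a}_{E'}$ globally does.
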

	\begin{proof}
		(See also \cite{iqbal2020high}).  Clearly if $a = a'$, then $\bk{\eba}{\eba} = p(b|a)$; thus the only thing to show is that it is to Eve's advantage to set her other ancilla vectors to be orthogonal.  Clearly, orthogonality can only help Eve as it allows her to more easily distinguish Alice or Bob's state.  The only time it may be useful to Eve to set these to be non-orthogonal is if, by doing so, she can make other vectors closer to orthogonal by decreasing the observed noise.  We therefore show that these inner products, when $a\ne a'$ do not contribute to any observed noise parameter and so Eve might as well set them to zero as they will not influence any  noise statistics, and orthogonal states will provide her with maximal information gain.
		
		Clearly, there are no terms of the form $\braket{\eba | \ebap}$ in the $\Z$ basis noise as we can see in equation \eqref{eq:sigabe}. Now, let's consider the $\X$ basis noise. When Alice sends a single $\X$ basis state $\ket{x_0} = \sqobd \sum_{a = 0}^{D - 1} \ka$ and Eve attacks, we get $\kphibe = \sqobd \sum_{a, b} \ket{b, e_b^a}_{BE}$ as shown in equation \eqref{eq:phibe}.
		In a noiseless scenario, if Bob measures in the $\X$ basis, he would receive the same $\ket{x_0}$ that Alice has sent. However, to measure how much disturbance is created by Eve, we rewrite $\kphibe$ in the $\X$ basis to get the probability of each of the outcomes in this basis. Each $\ket{b}$ in state $\kphibe$, is the following:
		\begin{align*}
			\ket{b} = \sum_i \beta_{b, i} \ket{x_x},
		\end{align*}
		where $\beta_{b, i} = \braket{b | x_i}$. Our choice of the $\X$ basis ensures that $\X$ and $\Z$ are mutually unbiased, meaning, $|\beta_{b, i}|^2 = \obd$ holds for any $b, i$ pair. Then we can write the state $\kphibe$ as:
		\begin{align*}
			\kphibe = &\sqobd \sum_{a, b, i} \beta_{b, i} \ket{x_i}_B  \ot \ket{e_b^a}_E, \\
			&=  \sum_i \ket{x_i}_B \ot \sqobd \sum_{a, b} \beta_{b, i} \ket{\eba}_E.
		\end{align*}
		In this case, if Bob measures this state in the $\X$ basis, the probability of any particular outcome $\ket{x_i} \in \X$, denoted as $p(x_i)$, is the following:
		\begin{align*}
			p(x_i) = &\left(\sqobd \sum_{a, b} \beta_{b, i}^\dagger \bra{\eba}_E \right) \left( \sqobd \sum_{a', b'} \beta_{b', i} \ket{\ebpap}_E \right) \\
			&= \obd \sum_{a, a', b, b'}  \beta_{b, i}^\dagger \beta_{b', i} \braket{\eba | \ebpap}_E \\
			&= \obd \sum_{a, a', b = b'}  \beta_{b, i}^\dagger \beta_{b, i} \braket{\eba | \ebap}_E +  \obd \sum_{a, a', b \ne b'}  \beta_{b, i}^\dagger \beta_{b', i} \braket{\eba | \ebpap}_E \\
			&= \frac{1}{D} \left( \sum_{a, a', b } \obd  \braket{\eba | \ebap}_E\right)  +  \obd \sum_{a, a', b \ne b'}  \beta_{b, i}^\dagger \beta_{b', i} \braket{\eba | \ebpap}_E \\
			&= \frac{1}{D} +  \obd \sum_{a, a', b \ne b'}  \beta_{b, i}^\dagger \beta_{b', i} \braket{\eba | \ebpap}_E,
		\end{align*}
		where the last equality follows using equation \eqref{eq:phibeunitarity}, which reminds us that terms of the form $\braket{\eba | \ebap}$ fall under the unitarity constraint on $U$. From this, we see that no terms of the form $\braket{\eba|\ebap}$ appear in any observed noise parameter even if Bob were able to make a full $\X$ basis measurement.   So, it does not hurt Eve if she decides to make terms like $\braket{\eba | \ebap}$ for $a \ne a'$ orthogonal, as these terms do not contribute to either $\Z$ or $\X$ basis noise. So, Eve might as well set them to zero.
	\end{proof}

	\paragraph{Calculating $\Delta$ in depolarizing channel:}

        First, we show an analytical expression for $\Delta$ assuming depolarization channels.  Later, we will show how one may bound $\Delta$ for arbitrary channels.
	
	To bound $\Delta$ we take advantage of the continuity of von Neumann entropy; namely that $\Delta$ can be upper-bounded as a function of the trace distance between $\rho$ and $\sigma$.  In particular, \cite{winter2016tight} bounds the absolute difference of conditional entropies of two bipartite cq-states, in this case, $\rho_{B^ZE}$ and $\sigma_{B^ZE}$, as a function of their trace distance. We have,
	\begin{align}
		\Delta = | H(B^Z|E)_\rho - \hbzesig | \le \epsilon \log |B^Z| + (1 + \epsilon) h\left(\frac{\epsilon}{1 + \epsilon}\right), \label{eq:wintercontinuityExtra}
	\end{align}
	where $\epsilon \ge \half \| \sigma_{B^ZE} - \rho_{B^ZE} \|_1$ is the upper bound of trace distance between systems $\sigma_{B^ZE}$ and $\rho_{B^ZE}$ and $|B^Z|$ is the size of the set of outcomes from Bob's $\Z$ basis measurement, in our case, which is simply $D$.
	
	In this section, we show how to bound the trace distance as a function only of the noise in the channel and the dimension of each system $D$.  To do so, we use methods we developed in \cite{iqbal2020high}.  While our bound only applies to a depolarization channel (the most common form of noise considered in theoretical QKD analysis), our methods may be extendable to other channels.  Note that this derivation below is the only place in our proof where we assume a depolarization channel.  A depolarizing channel, denoted here by $\mathcal{E}_q$ with noise parameter $q$, acts on a quantum state $\omega$ with dimension $D$ in the following way:
	\begin{equation}\label{eq:depolar}
		\mathcal{E}_q(\omega) = \left(1-\frac{D}{D-1}q\right)\omega + \frac{q}{D-1}\mathbb{I},
	\end{equation}
	where $\mathbb{I}$ is the identity operator of dimension $D$.
	
	
	Using basic properties of trace distance we have:
	\begin{align}
		\epsilon &= \half \left\| \sigma_{B^ZE} - \rho_{B^ZE} \right\|_1 \nonumber \\
		&= \frac{1}{2D} \left\| \sum_{a, a', b} \kb{b}{b}_B \ot  \kb{\eba}{\ebap}_E - \sum_{a, b} \kb{b}{b}_B \ot \kb{\eba}{\eba}_E \right\|_1 \nonumber \\
		&= \frac{1}{2D} \left\| \sum_{b = 0}^{D - 1} \kb{b}{b}_B \ot \left(\sum_{a, a'} \kb{\eba}{\ebap}_E - \sum_{a = 0}^{D - 1} \kb{\eba}{\eba} \right) \right\|_1 \nonumber \\
		&= \frac{1}{2D}  \sum_{b = 0}^{D - 1} \left \| \sum_{a, a'} \kb{\eba}{\ebap} - \sum_{a = 0}^{D - 1} \kb{\eba}{\eba} \right \|_1 \nonumber \\
		&= \frac{1}{2D}  \sum_{b = 0}^{D - 1} \left\| \sum_{a \ne a'} \kb{\eba}{\ebap} \right\|_1 \label{eq:tracedistupb},
	\end{align}
	where the second to last equality follows because $(\sigma_{B^ZE} - \rho_{B^ZE})$ is a cq-state and block-diagonal, and the eigenvalues of each block for an individual $\ket{b} \in \{ \ket{0}, \ket{1}, ..., \ket{D - 1} \}$ can be found individually. Taking advantage of Lemma \ref{lem:evesattackoperator}, along with the fact that trace distance is invariant to changes in basis, we find:
	\begin{align}
		\epsilon &= \frac{1}{2D} \sum_{b = 0}^{D - 1} \left\| \sum_{a \ne a'} \kb{\eba}{\ebap} \right \|_1 \notag\\
		&= \frac{1}{2D} \sum_{b = 0}^{D - 1} \left\| \sum_{a \ne a'} \sqrt{p(b | a)p(b | a')} \kb{a}{a'} \right\|_1,\label{eq:tr-dist-general}
	\end{align}
	where, for the last equality, we have used the computational basis, meaning,  $a, a' \in \Z$, remembering that the trace norm is invariant under the change of basis. Now, we denote $X_b \coloneqq \sum_{a \ne a'} \sqrt{p(b | a)p(b | a')} \kb{a}{a'} $. Clearly, $X_b$ is a Hermitian operator. So we can write it as $X_b = \sum_i \lambda_i \kb{v_i}{v_i}$, where $\lambda_i$ and $v_i$ are it's eigenvalues and orthogonal eigenvectors, respectively. Its trace norm is the sum of the absolute values of its eigenvalues, i.e.,  $\|X_b\|_1 = \sum_i |\lambda_i|$. We can write an eigenvector $\ket{v}$ of $X_b$ in an orthonormal basis $\{\ket{i}\}_{i = 0}^{D - 1}$ like: $\ket{v} = \sum_i x_i \ket{i}$. Now, if we apply the operator $X_b$ on one of its eigenvectors, we get the following:
	\begin{align}
		X_b \ket{v} &= \sum_{a \ne a'} \sqrt{p(b | a)p(b | a')} \kb{a}{a'} \left(\sum_i x_i \ket{i}\right) \nonumber\\
		&= \sum_{a \ne a', i} \sqrt{p(b | a)p(b | a') } x_i  \ka\bk{a'}{i} \nonumber\\
		&= \sum_{a, i \ne a}  \sqrt{p(b | a)p(b | i) } x_{i} \ka \label{eq:xbv}
	\end{align}
	There are two families of equations in the equation above for pairs of $a, b$, i.e., when $a = b$ and $a \ne b$. Lets consider the first case when $a = b$, and keep in mind that $X_b \ket{v} = \lambda \ket{v}$ for some particular eigenvector $\ket{v} = \sum_i x_i \ket{i}$, now it holds that:
	\begin{align}
		\sum_{ \substack{i = 0  \\ i \ne a} }^{D - 1}  \sqrt{p(b | b)p(b | i) }x_{i} &= \lambda x_b, \nonumber\\
		\implies \sum_{ \substack{i = 0  \\ i \ne a} }^{D - 1}  \sqrt{\alpha \beta }x_{i} &= \lambda x_b,
		\label{eq:aeqb}
	\end{align}
	where we define $\alpha  = p(b | b) = 1 - q$ and $\beta = p(b | i) = \frac{q}{D - 1}$. Note that it is starting here that we use the assumption of a depolarization channel.  Then in the second case, when $a \ne b$, we have two more cases for $i = b$ and $i \ne b$. So we get:
	\begin{align}
		\sqrt{p(a | a)p(b | a)} x_b + \sum_{ \substack{i \ne a \\ i \ne b}}\sqrt{p(b | a)p(b | i)} x_i = \lambda x_a \nonumber \\
		\implies \sqrt{\alpha \beta} x_b + \sum_{ \substack{i \ne a \\ i \ne b}}\beta  x_i = \lambda x_a, \label{eq:aneqb}
	\end{align}
	where, we note that, $p(a | a) = 1 - q$ and $p(b | a) = \frac{q}{D - 1}$. Note that, here we are considering $0 < q < 1 - \obd$, because if $q = 0$ then there is no noise in the channel and we are done. If $q = 1 - \obd$, then $\alpha = \beta = \obd$ and we will consider the channel as too noisy for Alice and Bob to execute the protocol and so they must abort. Let us consider index $k \ne k', k \ne b, k' \ne b$,  such that $x_k \ne x_k'$. Then from equation \eqref{eq:aneqb}, we get, when $a = k$ and $a = k'$ respectively:
	\begin{align*}
		\sqrt{\alpha \beta} x_b + \sum_{ \substack{i \ne k \\ i \ne b}}\beta  x_i &= \lambda x_k \\
		\sqrt{\alpha \beta} x_b + \sum_{ \substack{i \ne k' \\ i \ne b}}\beta  x_i &= \lambda x_{k'}.
	\end{align*}
	Subtracting these two equations from each other yields an equation for one of the eigenvalues of $X_b$:
	\begin{align*}
		\beta x_{k'} - \beta x_k &= \lambda (x_k - x_{k'}), \\
		\Rightarrow \lambda &= \frac{\beta (x_{k'} - x_k)}{x_k - x_{k'}} = -\beta,
	\end{align*}
	where the last equality is possible because of our assumption, $x_k \ne x_{k'}$. We claim that the geometric multiplicity of this eigenvalue is $D - 2$. To see this, notice that by choosing a suitable basis, we can write the operator $X_b - \lambda \mathbb{I}$ as:
	\begin{align*}
		\begin{bmatrix}
			-\lambda & \sqrt{\ab} & \sqrt{\ab}  & \ldots & \sqrt{\ab} \\
			\sqrt{\ab} & -\lambda & \beta & \ldots &  \beta \\
			\sqrt{\ab} & \beta & -\lambda & \ldots & \beta \\
			\vdots & \vdots & \vdots & \ddots  & \vdots \\
			\sqrt{\ab} & \beta & \beta & \ldots & -\lambda
		\end{bmatrix}.
	\end{align*}
	We use the eigenvalue $\lambda = -\beta$ and see that there are only two linearly independent rows in $X_b - (-\beta)\mathbb{I} = X_b + \beta\mathbb{I}$. So the rank of $X_b + \beta\mathbb{I}$ is at most two, in this case, it is exactly two. It follows that the geometric multiplicity of $\lambda = -\beta$ is $D - 2$. To find the remaining two eigenvalues of $X_b$, now we consider the second case when there does not exist $k\ne k'$ ($k \ne b, k' \ne b$) such that $x_k \ne x_{k'}$. In this case we have $x_k = x_{k'} = x$. Then from equation \eqref{eq:aeqb} we get:
	\begin{align}
		\sqrt{\alpha \beta} (D - 1) x = \lambda x_b \implies x_b = \frac{\sqrt{\alpha \beta} (D -1) x}{\lambda} \label{eq:smallxbvalue}.
	\end{align}
	Substituting the value of $x_b$ from the above equation \eqref{eq:smallxbvalue} in equation \eqref{eq:aneqb} we get:
	\begin{align}
		\sqrt{\alpha \beta} \left( \frac{\sqrt{\alpha \beta} (D -1) x}{\lambda} \right) + \beta (D - 2) x &= \lambda x \nonumber\\
		\Rightarrow \frac{\alpha \beta (D - 1)}{\lambda}+ \beta (D - 2) &= \lambda \nonumber \\
		\Rightarrow \lambda^2 - \lambda \beta(D - 2) - (D - 1) \alpha \beta &= 0 \label{eq:lambeigeqn}.
	\end{align}
	Solving the above equation \eqref{eq:lambeigeqn}, we get the remaining two eigenvalues of $X_b$:
	\begin{align*}
		\lambda^\pm = \half \left( (D - 2) \beta \pm \sqrt{\beta} \sqrt{(D - 1) (4 \alpha - 4 \beta) + \beta D^2 } \right).
	\end{align*}
	So the trace distance between $\rho_{B^ZE}$ and $\sigma_{B^ZE}$ as defined in equation \eqref{eq:tracedistupb} is then bounded by:
	\begin{align}
		\epsilon \le \frac{1}{2D} \sum_{b = 0}^{D - 1}  \left \| \sum_{a \ne a'} \kb{\eba}{\ebap} \right \|_1 = \frac{1}{2D} \sum_{b = 0}^{D - 1} \left((D - 2)\left|-\frac{q}{D - 1}\right| + \left|\lambda^+ \right| + \left|\lambda^- \right| \right). \label{eq:epsilonupb}
	\end{align}
	Note that, $\epsilon$ only depends on the channel noise parameter $q$ and the dimension of the Hilbert spaces, $D$, used by Alice and Bob. With this value of $\epsilon$ calculated, we can easily get $\Delta$ as a function of only noise parameter $q$ and dimension $D$.

        \paragraph{Bounding $\Delta$ in the General Case: } Much of our analysis above was channel independent.  The only time we needed to assume a depolarization channel was in the derivation of Equation \eqref{eq:epsilonupb}.  In fact, the only absolute requirement is the symmetric nature of the noise $p(b|a)$, namely that $p(b|b) = 1-q$ and $p(b|a) = p(b'|a)$ for all $b$ and $b'$ that are not equal to $a$.  For non-depolarization channels, however, Theorem \ref{thm:keyratelemma} and Lemma \ref{lem:evesattackoperator} both still apply.  The challenge is in deriving a general expression for $\epsilon$ needed for the continuity bound.  However, if it is found that the channel does not satisfy the symmetry assumption, alternative methods may be used.  In particular, since Lemma \eqref{lem:evesattackoperator} still applies, Equation \ref{eq:tr-dist-general} is still valid for any channel.  One may then numerically evaluate this expression, finding the eigenvalues numerically, for the given channel (since each trace distance in that sum uses the computational basis this is not a difficult computation and does not require any numerical optimization).

        In detail, the following algorithm may be used to upper-bound $\epsilon$, which is needed to upper-bound $\Delta$:
        \begin{enumerate}
          \item Set a variable $td = 0$
        \item For each $b = 0, 1, \cdots D-1$ Do:
          \begin{enumerate}
          \item Set $M$ to be a zero-matrix of size $D\times D$.
          \item For each $a, a' = 0, 1, \cdots, D-1$ with $a \ne a'$ Do:
            \begin{enumerate}
              \item $M = M + \sqrt{p(b|a)p(b|a')}\kb{a}{a'}$
            \end{enumerate}
          \item Compute the eigenvalues $\{\lambda_1, \cdots\}$ of $M$
            \item $td = td + \sum_i|\lambda_i|$
            \end{enumerate}
            \item Return the Trace Distance: $td/(2D)$.
            \end{enumerate}

            In general, we found that our algorithm runs significantly faster than the numerical approach used in prior work \cite{islam2018securing}.  Indeed, for the amplitude damping channel, evaluated below, the difference in running time on a standard desktop computer was order of magnitude faster for our approach (taking seconds as opposed to hours).
        
	\paragraph{Key Rate calculation:}
	To calculate the key rate using Equation \ref{eq:winterkeyrate}, we need expressions for two more terms, namely, $H(B^X)$ and the number of classical bits revealed during error correction, $leak_{EC} = H(B^Z | A^Z)$.  Both of these are observable parameters; for evaluation purposes, we will simulate their expected values under a depolarization channel.  First, $H(B^X)$ is found to be (assuming a depolarization channel):
	\begin{align}
		H(B^X) \le q \log(D - 1) + h(q). \label{eqn:hbzkeyrateterm}
	\end{align}
	Note that we are upper-bounding this quantity as Bob cannot distinguish all $\X$ basis states, only $\ket{x_0}$ from the others.
	
	The second term $H(B^Z | A^Z)$ is calculated below. Note that, in the following, we use $p_{a, b}$ as the probability of Alice sending $\ka$ and Bob measuring $\ket{b}$ in $\Z$ basis, and use $p_a$ as the probability of Alice sending $\ka$. We see that,
	\begin{align}
		H(B^Z | A^Z) &= H(B^Z A^Z) - H(A^Z)  \nonumber\\
		&= -\sum_{a, b} p_{a, b} \log p_{a, b}  + \sum_{a = 0}^{D - 1} p_a \log p_a \nonumber\\
		&= -\sum_{a, b} \frac{p(b|a)}{D} \log \frac{p(b|a)}{D} - \log(D) \nonumber\\
		&= - \obd \sum_{a, b} p(b|a) \log \frac{p(b|a)}{D} - \log(D) \nonumber \\
		&= - \obd \sum_{a = b} p(a|a) \log \frac{p(a|a)}{D} - \obd \sum_{a \ne b} p(b|a) \log \frac{p(b|a)}{D} - \log(D) \nonumber\\
		&= - \obd \sum_{a = b} (1 - q) \left( \log (1 - q) - \log(D) \right)  -  \obd \sum_{a \ne b} \frac{q}{D - 1} \left( \log \frac{q}{D - 1} - \log(D) \right)   - \log(D) \nonumber\\
		&= - (1 - q) \left( \log (1 - q) - \log(D) \right)  - q \left( \log \frac{q}{D - 1} - \log(D) \right)   - \log(D) \nonumber\\
		&= q \log(D - 1) + h(q).  \label{eq:hbzaz}
	\end{align}
	At this point, we finally get an analytical lower bound on the key rate in equation \eqref{eq:keyratethm-full} after considering equations \eqref{eq:wintercontinuityExtra}, \eqref{eqn:hbzkeyrateterm}, and \eqref{eq:hbzaz}. Now it reads:
	\begin{align}
		K &\ge \log(D) - 2q \log(D - 1) - 2h(q) -\epsilon \log \left( \left|B^Z \right|\right) - (1 + \epsilon) h\left(\frac{\epsilon}{1 + \epsilon}\right). \label{eq:keyratefinal}
	\end{align}
        A similar bound for $\modepartial$ is found using Equation \eqref{eq:keyratethm-partial}.
        
	Before we move on to the evaluation section of our work, below we present a slightly improved lower bound of the key rate $K$ for $\modefull$.

	\subsection{Improved Key Rate Bound}
	Note that the key rate in equation \eqref{eq:keyratefinal} applies to arbitrary dimensions. In a specific case when $D = 2$ and $0 \le q \le .1464$, we can obtain a slightly improved key rate which we show below. This new key rate is based on lemma \eqref{lem:ourlemma} (stated formally in the Appendix, and also proven there). This lemma is an improved continuity bound for $\Delta$ that applies in this specific case, as opposed to the bound by Winter \cite{winter2016tight} which, though higher, is more general (i.e., our bound produces a better result for the specific cases we consider here, however Winter's bound is more general and can be applied to any quantum states). The inspiration for this lemma comes from the work by Wilde, who in \cite{wilde2020optimal} asked whether the following could hold:
	\begin{align}
		|H(B^Z|E)_\rho - H(B^Z|E)_\sigma| \stackrel{?}{\le} \epsilon \log(D_B - 1) + h(\epsilon).
	\end{align}
	If proven true, this conjecture would really help in providing more optimistic bounds for key rates where such continuity bounds are used to prove security. While we could not prove the exact conjecture, we do show in lemma \eqref{lem:ourlemma} that a slightly weaker lower bound holds (though better than prior proven work for the particular type of states we consider here). Namely, for $q \le 14.64\%$ and $D=2$, it holds that:
	%
		\begin{align*}
			\left|H(B^Z|E)_\rho - H(B^Z|E)_\sigma \right|\le h(1 - q -\mq)
	\end{align*}
We present the proof of this in the appendix section. Based on lemma \eqref{lem:ourlemma}, we get a slightly improved lower bound for the key rate $K$.
Here we use the notation for key rate $K'$ to mean that we are using our new continuity bound from lemma \eqref{lem:ourlemma} and the dimension and the noise is restricted to a specific case mentioned above (namely, $D=2$ and depolarization channels). This new key rate bound reads:
\begin{align}
	K' &\ge 1 - 2 h(q) - h\left(1 - q - \sqrt{q(1 - q)}\right). \label{eq:keyratenewsimple}
\end{align}
Now we evaluate these two key rates in the next section.

\subsection{General Attacks}

While the above analyzed collective attacks, the security analysis may be promoted in a straightforward manner to general attacks.  First, observe that the protocol may be reduced to an equivalent entanglement-based version in the following way.  First, Eve prepares an arbitrary state $\ket{\psi}_{ABE}$ which we may write without loss of generality as:
\[
\ket{\psi}_{ABE} = \frac{1}{\sqrt{D}}\sum_{a,b}\ket{a,b,e_b^a}.
\]
Note that, if we assume Alice observes $\ket{a}$ with probability $1/D$ (which may be enforced), this then implies $\sum_{b=0}^{D-1}\braket{e_b^a|e_b^a} = 1$ (which we assumed in the previous section).  Now, on a Key Round, Alice and Bob will measure as normal.  On a Test Round, Alice can measure in the $X$ basis and reject the signal if she does not observe $\ket{x_0}$.  It is not difficult to see that, conditioning on Alice observing $\ket{x_0}$, this will disentangle her system with Bob and Eve's in the same manner as if she had initially sent $\ket{x_0}$.  Of course, such an entanglement protocol is very wasteful - many rounds are discarded - nonetheless conditioning on not rejecting a signal, the resulting quantum states are identical to the prepare and measure version analyzed earlier.  Thus security there implies security of the entanglement-based version and vice versa.  Finally, de Finetti style arguments \cite{konig2005finetti} may be used to promote security to general attacks, thus concluding the proof.

\section{Evaluation}

In the following, we evaluate our key rate bound in equation \eqref{eq:keyratefinal} in a depolarizing channel and compare it with prior work in \cite{islam2018securing}. Then we further evaluate our protocol in the amplitude damping channels.

\subsection{Depolarizing Channel}
The depolarization channel is modeled in Equation \ref{eq:depolar}. Because prior work methods are computationally intensive to replicate, we only present the comparison with our analysis for key rates for up to $D = 8$.   For the depolarization channel, our results and comparisons are presented in figure \eqref{fig:fig1oursvsnurulsd27} by evaluating our key rate from equation \eqref{eq:keyratefinal}. Then, in figure \eqref{fig:fig2depolfull}, we present our key rates of $\modefull$ for higher dimensions (though without a comparison).
\begin{figure}
	\centering
	\includegraphics[width=\linewidth]{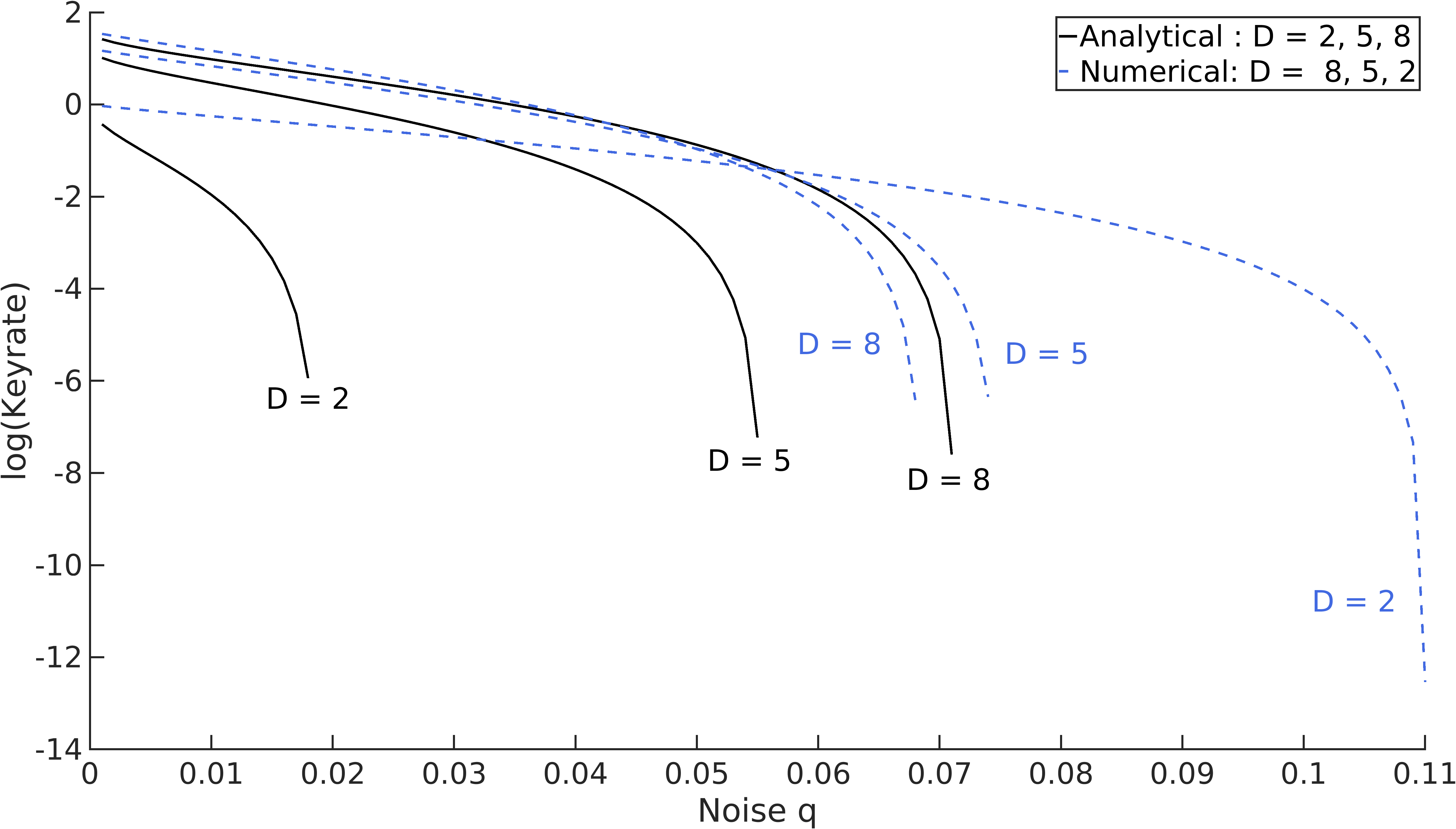}
	\caption{Comparison of our analysis and the numerical method from \cite{islam2018securing}. Notice the decreasing trend of key rates in the numerical approach(dotted lines), and the opposite in our case (solid lines). Here we evaluate our protocol in the $\modefull$ case, which is the analogous case of \cite{islam2018securing}.}
	\label{fig:fig1oursvsnurulsd27}
\end{figure}
\begin{figure}
	\centering
	\includegraphics[width=\linewidth]{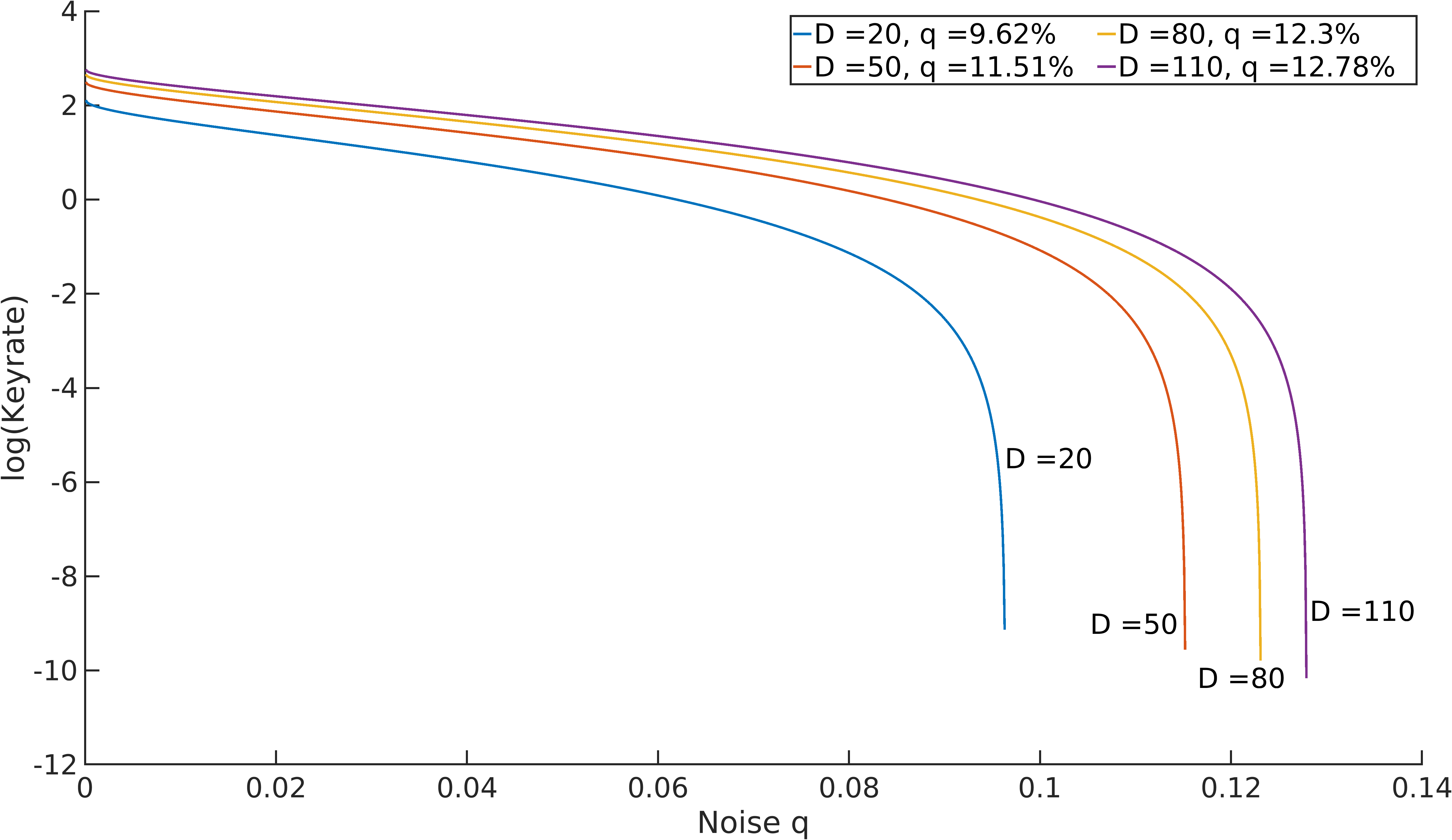}
	\caption{Noise tolerance of HD-3-State-BB84 in $\modefull$  from dimension 10 to 110 in our analysis for the depolarizing channel. }
	\label{fig:fig2depolfull}
\end{figure}

\begin{figure}
	\centering
	\includegraphics[width=\linewidth]{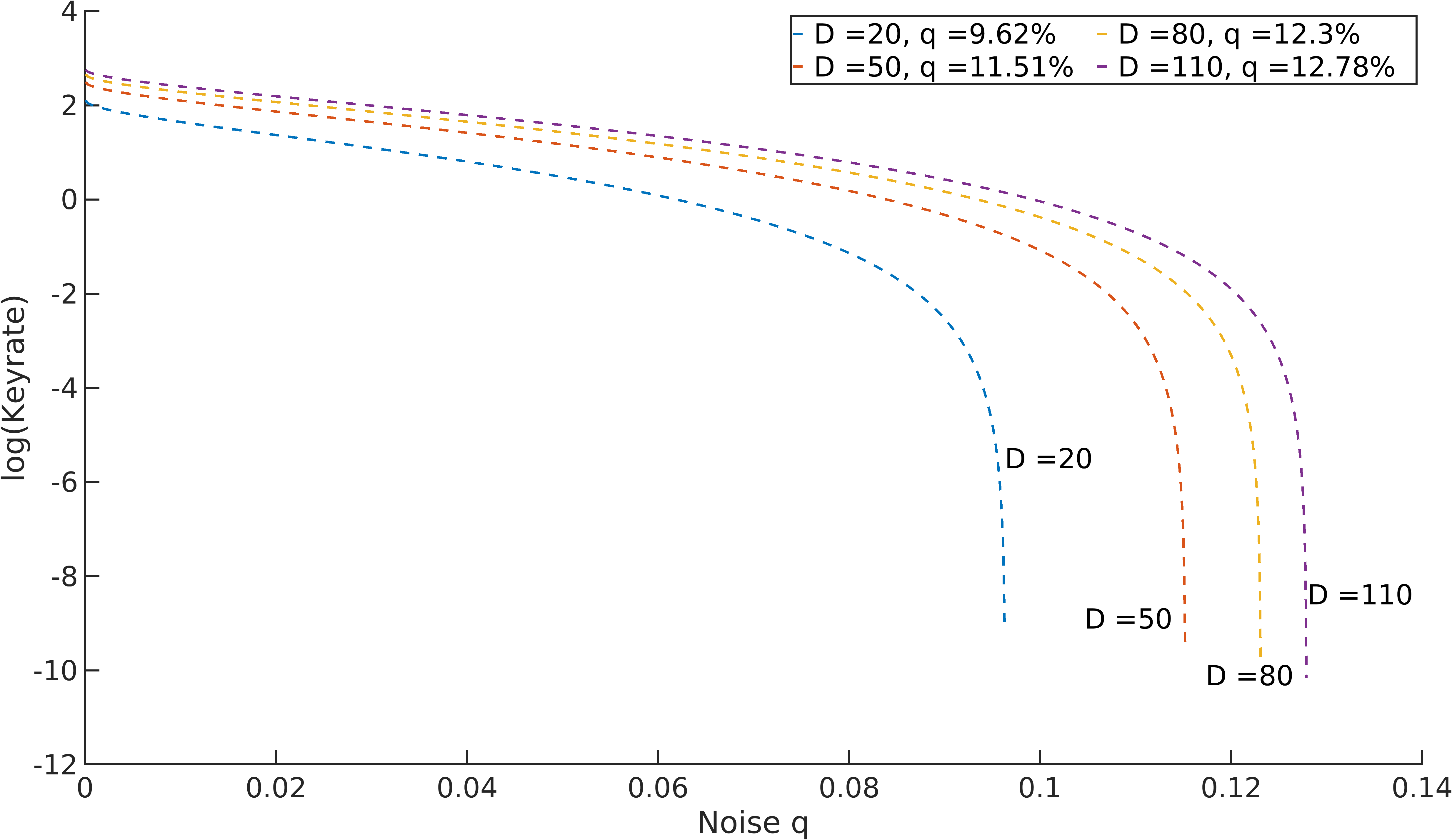}
	\caption{Noise tolerance of HD-3-State-BB84 in $\modepartial$  from dimension 10 to 110 in our analysis for the depolarizing channel. }
	\label{fig:fig2depolpartial}
\end{figure}

\begin{figure}
	\centering
	\includegraphics[width=\linewidth]{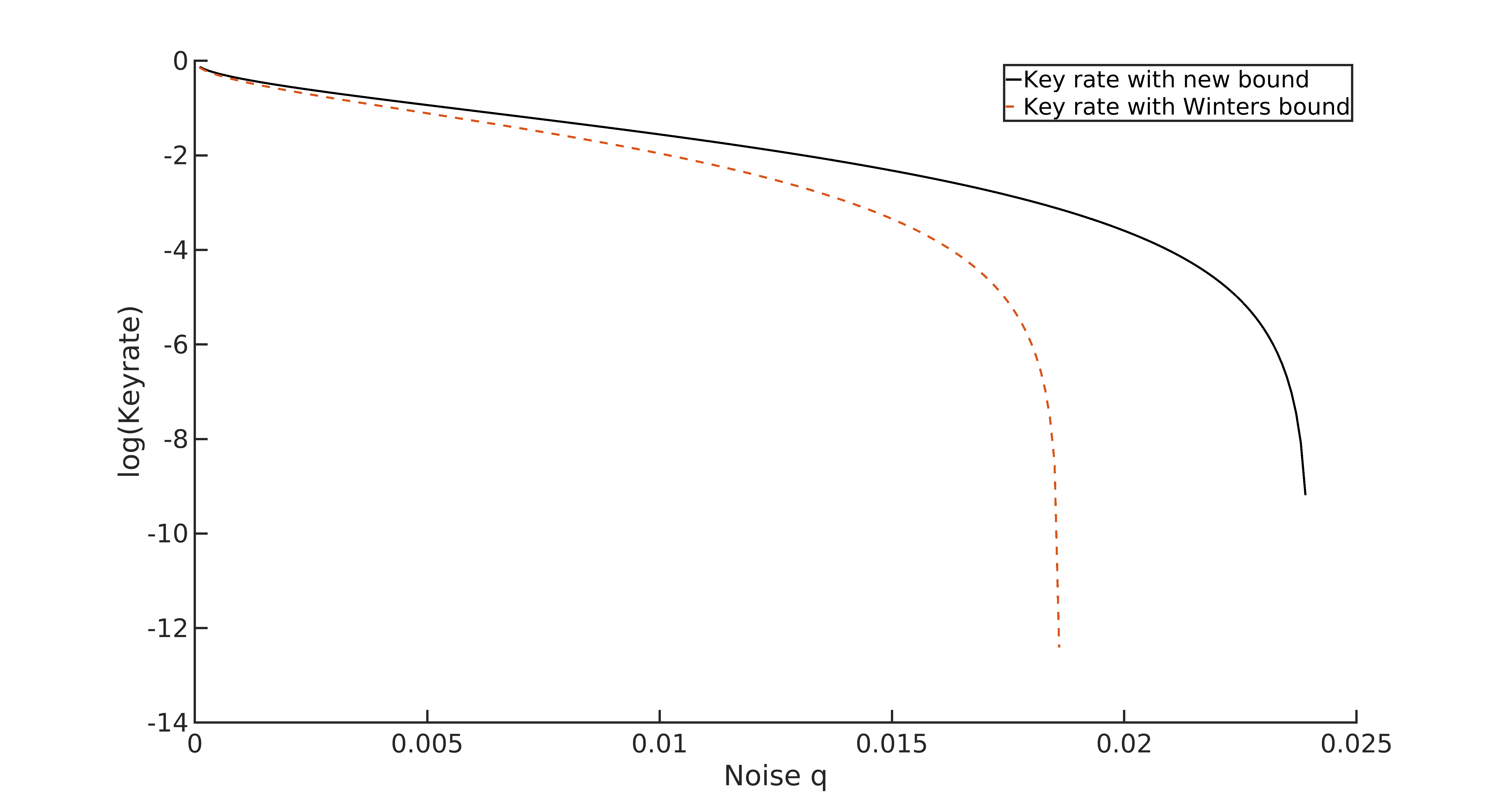}
	\caption{Comparison of the noise tolerances of HD-3-State-BB84 in dimension $2$ and in $\modefull$, considering our lemma \ref{lem:ourlemma} and Winter's bound\cite{winter2016tight} for conditional quantum entropies.}
	\label{fig:fig3oursVsWinterContinuity}
\end{figure}

In figure \eqref{fig:fig1oursvsnurulsd27}, it can be seen that the noise tolerance determined using the previous method in \cite{islam2018securing} actually goes down with increasing dimensions when only one monitoring basis is used. For example, it is $7.45\%$ for $D = 3$ and $7.28\%$ for $D = 6$. As indicated in their work, this may be attributed to the quick rise of the optimal phase error rate produced by the optimization algorithm with increasing dimensions. For example, the phase error rate is $27.68\%$ for $D = 3$ and $45.04\%$ for $D = 6$. This effect is more pronounced when one compares the result in the case of $D = 2$, where their analysis performs best in terms of noise tolerances compared to dimensions $D > 2$. In contrast to their result, in figure \eqref{fig:fig1oursvsnurulsd27}, we see that our analysis produces better noise tolerances as dimensions increase, as one may expect based on other HD-QKD protocols. As an example, in our analysis, the noise tolerance is $3.541\%$ for $D = 3$ and $6.169\%$ for $D = 6$. In dimension $D = 9$, we see that our noise tolerance is $7.482\%$ compared to $6.63\%$ in prior work \cite{islam2018securing}. Note that both our results and previous work results are lower bounds so there is no contradiction here. In fact, users can simply take the maximum of our work and prior work.

The decreasing trend in noise tolerance from prior work with increasing dimensions when only one monitoring state is used, and the opposite case in our analysis, leads us to suspect that in even higher dimensions $D > 9$, our analysis would continue to produce better noise tolerances. Indeed, this is shown in Figure \ref{fig:fig2depolfull}. Note that a compelling theoretical study would be finding out what happens when Alice sends any number of (1 to $D - 1$) monitoring basis states in the protocol. Although an advantage is shown numerically in prior work \cite{islam2018securing} for sending more monitoring bases, an analytical proof would be an enlightening future work.  Thus, while our result under-performs prior work for small dimensions, it greatly outperforms prior work once the dimension increases and also proves for the first time that HD states do, in fact, benefits this HD-three-state protocol (prior work could not confirm this as discussed).


Comparing these two modes, we notice that the performance of $\modepartial$ is equal to the performance of $\modepartial$ under the depolarizing channel.  Considering the proof of Equation \eqref{eq:eu-2} from \cite{krawec2020new} assumes the ``worst case'' distribution on Bob's measurement, this is not surprising. The case will be different for other channels; we will see this fact later when we evaluate the protocol under the amplitude damping channel. It is worth noting that in certain channels, like the depolarizing channel here, even with $\modepartial$, we can achieve same performance level as $\modefull$, where more resources are required. 


It can be said from these graphs that our analysis clearly demonstrates the advantage of using high-dimensional resources as well as the feasibility of obtaining competitive performance even when one uses much fewer quantum resources.
Finally, in figure \eqref{fig:fig3oursVsWinterContinuity}, we use our new continuity bound (Lemma \ref{lem:ourlemma}) instead of using Winter's continuity bound \cite{winter2016tight}, and evaluate key rate equation \eqref{eq:keyratenewsimple} to compare it with equation \eqref{eq:keyratefinal} by setting $D = 2$. We see that the noise tolerance increases from $1.85\%$ to $2.39\%$ demonstrating its utility. Notably, this improvement holds for both modes of our protocol, and we only show this for $\modefull$. We again emphasize that this new bound is limited to $D = 2, 0 \le q \le .1464$, and applicable in the depolarizing channel.  While our continuity bound is only applicable to dimension two currently, future work may be able to improve this.

\subsection{Amplitude Damping Channel}
We further evaluate our analysis in another widely used noise model, namely, the amplitude damping channel \cite{fonseca2019high}. This channel can be described by the following Kraus operators:

\begin{equation*}
	E_0 = \left(\begin{array}{ccccc}
		1 & 0 & 0 &\cdots & 0\\
		0 & \sqrt{{1-p}} & 0 &\cdots & 0\\
		0 & 0 & \sqrt{{1-p}} &\cdots & 0\\
		\vdots & \vdots & \vdots & \ddots & \vdots\\
		0 & 0 & 0 &\cdots & \sqrt{{1-p}}\end{array}\right) ,
\end{equation*}

\begin{equation*}
	E_1 = \left(\begin{array}{ccccc}
		0 & \sqrt{{p}} & 0 &\cdots & 0\\
		0 & 0 & 0 &\cdots & 0\\
		0 & 0 & 0 &\cdots & 0\\
		\vdots & \vdots & \vdots & \ddots & \vdots\\
		0 & 0 & 0 &\cdots & 0\end{array}\right),
	\cdots ,
	E_{D-1} = \left(\begin{array}{ccccc}
		0 & 0& 0 &\cdots & \sqrt{{p}} \\
		0 &0& 0 &\cdots &0\\
		0 & 0 & 0 &\cdots &0\\
		\vdots & \vdots & \vdots & \ddots & \vdots\\
		0 & 0 & 0 &\cdots & 0\end{array}\right).
\end{equation*}
from which the channel maps a density operator $\rho$ to $\sum_iE_i\rho E_i^\dagger$. For the $\modefull$ case of our protocol, where Bob performs a full $X$ basis measurement, we find the following key rates in higher dimensions:
\begin{figure}[H]
	\centering
	\includegraphics[width=\linewidth]{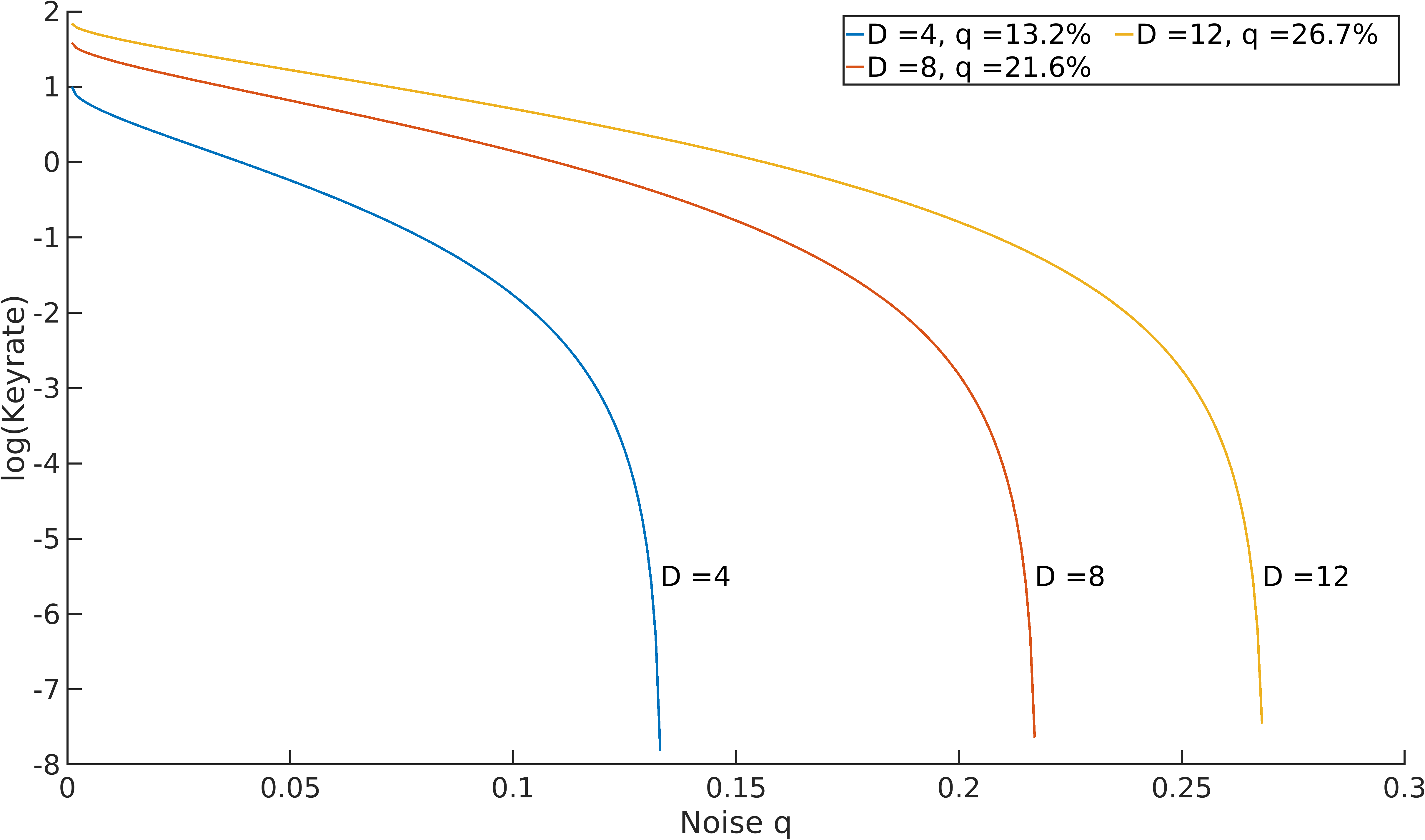}
	\caption{Key rates for HD-3-state-BB84 protocol in $\modefull$ when the amplitude damping channel is used. We consider dimensions $D = 4, 8, 12$ here. }
	\label{fig:fig2oursampdampprot1}
\end{figure}
We present the evaluation of our analysis for this channel in figure \eqref{fig:fig2oursampdampprot1}, and \eqref{fig:fig2ampdamppartial}. When we consider the $\modepartial$, shown in \eqref{fig:fig2ampdamppartial} for HD-3-State-BB84 protocol in this channel, where Bob needs much less resources to implement his measurement apparatus, we see that the noise tolerance is competitive for smaller dimensions compared to $\modefull$, shown in figure \eqref{fig:fig2oursampdampprot1}. For example, it is $12.3\%$ for $D = 4$ in $\modepartial$ compared to $13.2\%$ for $\modefull$. However, as the dimension increases, $\modefull$ outperforms $\modepartial$ more significantly. From these observations in this channel, we draw similar conclusions as in the depolarizing channel. That is, high-dimensional resources do offer better performance for HD-3-State-BB84 protocol. 

\begin{figure}[H]
	\centering
	\includegraphics[width=\linewidth]{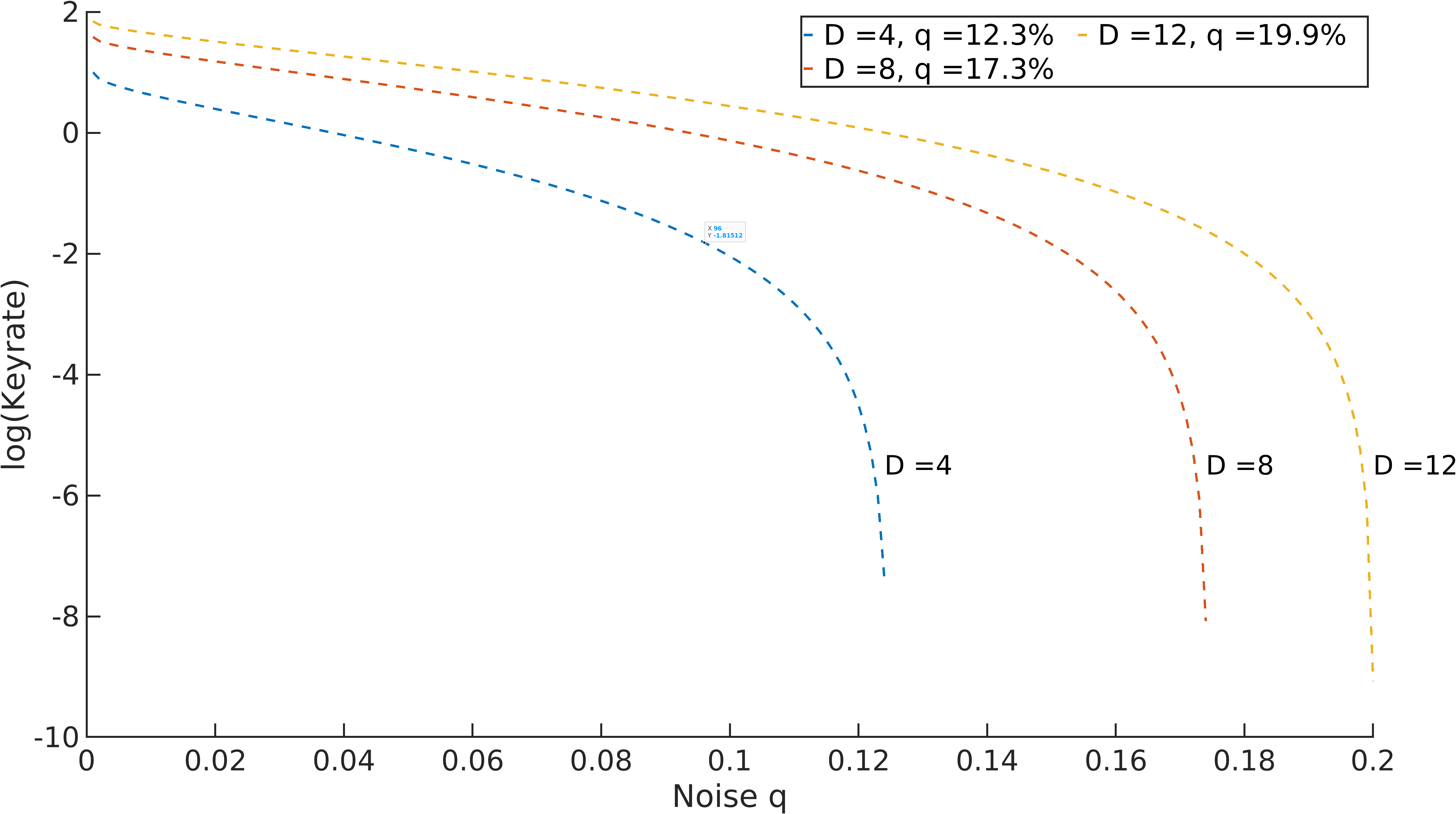}
	\caption{Key rates for HD-3-state-BB84-2 protocol, where the two-outcome POVM EU relation is used, in the amplitude damping channel for dimensions D = 4, 8, 12.}
	\label{fig:fig2ampdamppartial}
\end{figure}

\section{Closing Remarks}
In this work, we have presented a security proof of the HD-3-State-BB84 protocol and showed that, for high enough dimension, our work provides higher noise tolerances in the case when only one monitoring basis is used as compared to prior state of the art work for this protocol. The key advantage of our analysis is that it avoids computational limitations and provides an analytical expression for key rates in arbitrary dimensions. Our method also clearly demonstrates that indeed, using higher dimensional systems leads to an increment in noise tolerances even when Alice is limited in her ability to send the monitoring basis states. This conclusion could not be made in \cite{islam2018securing} for this three state protocol.  An interesting future work could be coming up with an analytical proof that increasing the number of monitoring basis states from Alice's end does improve the noise tolerance.

Another interesting line of investigation would be to analyze more practical channels, including lossy channels.  We assumed no loss in our channel model.  However, it is known that when loss is allowed, certain attacks such as unambiguous state discrimination attacks \cite{tamaki2003security,ko2018advanced,duvsek2000unambiguous,brandt2005unambiguous} become available to an adversary.  Since only $D+1$ states are transmitted from two bases (as opposed to all $2D$ states), this protocol may be particularly susceptible.  We leave this analysis as interesting, and important, future work.  However, we feel our proof methods may be suitable to tackle lossy conditions, with suitable extensions.

We have also presented a new continuity bound for conditional quantum entropies in this work, for certain types of cq-states. We have shown that, although limited in scope at this point, this new bound provides a noticeable advantage in noise tolerance in our analytical method, compared to Winter's continuity bound \cite{winter2016tight} (though, we stress, only for a certain type of state as our lemma is more restricted than Winter's bound, the latter of which can be applied to any state) and provides further support to Wilde's conjecture \cite{wilde2020optimal}. Perhaps more importantly, the technique we have used to prove our bound may find use in proving the conjecture itself, or some weaker version of it in arbitrary dimensions, as currently there are no known techniques to prove it. Our new continuity bound only applies to dimension two and in a limited noise scenario at this point and trying to extend it to higher dimensions is another interesting line of work for the future.

\bibliography{hdbb84-fewer-states}
\bibliographystyle{unsrt}

\section*{Appendix}
\subsection{High-dimensional entropic uncertainty relation}
Here we present the entropic uncertainty relation from \cite{krawec2020new} that we have used in our analysis of $\modepartial$ of protocol \eqref{prot:hd-3-State-BB84} where a two element POVM $\Lambda = \{\Lambda_0, \Lambda_1\}$ is used with $\Lambda_0 = \kb{x_0}{x_0}$ and $\Lambda_1 = I-\Lambda_0$. To state this relation, we need to describe an experiment \expmt. First, we consider a quantum state of the form $\rho_{TAE} = \sum_t p_t \kb{t}{t} \otimes \rho_{AE}^t$. Here the sum is over all subsets of $t$ of size $m$ and the Hilbert space on register $A$ has the form $\mathcal{H}_d^{\otimes(m + n)}$ with known $m, n, d$. In this experiment, we first measure the subset register $T$ to obtain an outcome $t$ and a post-measurement state $\rho_{AE}^t$. In this post-measurement state, indexed by $t$, we conduct our second measurement, where we measure the $d$-dimensional subspaces of register $A$ in POVM $\Lambda$ to obtain another outcome $q \in \{0, 1\}^m$. Tracing out the $T$ register and the measured portion of register $A$, leaves us with $n$ qudits of $d$-dimensional subspaces of $A$ unmeasured. We call it the post-measurement state $\rho(t, q)$. So at the end of the experiment $\expmt$, we obtain $(t, q, \rho(t, q)) \leftarrow \expmt(\rho_{TAE}, \Lambda)$. 
\newline\newline
\noindent The relation then bounds the min-entropy of the unmeasured portion of $A$ conditioned on register $E$, if $A$ were to be measured in an alternate basis $\Z$, based on the experiment's outcome $q$ and the basis choice $\Z$. If the Hamming weight of $q$ is small, we can argue that the conditional entropy of register $A$ given access to register $E$ should be high.

\begin{theorem}
	(From \cite{krawec2020new}). Let us consider an arbitrary quantum state $\rho_{AE}$ residing on Hilbert space $\mathcal{H_A} \otimes \mathcal{H_E}$, where the subspace of the subsystem $A$ has the structure $\mathcal{H_A} \cong \mathcal{H_D^{\otimes(\text{n + m})}}$, such that $D \ge 2$, and $m < n$. Let us also consider parameters $\epsilon > 0$, $0 < \beta < 1/2$ and two orthonormal bases of $\mathcal{H}_D$, namely, $Z = \{\ket{z_i}\}_{i = 0}^{D - 1}$ and $X = \{\ket{x_i}\}_{i = 0}^{D - 1}$, and a two outcome POVM $\Lambda$ with elements $\{\Lambda_0 = \kb{x_0}{x_0}, \Lambda_1 = \mathbb{I}   - \kb{x_0}{x_0} \}$. Now, we consider subsets $t \subset \{1,2,\cdots, n+m\}$ of size $m$ and let $T = {n + m \choose m}$. If we perform experiment {\normalfont $\expmt$}  on $\rho_{AE}$ and obtain:
	\begin{align*}
		(t, q, \rho(t, q)) \leftarrow  {\normalfont \expmt} \left(\frac{1}{T}\sum_t \kb{t}{t} \otimes \rho_{AE}, \Lambda \right),
	\end{align*}
	Then it holds that:	\label{thm:hd-ent-unc-krawec}
	\begin{align*}
		Pr\left(H_\infty^{\epsilon'}(Z | E)_{\rho(t, q)} + \frac{n \bar{H}_D(w(q) + \delta)}{\log_D 2} \ge n \gamma\right) \ge 1 - \epsilon''. 
	\end{align*}
	This probability is over the measurement outcomes $q$ and all choices of the subset $t$. Furthermore:
	\begin{align*}
		\gamma = -\log_2 max_{a, b \in \mathcal{A}_d} |\braket{z_a | x_b}|^2,
	\end{align*}
	and $\epsilon' = 4 \epsilon + 2 \epsilon^\beta$, $\epsilon'' = 2 \epsilon^{1 - 2\beta}$, and finally:
	\begin{align}
		\delta = \sqrt{\frac{(m + n + 2) \ln(2\epsilon^2)}{m(m + n)}}. 
	\end{align}
      \end{theorem}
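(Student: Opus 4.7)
The plan is to follow the sampling-based approach for smooth entropic uncertainty developed in the Bouman--Fehr and Tomamichel--Renner lines of work, specializing it to the two-outcome $\X$-basis POVM $\Lambda$. First, I would observe that $\Lambda$ is a coarse-graining of the full $\X$-basis projective measurement: measuring $\Lambda$ on a qudit is equivalent to measuring in $\X$ and then recording a $0$ if the outcome is $\ket{x_0}$ and a $1$ otherwise. This lets me introduce a virtual full-$\X$-basis string $\vec{x}\in\{0,\dots,D-1\}^{n+m}$ whose binary coarse-graining, restricted to the random subset $t$, yields exactly the observed $q\in\{0,1\}^m$.

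Next, I would invoke a quantum sampling framework to translate a classical Serfling-type bound into a statement about quantum states. Classically, if $m$ positions are sampled uniformly without replacement from $n+m$, then with failure probability at most $\epsilon^2$, the relative Hamming weight of the unseen complement is within $\delta$ of $\hw(q)$, with $\delta$ matching the expression stated in the theorem. Lifting this to the quantum setting costs an $\epsilon$-approximation in purified distance: the true post-measurement state $\rho(t,q)$ is $\epsilon$-close to an ideal state whose virtual $\X$-basis outcomes on the complement of $t$ are supported only on strings of relative weight at most $\hw(q)+\delta$.

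Third, I would apply the smooth-entropy uncertainty relation $\Hmin^{\epsilon'}(\Z|E) + \Hmax^{\epsilon'}(\X_{\bar t}) \ge n\gamma$ to this ideal state, where $\X_{\bar t}$ denotes the virtual full-$\X$-basis outcomes on the complement. The max-entropy on the ideal state is upper-bounded by the logarithm of the number of admissible $\X$-strings of length $n$ with binary weight at most $\nu := \hw(q)+\delta$; a direct counting argument yields $\sum_{k\le\nu n}\binom{n}{k}(D-1)^k \le 2^{n\Hextd_D(\nu)/\log_D 2}$, which is exactly the $D$-ary entropy term in the statement (the $\log_D 2$ factor is simply the base conversion between $D$-ary and binary entropy). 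Combining this with the uncertainty relation and substituting back gives the desired inequality.

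The main obstacle is the bookkeeping of smoothing parameters. One must carefully combine (i) the Bouman--Fehr lifting, which inflates the smoothing radius by $\epsilon$ in purified distance and hence $2\epsilon$ in trace distance, (ii) a Markov-type argument that converts a small expected trace-distance error $\epsilon$ into a high-probability statement at the cost of raising the failure probability to $\epsilon^{1-2\beta}$ and the smoothing radius to $\epsilon^\beta$, and (iii) the replacement of max-entropy by the counting-based $\Hextd_D$ bound. Lining up these contributions to yield exactly $\epsilon'=4\epsilon+2\epsilon^\beta$ and $\epsilon''=2\epsilon^{1-2\beta}$ is routine but delicate, and is the primary place where care is required; everything else reduces to invoking known sampling and uncertainty results as black boxes.
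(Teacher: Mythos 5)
You should first be aware that the paper does not prove this theorem at all: it is imported verbatim from \cite{krawec2020new} (the paper only proves the easy Corollary that follows from it via the asymptotic equipartition property), so there is no in-paper proof to compare your attempt against. Judged on its own, your sketch is a plausible reconstruction of the general strategy of the cited work: a classical Serfling-type sampling bound lifted to quantum states via the Bouman--Fehr framework, giving closeness to an ideal state whose virtual $\X$-basis outcomes on the unmeasured positions lie in a Hamming ball of relative radius $\hw(q)+\delta$, followed by a counting bound $\log_2\sum_{k\le\nu n}\binom{n}{k}(D-1)^k\le n\Hextd_D(\nu)/\log_D2$. That much matches the source.

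Where you diverge, and where the gap lies, is the uncertainty step. The proof in \cite{krawec2020new} does not invoke the Tomamichel--Renner relation $\Hmin^{\epsilon'}(\Z|E)+\Hmax^{\epsilon'}(\X_{\bar t})\ge n\gamma$; instead it bounds the smooth min-entropy of the ideal state directly, using the fact that the ideal state is a superposition of at most $|J|$ $\X$-basis states of bounded relative weight together with a lemma that the conditional min-entropy of such a superposition is at least that of the corresponding mixture minus $\log_2|J|$; the $n\gamma$ term then comes from the overlap $\max_{a,b}|\braket{z_a|x_b}|^2$ of each individual $\X$-basis string with the $\Z$ basis. The specific constants $\epsilon'=4\epsilon+2\epsilon^\beta$ and $\epsilon''=2\epsilon^{1-2\beta}$ are artifacts of that particular chain (purified-to-trace distance conversion plus the Markov step you mention), and it is not at all clear that your $\Hmin+\Hmax$ route reproduces them --- yet you defer exactly this accounting as ``routine but delicate.'' As a standalone proof the proposal is therefore incomplete at its most delicate point. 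One further minor issue: as written, $\delta=\sqrt{(m+n+2)\ln(2\epsilon^2)/(m(m+n))}$ has a negative argument under the square root for small $\epsilon$; the Serfling bound your step two invokes would produce $\ln(2/\epsilon^2)$, so your derivation should surface and correct that transcription error rather than inherit it.
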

      The above of course implies the following corollary which we use in our analysis:
      \begin{corollary}\label{cor:eu-new}
        Let $\rho_{AE}$ be a quantum state where the $A$ register consists of a single $D$-dimensional qudit.  Let $A_Z$ be the random variable resulting from measuring the $A$ register in the $\Z$ basis, and let $Q_X$ be the probability of observing outcome $\Lambda_1$ should the $A$ register be measured using POVM $\Lambda$ (described in the text above).  Then it holds that:
        \begin{equation}
          H(A_Z|E) + \frac{H_D(Q_X)}{\log_D2} \ge \gamma.
        \end{equation}
      \end{corollary}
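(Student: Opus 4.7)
The plan is to reduce to Theorem \ref{thm:hd-ent-unc-krawec} via tensorization followed by the quantum asymptotic equipartition property of \cite{tomamichel2009fully}. Starting from the single-qudit state $\rho_{AE}$, I would consider $N$ independent copies $\rho_{AE}^{\otimes N}$, which naturally carries the structure $\mathcal{H}_D^{\otimes N}\otimes\mathcal{H}_E^{\otimes N}$ required by the theorem. I would partition into $m$ ``test'' qudits and $n = N - m$ unmeasured qudits, with $m,n \to \infty$ but $m/n \to 0$, so that $n/N \to 1$ and the sampling parameter $\delta$ from the theorem tends to $0$.

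First, I would apply Theorem \ref{thm:hd-ent-unc-krawec} directly to $\rho_{AE}^{\otimes N}$ using the two-outcome POVM $\Lambda = \{\kb{x_0}{x_0}, I - \kb{x_0}{x_0}\}$. Because the global state is i.i.d., conditioned on any test outcome $(t,q)$ the residual state on the unmeasured qudits is essentially $\rho_{AE}^{\otimes n}$ together with some classical side information recording $q$. The relative Hamming weight $w(q)$ is then the empirical frequency of outcome $\Lambda_1$ across $m$ i.i.d. Bernoulli trials with success probability $Q_X$, so by a standard Chernoff/Hoeffding concentration argument, $w(q) + \delta \to Q_X$ in probability. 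Continuity of $\bar{H}_D$ therefore yields $\bar{H}_D(w(q) + \delta) \to H_D(Q_X)$ on the high-probability event.

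Next, I would invoke the AEP for smooth min-entropy: for i.i.d. states,
\begin{equation*}
\lim_{\epsilon \to 0}\,\liminf_{n \to \infty}\, \frac{1}{n}\, H_\infty^{\epsilon}(Z_1 \cdots Z_n \mid E_1 \cdots E_n)_{\rho^{\otimes n}} \;=\; H(A_Z \mid E)_\rho,
\end{equation*}
where $Z_i$ denotes the outcome of measuring the $i$-th $A$-register in the $\Z$ basis. Dividing the conclusion of Theorem \ref{thm:hd-ent-unc-krawec} through by $n$ and then passing to the limit — first $N \to \infty$ with $m$ growing more slowly, then the smoothing parameter $\epsilon \to 0$ — gives $H(A_Z \mid E)_\rho + H_D(Q_X)/\log_D 2 \ge \gamma$, which is exactly the claim.

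The main obstacle is the delicate juggling of parameters: $\epsilon, \epsilon', \epsilon''$ and $\delta$ must be tuned together so that the sampling tail events and the smoothing error both vanish, while the bound $\frac{1}{n} H_\infty^{\epsilon'} \to H(A_Z \mid E)_\rho$ is preserved. A secondary subtlety is that the residual state on the $n$ unmeasured qudits carries classical side information about $q$; one must check that restricting to the typical set where $w(q)$ is close to $Q_X$ has probability approaching $1$ and does not spoil convergence under the AEP, which follows from a routine convexity/conditioning argument on smooth conditional min-entropy.
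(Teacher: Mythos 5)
Your proposal is correct and follows essentially the same route as the paper, whose proof simply states that the corollary ``follows immediately from the quantum asymptotic equipartition property of min entropy \cite{tomamichel2009fully} and the law of large numbers'' — you have merely spelled out the i.i.d. tensorization, the concentration of $w(q)$ to $Q_X$, and the limit $\frac{1}{n}H_\infty^{\epsilon'}\to H(A_Z|E)$ that the authors leave implicit. No gap; the parameter juggling you flag is a genuine but routine bookkeeping matter that the paper also elides.
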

      \begin{proof}
        This follows immediately from the quantum asymptotic equipartition property of min entropy \cite{tomamichel2009fully} and the law of large numbers.
      \end{proof}
\subsection{New continuity bound for conditional quantum entropies}
Continuity is a basic property of the von Neumann entropy of a quantum system $\rho$. Fannes \cite{fannes1973continuity} discovered an inequality that tells us how much the entropy would change if $\rho$ is changed by a small amount to create $\rho'$, based on the trace distance of the old state $\rho$ and the new state $\rho'$. This was later improved by Audenaert \cite{audenaert2007sharp} to the following, for density operators $\rho$ and $\sigma$ with a trace distance of $\epsilon$:
\begin{align*}
	|H(\rho) - H(\sigma)| \le \epsilon \log(D) + h(\epsilon).
\end{align*}
Along the same line, in \cite{winter2016tight}, Winter has presented a number of wonderful continuity bounds for conditional quantum entropies, including bounds for cq-states. Then, in \cite{wilde2020optimal}, Wilde proved the following proposition for such cq-states(using our notations).

\begin{proposition}
	\label{prop:wilde}
	\it{The following inequality holds for $\epsilon \in (0, 1 - 1/D_E]:$}
	\begin{align}
		|H(E|B^Z)_\rho - H(E|B^Z)_\sigma| \le \epsilon \log(D_E - 1) + h(\epsilon), \label{eq:wildeproved}
	\end{align}
	where $D_E$ is the dimension of system E, the states $\rho_{B^ZE}$ and $\sigma_{B^ZE}$ are the following finite-dimensional classical-quantum states:
	\begin{align*}
		\rho_{B^ZE} = \sum_{b} r(b) \kb{b}{b}_B \otimes \rho_E^b, \\
		\sigma_{B^ZE} = \sum_{b} s(b) \kb{b}{b}_B \otimes \sigma_E^b,
	\end{align*}
	r(x) and s(x) are probability distributions, $\{\rho_E^b\}_b$ and $\{\sigma_E^b\}_b$ are sets of states, the conditional entropy is defined in terms of the von Neumann entropy as $H(E | B^Z)_\rho := \sum_b r(b) H(\rho_E^b)$ and
	\begin{align*}
		\epsilon \ge \half \|\rho_{B^ZE} - \sigma_{B^ZE}\|_1.
	\end{align*}
	Also, there exists a pair of classical-quantum states saturating the bound for every value of $D_E$ and $\epsilon \in (0, 1 - 1/D_E]$.
\end{proposition}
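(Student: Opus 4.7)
The plan is to extend Audenaert's sharpened Fannes inequality for von Neumann entropy, $|S(\rho) - S(\sigma)| \le \epsilon \log(d-1) + h(\epsilon)$ when $\tfrac{1}{2}\|\rho - \sigma\|_1 \le \epsilon \le 1 - 1/d$, to the classical-quantum conditional entropy $H(E|B^Z) = \sum_b r(b) S(\rho_E^b)$. The essential point is that because $B^Z$ is classical, the conditional entropy is a weighted average of von Neumann entropies of $D_E$-dimensional quantum states on $E$, so one should expect a bound involving only $D_E$ and not the joint dimension $D_B D_E$ that a naive application of Audenaert's bound to $\rho_{B^Z E}$ as a single quantum state would produce.

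The first step would be to exploit the cq-structure: because both $\rho_{B^Z E}$ and $\sigma_{B^Z E}$ are block-diagonal in the $\ket{b}\bra{b}$ basis, so is their difference, and therefore the positive and negative parts of $\rho - \sigma$ decompose block-wise. Writing $\rho - \sigma = \epsilon(\omega - \omega')$ with $\omega, \omega'$ cq-states of orthogonal support, and introducing the common intermediate state $\tau = \frac{1}{1+\epsilon}(\rho + \epsilon \omega') = \frac{1}{1+\epsilon}(\sigma + \epsilon \omega)$, concavity of the functional $H(E|B^Z)$ yields two lower bounds on $H(E|B^Z)_\tau$, and combining them produces an inequality relating $|H(E|B^Z)_\rho - H(E|B^Z)_\sigma|$ to conditional entropies of $\omega$ and $\omega'$. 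In parallel, the block-wise orthogonality of the $E$-supports of $\omega$ and $\omega'$ lets one treat the mixing as an exact disjoint mixture, contributing a binary-entropy correction. Bounding each conditional entropy of $\omega, \omega'$ by $\log D_E$ at this stage would give a first, Winter-style bound.

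The main obstacle will then be sharpening this to the Audenaert form, i.e.\ replacing the generic $\log D_E$ by $\log(D_E - 1)$ and the crude $(1+\epsilon)h(\epsilon/(1+\epsilon))$ by the tighter $h(\epsilon)$. Audenaert accomplished the analogous improvement in the unconditional case through a careful spectral optimization in which the worst-case $\sigma$ has support on a single additional dimension orthogonal to $\rho$, concentrating the perturbation on a $(d-1)$-dimensional subspace. Adapting this to the cq-setting essentially reduces to verifying that the extremal configuration can be taken to live in a single classical block, so that the optimization on $E$ is identical to Audenaert's. Finally, to establish the saturation claim, I would exhibit the explicit extremal pair $\rho_{B^Z E} = \ket{0}\bra{0} \ot \ket{\psi}\bra{\psi}$ and $\sigma_{B^Z E} = (1-\epsilon)\,\ket{0}\bra{0} \ot \ket{\psi}\bra{\psi} + \tfrac{\epsilon}{D_E - 1}\,\ket{0}\bra{0} \ot (I_E - \ket{\psi}\bra{\psi})$, for which a direct calculation shows that the trace distance equals exactly $\epsilon$ and that $|H(E|B^Z)_\rho - H(E|B^Z)_\sigma| = \epsilon \log(D_E - 1) + h(\epsilon)$, confirming tightness for every admissible $\epsilon$ and $D_E$.
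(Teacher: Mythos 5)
First, a point of reference: the paper does not prove this proposition at all --- it is quoted verbatim as Wilde's result from \cite{wilde2020optimal}, stated only to motivate the authors' own (different) continuity bound in Lemma \ref{lem:ourlemma}. So there is no in-paper proof to compare against; your proposal has to stand on its own, and as written it has a genuine gap. The first half is fine but only gets you the \emph{weaker} bound: the block-diagonal decomposition $\rho-\sigma=\epsilon(\omega-\omega')$, the intermediate state $\tau=\frac{1}{1+\epsilon}(\rho+\epsilon\omega')$, and the concavity/almost-convexity pair of inequalities is exactly Winter's argument for classical conditioning, and it yields $\epsilon\log D_E+(1+\epsilon)h(\epsilon/(1+\epsilon))$ --- precisely the bound in Equation \eqref{eq:wintercontinuityExtra} that the proposition is supposed to improve upon. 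The orthogonality of the supports of $\omega$ and $\omega'$ does not, by itself, upgrade the correction term $(1+\epsilon)h(\epsilon/(1+\epsilon))$ to $h(\epsilon)$ or $\log D_E$ to $\log(D_E-1)$; Audenaert's sharpening is not obtained by perturbing this mixing argument but by an essentially different route.

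That different route is exactly what your sketch omits. The sentence ``adapting this to the cq-setting essentially reduces to verifying that the extremal configuration can be taken to live in a single classical block'' is the entire content of the theorem, and it is not obviously true as stated: the classical marginals $r$ and $s$ may differ, so the trace-distance budget can be spent on the classical register (where $r(b)\ne s(b)$) rather than inside any single block, and one must rule out that such configurations achieve a larger entropy gap. A workable version of your plan would need two concrete ingredients that are currently missing: (i) a reduction to the commuting case --- e.g., replacing each $\rho_E^b,\sigma_E^b$ by diagonal states with the same spectra, using the fact that aligning eigenbases can only decrease $\half\|r(b)\rho_E^b-s(b)\sigma_E^b\|_1$ while leaving $H(\rho_E^b)$ and $H(\sigma_E^b)$ unchanged; and (ii) the tight \emph{classical} continuity bound for conditional Shannon entropy (equivocation) under total variation distance, $\epsilon\log(|B|-1)+h(\epsilon)$, which is a nontrivial combinatorial optimization in its own right (due to Alhejji and Smith) and is the actual engine behind Wilde's result. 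Your saturation example at the end is correct --- it is the Audenaert extremal pair embedded in a single classical block, and a direct computation gives trace distance exactly $\epsilon$ and entropy gap exactly $\epsilon\log(D_E-1)+h(\epsilon)$ --- but the upper-bound direction of the proposition is not established by the argument you describe.
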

Note that in proposition (\ref{prop:wilde}) mentioned above, sub-system $B^Z$ is classical, and sub-system $E$ is quantum. Then he asks the following question. If we flip the conditioning states in equation \eqref{eq:wildeproved}, i.e., from a quantum system conditioning on a classical system, to classical system conditioning on a quantum one, does the following hold?
\begin{align}
	|H(B^Z|E)_\rho - H(B^Z|E)_\sigma| \stackrel{?}{\le} \epsilon \log(D_B - 1) + h(\epsilon). \label{eq:wildeconjectured}
\end{align}
It is noteworthy that, Winter's bound in inequality \eqref{eq:wintercontinuityExtra}, in this same scenario gives a much weaker upper bound for the inequality above in \eqref{eq:wildeconjectured}. In the following sections, we show that a slightly weaker form of this conjecture holds in a limited scenario ($D = 2, 0 \le q \le .1464$).

\subsection{Continuity Bound in $D = 2$ and $0 \le q \le .1464$:}
The cq-states that we are considering here, are mentioned in equations  \eqref{eq:rhobefirst} and  \eqref{eq:sigbefirst}. In the rest of the discussions in this section, we are always considering $D = 2, 0 \le q \le .1464$, and the orthogonality assumption that $\bk{\eba}{\ebap} = 0$ if $a \ne a'$ (note we make no assumption on $\braket{e_b^a|e_{b'}^{a'}}$ for $b' \ne b$). We make a small comment about the higher dimensional case at the end of the section. Here we present the proof of lemma \eqref{lem:ourlemma} which claims that in our limited case, $\left|H(B^Z|E)_\rho - H(B^Z|E)_\sigma \right|\le h(1 - q - \mq)$.
\begin{lemma}
	\label{lem:ourlemma}
	For dimension of the classical sub-system $B$ denoted as $D_B = 2$, and with the assumption that $\bk{\eba}{\ebap} = 0$ if $a \ne a'$, in a depolarizing channel with noise parameter $0 \le q \le .1464$, the following holds for states $\rho_{B^ZE}$ and $\sigma_{B^ZE}$ (as defined in Equations \eqref{eq:rhobefirst} and \eqref{eq:sigbefirst} respectively):
	\begin{align*}
		\left|H(B^Z|E)_\rho - H(B^Z|E)_\sigma \right|\le h(1 - q -\mq).
	\end{align*}
\end{lemma}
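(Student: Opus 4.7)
The plan is to use the $D=2$ structure together with the orthogonality and depolarization assumptions to collapse the joint entropies $H(B^ZE)_\rho$ and $H(B^ZE)_\sigma$ into simple closed forms, reduce the difference to a function of the spectrum of $\rho_E$ alone, and then bound this spectrum by a constrained optimization. First, the orthogonality $\braket{e_b^a|e_b^{a'}}=0$ (for $a\ne a'$) makes the four vectors $\ket{b}\otimes \ket{e_b^a}/\sqrt{p(b|a)}$ pairwise orthogonal---either in the $B$ register when $b\ne b'$, or in $E$ when $b=b'$---which diagonalizes $\rho_{B^ZE}$ with eigenvalues $\{p(b|a)/2\}$ and, under depolarization, yields $H(B^ZE)_\rho = 1 + h(q)$. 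For $\sigma_{B^ZE}$ I would rewrite it as $\tfrac{1}{2} \sum_b \kb{b}{b}\otimes\kb{u_b}{u_b}$ with $\ket{u_b}:=\sum_a\ket{e_b^a}$; the orthogonality then gives $\|u_b\|^2 = \sum_a p(b|a) = 1$, so $H(B^ZE)_\sigma = 1$.

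Next, for the marginal $\sigma_E = \tfrac{1}{2}(\kb{u_0}{u_0}+\kb{u_1}{u_1})$, the eigenvalues are $(1\pm|\braket{u_0|u_1}|)/2$, and the depolarization statistic $p(x_0)=1-q$ forces $\mathrm{Re}\,\braket{u_0|u_1}=1-2q$. After choosing Eve's global phases (which do not change any entropy) so that this inner product is real and nonnegative, one obtains $H(E)_\sigma = h(q)$, and the quantity of interest reduces to $|H(B^Z|E)_\rho - H(B^Z|E)_\sigma| = |2h(q) - H(E)_\rho|$. For $\rho_E$, I would parametrize its spectrum via the $4\times 4$ Gram matrix of the unnormalized $\ket{e_b^a}$ vectors: the diagonal is fixed to $\{1-q,q,q,1-q\}$, two off-diagonal entries are forced to zero by the orthogonality assumption, and the remaining entries are constrained by Eve's unitarity, by positive semidefiniteness, and by the $X$-basis equation $\mathrm{Re}\,\braket{u_0|u_1}=1-2q$. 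One then optimizes $H(E)_\rho$ over this feasible set, the expectation being that the extremum is attained at a configuration whose spectrum involves the value $1 - q - \sqrt{q(1-q)}$, so that the bound $h(1-q-\sqrt{q(1-q)})$ emerges directly.

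The main obstacle is this last optimization step: verifying that the extremum produces exactly the claimed closed form, and confirming that the range $q\le (2-\sqrt{2})/4\approx 0.1464$ is precisely the feasibility regime for the extremal configuration. A natural route is to argue, using convexity of the entropy difference in Eve's parameters, that the worst case is attained at a ``symmetric'' Gram matrix invariant under the $a$-index swap $0\leftrightarrow 1$, thereby reducing the optimization to a single parameter. The reduced problem can be solved in closed form, and the final inequality follows from a direct monotonicity check of the binary entropy on the specified range of $q$.
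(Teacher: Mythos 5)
Your setup agrees with the paper's: you correctly obtain $H(B^ZE)_\rho = 1 + h(q)$ and, via the pure states $\ket{u_b}=\sum_a\ket{e_b^a}$ with $\|u_b\|=1$, $H(B^ZE)_\sigma = 1$, so the problem reduces to bounding $|h(q) - H(E)_\rho + H(E)_\sigma|$. The divergence, and the gap, is in how you handle $H(E)_\sigma$. Your claim that $H(E)_\sigma = h(q)$ rests on ``choosing Eve's global phases'' to make $\braket{u_0|u_1}$ real and nonnegative, but the relevant phase freedom is $\ket{e_b^a}\mapsto e^{i\theta_b}\ket{e_b^a}$ with $\theta_b$ depending on $b$; this is not an entropy-preserving relabeling of a fixed attack but a \emph{different} attack (it amounts to composing $U$ with a $Z$-diagonal phase gate), and it changes the $X$-basis statistic that gave you $\mathrm{Re}\braket{u_0|u_1}=1-2q$ in the first place. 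What you actually have is $H(E)_\sigma = h\bigl((1-|\braket{u_0|u_1}|)/2\bigr) \le h(q)$, possibly strictly. Since $H(E)_\sigma$ enters with a plus sign, an upper bound on it is useless for the branch where $H(E)_\rho > h(q) + H(E)_\sigma$ (there you need a \emph{lower} bound on $H(E)_\sigma$ to control the absolute value), so the reduction to $|2h(q)-H(E)_\rho|$ is not valid in both directions.

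More importantly, decoupling $H(E)_\sigma$ from $H(E)_\rho$ discards the information that makes the lemma true. If one could independently have $H(E)_\rho = 1+h(q)$ (its unconstrained maximum, attained at spectrum $(\tfrac{1-q}{2},\tfrac{1-q}{2},\tfrac{q}{2},\tfrac{q}{2})$) and $H(E)_\sigma = h(q)$, the difference would be $1-h(q)$, which for small $q$ far exceeds $h(1-q-\sqrt{q(1-q)})$ (e.g.\ at $q=0.01$: $1-h(q)\approx 0.92$ versus $h(0.89)\approx 0.50$). The paper's proof works precisely because it couples the two spectra: writing $\sigma_E=\rho_E+\Delta_E$ with $\|\Delta_E\|_{op}\le\sqrt{q(1-q)}$ (Lemma \ref{lem:del2upper}), Weyl's perturbation theorem forces the binary-entropy argument of $\sigma_E$ to lie within $\sqrt{q(1-q)}$ of the top eigenvalue $\gamma_1$ of $\rho_E$, while Horn's theorem constrains the admissible spectra of $\rho_E$; the bound then comes from a joint two-variable case analysis (the ``maxmin''/``minmax'' cases) over $(\vec{\gamma},v)$. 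Your Gram-matrix constraint $\mathrm{Re}\braket{u_0|u_1}=1-2q$ does implicitly couple the two quantities, so your program is not hopeless, but the entire content of the lemma lives in the constrained optimization you defer to the end, and the convexity/symmetry reduction you invoke there is unsupported: the objective is an absolute difference of concave functions of the Gram entries and is neither convex nor concave, so there is no a priori reason the extremum sits at the swap-symmetric point. As written, the proposal does not constitute a proof.
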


\begin{proof}
	
	First, let's see that, by the definition of conditional entropies:
	\begin{align}
		&\;\;\;\;\left|H\left(B^Z|E\right)_\rho - H\left(B^Z|E\right)_\sigma \right| \nonumber \\
		&= \left|H\left(B^ZE\right)_\rho - H\left(E\right)_\rho - H\left(B^Z
		E\right)_\sigma + H\left(E\right)_\sigma \right| \label{eq:prflempart1}
	\end{align}
	Then, we remember that the joint entropies of the cq-states $\rho_{B^ZE}$ and $\sigma_{B^ZE}$  are \cite{wilde2011classical}:
	\begin{align}
		H(\rho_{B^ZE}) &= H\left(B^Z\right)_\rho + \onebyd \sum_{b} H\left(\sum_a \kb{\eba}{\eba}\right)_\rho, \label{eq:prflempart2}\\
		H(\sigma_{B^ZE}) &= H\left(B^Z\right)_\sigma + \onebyd \sum_{b} H\left(\sum_{a, a'} \kb{\eba}{\ebap}\right)_\sigma. \label{eq:prflempart3}
	\end{align}
	Here, note that $H(B^Z)_\rho$ and $H(B^Z)_\sigma$ cancel out in equation \eqref{eq:prflempart1}, as they are equal to one. This is not difficult to show as the probability of Bob's measuring any particular $b$, $b = 0$ for example (remembering that $D = 2$), in the depolarizing channel is: $p(0) = \sum_a p(0 | a) p(a) = \half(1 - q + q) = \half$, the same is true for $p(1)$, hence $H(B^Z)_\rho = H(B^Z)_\sigma = 1$. Furthermore, in equation \eqref{eq:prflempart2}, note that, for $b = 0$ and $b = 1$:
	\begin{align*}
		H\left(\sum_a \kb{e_0^a}{e_0^a} \right)_\rho = H\left(\sum_a \kb{e_1^a}{e_1^a} \right)_\rho = h(q),
	\end{align*}
	which can be seen from lemma \eqref{lem:consthq}.
	Additionally, for the state that arises for each particular $b$ in $\sigma_{B^ZE}$ in equation \eqref{eq:prflempart3}, denoted as $\tau \coloneqq \sum_{a, a'} \kb{\eba}{\ebap}_\sigma$, we see that its entropy is zero as explained in lemma \eqref{lem:rhosigmabinary}. So the term $H(B^ZE)_\sigma$ disappears in equation \eqref{eq:prflempart1}. Now, to calculate $H(E)_\rho$ and $H(E)_\sigma$ in equation \eqref{eq:prflempart1}, we need Eve's quantum states  $\rho_E$ and $\sigma_E$, which is found by tracing out Bob from $\rho_{B^ZE}$ and $\sigma_{B^ZE}$ respectively, i.e.,
	\begin{align}
		\rho_E = \Tr_B(\rho_{B^ZE}) &= \onebyd \sum_{b, a} \kb{\eba}{\eba},  \label{eq:defrhoe}\\
		\sigma_E = \Tr_B(\sigma_{B^ZE}) &= \onebyd \sum_{b, a, a'} \kb{\eba}{\ebap} \nonumber\\
		&= \onebyd \sum_{b, a} \kb{\eba}{\eba}  + \onebyd \sum_{b, a \ne a'} \kb{\eba}{\ebap}_\sigma \nonumber\\
		&=\rho_E + \Delta_E \label{eq:defsigmae},
	\end{align}
	where, $\Delta_E := \onebyd \sum_{b, a \ne a'} \kb{\eba}{\ebap}_\sigma$. The fact that we can write $\sigma_E$ as the sum of $\rho_E$ and some small ``noise'' $\Delta_E$ is convenient for our proof, as we shall see later. With the simplifications in equations \eqref{eq:prflempart2} and \eqref{eq:prflempart3}, and from the definitions of $\rho_E$ and $\sigma_E$, we can say that equation \eqref{eq:prflempart1} implies the following:
	\begin{align}
		&\;\;\;\; \left|H\left(B^Z|E\right)_\rho - H\left(B^Z|E\right)_\sigma \right| \notag \\
		&= \left | h(q) - H(E)_\rho + H(E)_{\rho + \Delta} \right| \label{eq:abscondent}.
	\end{align}
	Note that in the equation above, in the last term, we are not using $\sigma$ as the subscript, rather, using $\rho + \Delta $ to emphasize the fact that $\sigma_E = \rho_E + \Delta_E$.
	Let's define a set of vectors, which contains all possible vectors of eigenvalues that $\rho_E$ can have, which are prescribed by Horn's theorem \cite{horn1962eigenvalues}, as $\Gamma$. We describe Horn's theorem in some detail later. For notational convenience, we switch notations slightly and represent the entropy of a matrix by its eigenvalues rather than the matrix itself. So from now on, the entropy of the matrix $\rho_E$, while denoted by $H(E)_\rho$ so far, shall be denoted by $H(\vgm)$, with the understanding that $\vgm \in \Gamma$ is the vector of eigenvalues of $\rho_E$.
	
	Now we restate lemma \eqref{lem:ourlemma} in terms of these new notations. For each individual vector $\vgm$ prescribed by Horn's theorem, our claim in lemma \eqref{lem:ourlemma} then, is the following:
	\begin{align}
		\underset{\vgm \in \Gamma, v \in V}{\max} \big|h(q) -H(\vgm) + h(v) \big| \le h(1 - q - \sqrt{q (1 - q)}), \label{eq:mainclaim}
	\end{align}
	where $V \coloneqq \{v \in [0, 1] \mid  \gamma_1 - \sqrt{q (1 - q)} \le v \le \gamma_1 + \sqrt{q (1 - q)} \}$, and $\gamma_{1}$ is the first element of the vector of eigenvalues $\vgm$. The reason for defining this set $v$ in this range is explained in lemma \eqref{lem:rhosigmabinary}.
	We can make the following observations about the behavior of this maximization problem in equation \eqref{eq:mainclaim}. For any \textit{fixed} pair $\vgm \in \Gamma$ and $v \in V$, there are two possible cases that can occur. Either $H(\vgm) \ge h(q) + h(v)$ or $H(\vgm) \le h(q) + h(v)$, which we call the ``maxmin'' case and the ``minmax'' case, respectively. In the maxmin case, it is clear that, if we consider some other vector $\vgmmax$ such that $H\br{\vgmmax} \ge H(\vgm)$ and consider some $\vmin$ such that $h(\vmin) \le h(v)$, then we can get a larger absolute value in the left hand side of equation \eqref{eq:mainclaim}, compared to the one that we get for $\vgm$ and $v$. Similarly, in the second case where $H(\vgm) < h(q) + h(v)$, if we consider some $\vgmmin$ and $\vmax$ such that, $H(\vgmmin) \le H(\vgm)$ and $h(\vmax) \ge h(v)$, we can get an upper bound for the left hand side of equation \eqref{eq:mainclaim}. Now, our job is to create such vectors $\vgmmax, \vgmmin, \vmax, \vmin$, that provide upper and lower bounds for $H(\vgm)$ and $h(v)$ and show that these upper and lower bounds still produces a smaller absolute value than the right-hand side of inequality \eqref{eq:mainclaim}, thus proving lemma \eqref{lem:ourlemma}.
	As we shall see in the next sections, given any pair $\vgm$ and $v$, we can find $\vgmmax, \vgmmin, \vmax, \vmin$ by considering $\gmo$ of $\vgm$ only.
	Note that, we no longer have to deal with the absolute value function. Now proving lemma \eqref{lem:ourlemma} reduces to proving that, given a noise parameter $0 \le q \le .1464$, considering $\omqbtwo \le \gmo \le 1 - q$, for any pair of $\vgm$ and $v$ as defined above, the following holds:
	\begin{align*}
		\big|h(q) -H(\vgm) + h(v) \big| &\le  \max \left(H\br{\vgmmax} - h(q)- h(\vmin), h(\vmax) + h(q) - H\br{\vgmmin} \right) \\
		& \le h(1 - q - \mq).
	\end{align*}
	The above-mentioned range for $\gmo$ can be seen in lemma \eqref{lem:getlowupbound}. We now analyze both maxmin and minmax cases and show that they are upper bounded by $h(1 - q - \mq)$. We also note that, in the following analysis we are considering $q > 0$, as for $q = 0$, lemma \eqref{lem:ourlemma} holds trivially.  
	\subsection{Analysis of the Maxmin case:}
	First let's note that if $H(\vgm) \ge h(q) + h(v)$ for an arbitrary pair $\vgm$ and $v$, then clearly, it is the case that $|h(q) - H(\vgm) + h(v)| \le H\br{\vgmmax} - h(q)- h(\vmin)$. We can divide the right-hand side of this inequality into three different parts and argue that each of these parts is upper bounded by $h(1 - q - \mq)$. Because we can see that for a pair $\vgm$ and $v$, $\big|h(q) -H(\vgm) + h(v) \big|$ is upper bounded by a piecewise function:
	\begin{align}
		&\big|h(q) -H(\vgm) + h(v) \big| \nonumber \\
		& \le
		\begin{cases}
			H\br{\vgmmax} - h(q) - h\left(\gamma_1 - \mq \right), & \text{for } \frac{1 - q}{2} \le \gmo < \half \\
			H\br{\vgmmax} - h(q) - h \br{\gamma_1 + \mq}, & \text{for }  \half \le \gmo < 1 - \mq\\
			H\br{\vgmmax} - h(q), & \text{for } \gmo \ge 1 - \mq.
		\end{cases}\label{eq:maxminpiece}
	\end{align}
	Let's describe how we can see this. The first non-constant quantity in the equation, $H(\vgm)$, is always maximized for the vector $\br{\frac{1 - q}{2}, \frac{1 - q}{2}, \qbtwo, \qbtwo}$, which can be seen in lemma \eqref{lem:maxentvector}. This is true for any $\vgm$ and not just based on the particular $\gmo$ we are considering in this case. So, instead of going over all values of $v$ depending on $\gmo$ in the range $\gmo - \mq \le v \le \gmo + \mq$ and finding the minimum for $h(v)$, we can directly pick $h(\vmin)$ based on $\gamma_1$ by the following observation. We notice that, in the range $\frac{1 - q}{2} \le \gamma_1 < \half$, $h(\gmo - \mq) \le h(\gmo + \mq)$, because $\gmo - \mq$ is farther from $\half$, compared to $\gmo+ \mq$, as $\gmo < \half$. Consequently, $h(\gmo - \mq)$ would be the minimum for all $\gmo$ in this range. Furthermore, as $\gmo$ increases from the starting value of $\frac{1 - q}{2}$, $h(\gmo - \mq)$ increases with it in our considered range of $q$, because they are on the left side of the bell-shaped binary entropy function.
	Now, if we want to maximize the first case in Equation \eqref{eq:maxminpiece}, we may like to fix $h(v)$ to be fixed to the smallest possible value in this range of $\gmo$, which occurs for the first value of $\gmo$, $h(\frac{1 - q}{2} - \mq)$ and this is what we subtract here.
	So for the first case when $\frac{1 - q}{2} \le \gmo < \half$, we see that:
	\begin{align*}
		H\br{\vgmmax} - h(q) - h(\vmin) \le H\left(\frac{1 - q}{2}, \frac{1 - q}{2}, \qbtwo, \qbtwo\right) - h(q) - h\left(\frac{1 - q}{2} - \mq\right).
	\end{align*}
	We can easily show that, for $\half \le \gmo < 1 - \mq$,
	\begin{align*}
		H\br{\vgmmax} - h(q) - h(\vmin) \le H \br{1 - \mq, \mq - q, \qbtwo, \qbtwo} - h(q).
	\end{align*}
	The third case of inequality \eqref{eq:maxminpiece} happens for $1 - \mq \le \gmo \le 1 - q$. Again we see that in this range, because $ \gmo + \mq \ge 1$, $h(\vmin) = 0$.
	So in this case too, because $H(\vgm)$ is decreasing, we set it to the first value of $\gmo$ to get the maximum for $H(\cdot)$ in this range of $\gmo$, and see that:
	\begin{align*}
		H\br{\vgmmax} - h(q) - h(\vmin) &\le H\left(1 - \mq, \mq - q, \qbtwo, \qbtwo \right) - h(q).
	\end{align*}
	Finally, aggregating all three cases, for arbitrary $\vgm$, $v$ and when $H(\vgm) \ge h(q) + h(v)$:
	\begin{align*}
		&\big|h(q) -H(\vgm) + h(v) \big| \\
		& \le
		\begin{cases}
			H\left(\frac{1 - q}{2}, \frac{1 - q}{2}, \qbtwo, \qbtwo\right) - h(q) - h\left(\frac{1 - q}{2} - \mq\right), & \text{for } \frac{1 - q}{2} \le \gmo < \half \\
			H\br{1 - \mq, \mq - q, \qbtwo, \qbtwo} - h(q), & \text{for }  \half \le \gmo \le 1 - q.
		\end{cases}
	\end{align*}
	We can also easily show that the following two inequalities hold for $0 \le q \le .1464$:
	\small
	\begin{align}
		H\br{1 - \mq, \mq - q, \qbtwo, \qbtwo} - h(q) &\ge H\left(\frac{1 - q}{2}, \frac{1 - q}{2}, \qbtwo, \qbtwo\right) - h(q) - h\left(\frac{1 - q}{2} - \mq\right), \nonumber\\
		h\br{1 - q- \mq} &\ge H\br{1 - \mq, \mq - q, \qbtwo, \qbtwo} - h(q), \label{eq:maxminclmgrtsec}
	\end{align}
	\normalsize
	thus establishing inequality \eqref{eq:mainclaim} for the maxmin case. Now, we focus on the minmax case.

	\subsection{Analysis of the Minmax case:}
	Here, we consider the case when it is true that $H(\vgm) \le h(v) + h(q)$, given a pair $\vgm, v$. Then, for this pair:
	\begin{align*}
		h(q) + h(v) - H(\vgm) \le h(q) + h(\vmax) - H\br{\vgmmin}.  
	\end{align*}
	We can pick $v$ in the range $\gmo - \mq \le v \le \gmo + \mq$ that maximizes $h(v)$, which we denote as $h(\vmax)$ for different ranges of $\gamma_1$, based on only one selection criteria. If $\gamma_1 - \mq \ge .5$, $h(\vmax) = h(\gamma_1 - \mq)$, otherwise $.5$ is included in the range of possible $v$ values and we get $h(\vmax) = 1$. Note that one possible case, when $\gamma_1 + \mq < .5$, occurs only for $q \ge .8536$ and is outside the scope of our considered range of $q$. Now we say that for this minmax case, the following holds:
	\begin{align*}
		&\big|h(q) -H(\vgm) + h(v) \big| \\
		& \le
		\begin{cases}
			1 + h(q) - H\br{\vgmmin}, & \text{for } \frac{1 - q}{2} \le \gamma_1 < \half + \mq \\
			h(\gamma_1 - \mq) + h(q) - H\br{\vgmmin}, & \text{for } \half + \mq \le \gmo \le 1 - q
		\end{cases}
	\end{align*}
	In the first case above, we consider the range $\frac{1 - q}{2} \le \gamma_1 < \half + \mq$. As we have argued earlier, in this range of $\gmo$ and $\gmo - \mq \le v \le \gmo + \mq$, $\half$ is included in the range of possible values of $v$, so $h(\vmax) = 1$. Because $1$ and $h(q)$ is a constant given $q$, the maximum of the first case would occur when we subtract the least $H\br{\vgmmin}$, which we evaluate for the $\vgmmin$ that we construct with our strategy described in lemma \eqref{lem:argminmax}. We find that $\vgmmin = \left( \max(\gmo, 1 - \gmo), \min(\gmo, 1 - \gmo) , 0, 0 \right)$, and based on the nature of the binary entropy function, for increasing value of $\gamma_1$, $H\br{\vgmmin}$ would decrease.  
	The second case happens for the minimum of $H\br{\vgmmin}$ is of course when $\gamma_1$ can reach its largest possible value $1 - q$, and we get the vector $\vgmmin = (1 - q, q, 0, 0)$, which can be seen in lemma \eqref{lem:argminmax}. In which case, $h(1 - q)$ equals $h(q)$ and they cancel out from the equation, leaving us with only $h(\gmo - \mq)$. But in this range of $\gmo$ for all $q$ in our consideration, $h(\gmo - \mq)$ is an increasing function and attains its maximum for $\gmo = 1 - q$.
	So we say the following:
	\begin{align*}
		&\big|h(q) -H(\vgm) + h(v) \big| \\
		& \le
		\begin{cases}
			h(1 - q- \mq), & \text{for } \gamma_1 - \mq \ge \half \\
			1 + h(q) - h\left(\half + \mq \right), & \text{for } \frac{1 - q}{2} \le \gamma_1 < \half + \mq.  
		\end{cases}
	\end{align*}
	At this point, we have to show that, our claim in inequality \eqref{eq:mainclaim} is valid for all $q$ in our considered range. More precisely, we show the following inequality is valid:
	\begin{align}
		h(1 - q- \mq) \ge 1 + h(q) - h\br{\half + \mq}. \label{eq:minmaxsecgrtone}
	\end{align}
	Inequality \eqref{eq:minmaxsecgrtone} is true for $0 <  q \le .146447$ which can be seen by graphs or using numerical methods.  So, considering inequalities \eqref{eq:maxminclmgrtsec} and \eqref{eq:minmaxsecgrtone}, we can say that inequality \eqref{eq:mainclaim}, holds for $0 \le q \le .1464$ in all possible cases, which proves lemma \eqref{lem:ourlemma}.

\end{proof}  

\begin{lemma}
	\label{lem:consthq}
	For depolarizing channel parameter $q$, with the assumption that $\bk{\eba}{\ebap} = 0 $ if $a \ne a'$, the following holds for $D_B = 2$:
	\begin{align*}
		H\left(\sum_a \kb{e_0^a}{e_0^a}\right) = H\left(\sum_a \kb{e_1^a}{e_1^a}\right) = h(q).
	\end{align*}
\end{lemma}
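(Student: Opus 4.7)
The plan is to use the fact that the assumed orthogonality $\braket{e_b^a | e_b^{a'}} = 0$ for $a \ne a'$ directly yields a spectral decomposition of the matrix $\sum_a \kb{e_b^a}{e_b^a}$, from which the eigenvalues are read off as the norms $\braket{e_b^a|e_b^a} = p(b\mid a)$. The depolarization channel then specifies these norms, making the entropy computation immediate.

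First I would fix $b \in \{0,1\}$ and expand the sum explicitly using $D_B = 2$:
\begin{align*}
M_b \;:=\; \sum_{a=0}^{1}\kb{e_b^a}{e_b^a} \;=\; \kb{e_b^0}{e_b^0} + \kb{e_b^1}{e_b^1}.
\end{align*}
Next, I would normalize the (nonzero) vectors by setting $\ket{\tilde e_b^a} = \ket{e_b^a}/\sqrt{p(b\mid a)}$, so that $\{\ket{\tilde e_b^0}, \ket{\tilde e_b^1}\}$ is an orthonormal set by the orthogonality hypothesis combined with the definition $p(b\mid a) = \braket{e_b^a|e_b^a}$. In this basis one obtains the spectral decomposition
\begin{align*}
M_b \;=\; p(b\mid 0)\,\kb{\tilde e_b^0}{\tilde e_b^0} + p(b\mid 1)\,\kb{\tilde e_b^1}{\tilde e_b^1},
\end{align*}
so the nonzero eigenvalues of $M_b$ are exactly $\{p(b\mid 0), p(b\mid 1)\}$.

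Specializing to the depolarizing channel with $D=2$ gives $p(b\mid b) = 1-q$ and $p(b\mid a) = q$ for $a \ne b$. Hence for both $b=0$ and $b=1$, the eigenvalues of $M_b$ are $\{1-q, q\}$, and the von Neumann entropy is
\begin{align*}
H(M_b) \;=\; -(1-q)\log(1-q) - q\log q \;=\; h(q),
\end{align*}
which establishes the claim. The only subtlety is the degenerate case when $q = 0$ (or $q=1$), where one of the vectors vanishes and is not a genuine eigenvector; but in that case the rank of $M_b$ drops to one with eigenvalue $1$, and the entropy is $0 = h(0)$, consistent with the formula. Beyond that edge case, the argument is essentially linear-algebraic with no real obstacle: the orthogonality assumption inherited from Lemma \ref{lem:evesattackoperator} is precisely what makes the spectrum transparent.
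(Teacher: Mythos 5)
Your proposal is correct and follows essentially the same route as the paper: the orthogonality assumption $\braket{e_b^a|e_b^{a'}}=0$ for $a\ne a'$ turns $\sum_a \kb{e_b^a}{e_b^a}$ into a spectral decomposition whose nonzero eigenvalues are the norms $p(b\mid a)$, which the depolarizing channel fixes as $\{1-q,\,q\}$, giving entropy $h(q)$. Your explicit normalization step and the remark on the degenerate $q=0$ case are slightly more careful than the paper's terse argument, but the underlying idea is identical.
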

\begin{proof}
	In a depolarizing channel scenario, the eigenvalues for a specific $b$, $b = 0$ for example, the matrix $\rho_E^0 \coloneqq \sum_a \kb{e^a_{0}}{e^a_{0}}$ would have the same eigenvalues as another matrix for $b = 1$, $\rho_E^1 \coloneqq \sum_a \kb{e^a_{1}}{e^a_{1}}$. Because for a given $b$, our assumption that $\bk{\eba}{\ebap} = 0 $ if $a \ne a'$, gives rise to a basis. Moreover, the depolarizing channel scenario lets us determine the exact eigenvalues of $\rho_{b}$, for both $b = 0$ and $b = 1$. These are, in a non-increasing order, $\{1 - q, q, 0, 0\}$ for a channel parameter $q$, because $\bk{\ezz}{\ezz} = 1 - q$ and $\bk{\ezo}{\ezo} = q$. Similarly, it is the case that, $\bk{\ezo}{\ezo} = q$ and $\bk{\eoo}{\eoo} = 1 - q$. Note that the dimension of each of the states in Eve's memory $\ket{e_i^j}$ for dimension $D_B = 2$, can be taken to be four \cite{qkd-survey-pirandola}.
	
\end{proof}

\begin{lemma}
	\label{lem:argmaxmin}
	Given the noise parameter $q$, and considering the set $\Gamma$ as the set of vectors of all possible eigenvalues for $\rho_E$ and for an arbitrary $\vgm$:
	\begin{align*}
		\vgmmax &= \left( \gmo, 1 - \gmo - q, \qbtwo, \qbtwo \right), \\
		\vgmmin &= \left( \max(\gmo, 1 - \gmo), \min(\gmo, 1 - \gmo) , 0, 0 \right).
	\end{align*}
\end{lemma}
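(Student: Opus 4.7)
The plan is to apply Horn's theorem to describe the set $\Gamma$ and then invoke Schur-concavity of Shannon entropy to identify the two extremizers.

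Setup: write the ancilla vectors $\{\ket{\eba}\}_{a,b\in\{0,1\}}$ as the columns of a matrix $V$, so that $\rho_E = \half VV^\dagger$ and the Gram matrix $G = V^\dagger V$ share their nonzero spectra up to the factor $\half$. Under a depolarizing channel with the stated orthogonality hypothesis, $G$ is a $4\times 4$ PSD matrix with diagonal $(1-q, q, q, 1-q)$. By Horn's theorem, every $\vgm \in \Gamma$ must majorize the sorted diagonal $d^\downarrow = (\frac{1-q}{2}, \frac{1-q}{2}, \qbtwo, \qbtwo)$, which translates into
\begin{equation*}
\gamma_1^\downarrow \ge \frac{1-q}{2}, \qquad \gamma_1^\downarrow + \gamma_2^\downarrow \ge 1-q, \qquad \gamma_1^\downarrow + \gamma_2^\downarrow + \gamma_3^\downarrow \ge 1 - \qbtwo,
\end{equation*}
together with $\sum_i \gamma_i = 1$ and $\gamma_i \ge 0$.

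For $\vgmmax$: in the regime $\gmo \in [\frac{1-q}{2}, 1 - \frac{3q}{2}]$ (which comfortably covers the range $\gmo \le 1 - \mq$ that Lemma~\ref{lem:ourlemma}'s main proof invokes, whenever $q \le 4/13$), the tuple $(\gmo, 1-q-\gmo, \qbtwo, \qbtwo)$ is already sorted in decreasing order and its partial sums $(\gmo, 1-q, 1-\qbtwo, 1)$ saturate the last two Horn inequalities. For any competitor $\vgm \in \Gamma$ containing $\gmo$ as an entry, we have $\gamma_1^\downarrow \ge \gmo$ trivially, and the remaining Horn inequalities give the other partial-sum bounds. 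Thus $\vgm$ majorizes $\vgmmax$, and Schur-concavity of Shannon entropy yields $H(\vgm) \le H(\vgmmax)$.

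For $\vgmmin$: the vector $(\max(\gmo, 1-\gmo), \min(\gmo, 1-\gmo), 0, 0)$ has only two nonzero entries summing to $1$, and one verifies it satisfies every Horn inequality: its first entry is $\ge \half \ge \frac{1-q}{2}$, its first two entries sum to $1 \ge 1-q$, and $1 \ge 1 - \qbtwo$. Conversely, any $\vgm \in \Gamma$ containing $\gmo$ as an entry satisfies $\gamma_1^\downarrow \le \max(\gmo, 1-\gmo)$: if $\gmo \ge \half$, then $\gmo$ must itself be the largest entry (the remaining entries sum to $1-\gmo \le \gmo$), while if $\gmo < \half$, any other entry is bounded by $1 - \gmo = \max(\gmo, 1-\gmo)$. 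The partial sums from $k=2$ onward are trivially bounded by $1$, matching those of $\vgmmin^\downarrow$. Hence $\vgmmin$ majorizes $\vgm$, and Schur-concavity gives $H(\vgm) \ge H(\vgmmin)$.

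The main bookkeeping obstacle is that the formula $(\gmo, 1-q-\gmo, \qbtwo, \qbtwo)$ is sorted in decreasing order only for $\gmo \le 1 - \frac{3q}{2}$; outside that range, the second entry drops below $\qbtwo$, the partial sums no longer saturate the Horn inequalities, and the Schur-concavity argument for $\vgmmax$ would need to be redone with a re-sorted tuple. Fortunately, the values of $\gmo$ invoked in the main proof of Lemma~\ref{lem:ourlemma} lie inside the safe range, so the argument carries through unchanged there.
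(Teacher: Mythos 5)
Your proof is correct and rests on the same two pillars as the paper's: majorization constraints on $\vgm$ plus Schur-concavity of the Shannon entropy. Two differences are worth noting. First, you obtain the constraints $\gamma_1^\downarrow \ge \frac{1-q}{2}$, $\gamma_1^\downarrow+\gamma_2^\downarrow \ge 1-q$, $\gamma_1^\downarrow+\gamma_2^\downarrow+\gamma_3^\downarrow \ge 1-\qbtwo$ by passing to the Gram matrix $V^\dagger V$ and using the fact that the spectrum majorizes the diagonal --- that is the Schur--Horn theorem, not Horn's theorem on eigenvalues of sums as you label it; the paper instead applies Horn's inequalities to the decomposition $\rho_E = \half\rho_E^0 + \half\rho_E^1$ with each summand of known rank-two spectrum. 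The two routes land on the identical three inequalities here, so this is a matter of taste (and of fixing the attribution). Second, and more substantively, you explicitly flag that the tuple $\left(\gmo, 1-\gmo-q, \qbtwo, \qbtwo\right)$ is sorted --- and hence that the majorization argument for $\vgmmax$ goes through as written --- only for $\gmo \le 1-\frac{3q}{2}$; the paper's proof is silent on this, and indeed for $\gmo$ near $1-q$ the stated formula is no longer an entropy maximizer among vectors in $\Gamma$ with largest eigenvalue $\gmo$ (e.g.\ spreading the residual mass $1-\gmo$ uniformly over three slots beats it while still satisfying all the constraints). Your observation that the range actually invoked in the proof of Lemma \ref{lem:ourlemma} satisfies $\gmo \le 1-\mq \le 1-\frac{3q}{2}$ for $q \le 4/13$ is the right patch, and makes your version of the argument tighter than the one in the paper. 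The treatment of $\vgmmin$ matches the paper's.
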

\begin{proof}
	
	Notice that, for an arbitrary but fixed $\vgm$, if we fix the first element $\gmo$, fix the third and fourth elements to be $\qbtwo$ and place the remaining weights in the second place, we get a vector, which we call $\vgmmax$ such that, $H\br{\vgmmax} \ge H(\vgm)$. It is possible to fix the third and fourth elements to $\qbtwo$ because according to Horn's theorem,
	\begin{align*}
		\gamma_3 + \gamma_4 \le \alpha_2 + \alpha_3 + \beta_2 + \beta_3,
	\end{align*}
	which in our case, translates to the following:
	\begin{align*}
		\gamma_3 + \gamma_4 \le \qbtwo + \qbtwo = q,
	\end{align*}
	(see lemma \eqref{lem:getlowupbound}). Horn's theorem also tells us that,
	\begin{align}
		&\gamma_1 + \gamma_2 \ge \alpha_1 + \alpha_4 + \beta_1 + \beta_4 = \omqbtwo + \omqbtwo = 1 - q, \label{eq:horn12}\\
		&\gamma_1 + \gamma_2 + \gamma_3 \ge \alpha_1 + \alpha_3 + \alpha_4 + \beta_1 + \beta_2 + \beta_4 = 1 - q + \frac{q}{2}. \label{eq:horngm123}
	\end{align}
	It is easy to see that, for this vector $\vgmmax$, $H\br{\vgmmax} \ge H(\vgm)$, because any arbitrary $\vgm$ would majorize $\vgmmax$ \cite{nielsen2001majorization}, \cite{sagawa2020entropy}. Notice that the first elements of an arbitrary $\vgmmax$ and $\vgm$ are the same. Moreover, the sum of the first two elements of $\vgm$ is at least $1 - q$, and the sum of the first three is at least $1 - q + \qbtwo$ as can be seen from equations \eqref{eq:horn12} and \eqref{eq:horngm123}. This majorizes the first three elements of $\vgmmax$ that we are constructing. The fourth element at this point can't affect the whole majorization, which can be seen at a moment's thought.
	
	We can also find $\vgmmin$ by constructing a vector that only has two non-zero values, one of the elements is $\gamma_1$ and the other one is $1 - \gamma_1$ and considering the larger one as the first element. In this way, we get $\vgmmin = \left( \max(\gmo, 1 - \gmo), \min(\gmo, 1 - \gmo) , 0, 0 \right)$. Obviously, this $\vgmmin$, would majorize any $\vgm$ and consequently for any arbitrary $\vgm$, $H(\vgm) \ge H\br{\vgmmin}$.
\end{proof}

\begin{lemma}
	\label{lem:maxentvector}
	Given the noise parameter $q$, for all $\omqbtwo \le \gmo \le 1 - q$, the maximum for $H\br{\vgmmax}$ is achieved when $\vgmmax = (\omqbtwo, \omqbtwo, \qbtwo, \qbtwo)$.
\end{lemma}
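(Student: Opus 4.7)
The plan is to use the explicit form of $\vgmmax$ given in the preceding lemma and reduce the maximization of $H(\vgmmax)$ to an elementary single-variable concave optimization. Since $\vgmmax = (\gmo, 1 - \gmo - q, \qbtwo, \qbtwo)$, the last two coordinates do not depend on $\gmo$ and contribute the constant $-q\log(q/2)$ to the Shannon entropy, so they can be set aside for the purpose of optimization.

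Next, I would observe that the first two coordinates always sum to $\gmo + (1 - \gmo - q) = 1 - q$, which is again independent of $\gmo$. Hence the problem collapses to maximizing the entropy of a two-outcome distribution whose total mass is the fixed value $1 - q$; concretely, over $\gmo \in [\omqbtwo, 1 - q]$, I need to maximize
\begin{align*}
g(\gmo) = -\gmo \log \gmo - (1 - \gmo - q)\log(1 - \gmo - q).
\end{align*}

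By strict concavity of $x \mapsto -x\log x$ (equivalently, a two-point application of Jensen's inequality, or by setting $g'(\gmo) = \log\frac{1 - \gmo - q}{\gmo} = 0$), the sum $g(\gmo)$ is maximized exactly when the two values coincide, i.e.\ $\gmo = 1 - \gmo - q$, giving $\gmo = \omqbtwo$. Substituting this back into $\vgmmax$ recovers the vector $(\omqbtwo, \omqbtwo, \qbtwo, \qbtwo)$ claimed in the statement.

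The only remaining check is that the optimizer $\gmo = \omqbtwo$ actually lies within the allowed interval $[\omqbtwo, 1 - q]$; it sits precisely at the left endpoint, so the maximum is attained there (and, by concavity, $H(\vgmmax)$ is monotonically decreasing as $\gmo$ moves rightward across this interval). There is no substantial obstacle in this argument: the structural observation that two coordinates are fixed and the other two have a constant sum reduces the problem to a standard concavity fact, which then finishes the proof in one step.
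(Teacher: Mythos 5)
Your proof is correct, but it takes a genuinely different route from the paper's. You work directly with the one-parameter family $\vgmmax = (\gmo,\, 1-\gmo-q,\, \qbtwo,\, \qbtwo)$ from Lemma \ref{lem:argmaxmin}, note that the last two coordinates contribute the constant $-q\log(q/2)$, observe that the first two coordinates have the fixed sum $1-q$, and reduce the problem to a two-point concavity (Jensen) argument whose optimum $\gmo = 1-\gmo-q$ sits exactly at the left endpoint $\gmo=\omqbtwo$ of the admissible interval, with $H\br{\vgmmax}$ monotonically decreasing to the right of it. The paper instead derives the global bound $H(E)_\rho \le H(B^ZE)_\rho = 1+h(q)$ (concavity of entropy for the cq-state together with Lemma \ref{lem:consthq}) and then exhibits $(\omqbtwo,\omqbtwo,\qbtwo,\qbtwo)$ as a Horn-feasible spectrum attaining that value. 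Your argument is more elementary and self-contained once Lemma \ref{lem:argmaxmin} is in hand, and it has the side benefit of explicitly establishing the monotone decrease of $H\br{\vgmmax}$ in $\gmo$ that the maxmin case analysis of Lemma \ref{lem:ourlemma} invokes; the paper's argument has the mild advantage of bounding $H(E)_\rho$ over \emph{all} admissible spectra rather than only over the $\vgmmax$ family. Both approaches establish the lemma as stated.
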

\begin{proof}
	In order to establish the fact that $\vgmmax = (\omqbtwo, \omqbtwo, \qbtwo, \qbtwo)$ for all possible $\gmo$, we remember the definition of $\hbzerho$. Clearly, $\hbzerho = H(B^ZE)_\rho - H(E)_\rho$ implies the following, considering $D_B = 2$:
	\begin{align*}
		H(E)_\rho &\le H(B^ZE)_\rho \\
		&= H\left(B^Z\right) + \onebyd \sum_{b} H\left(\sum_a \kb{\eba}{\eba}\right) \\
		&= 1 + \half \left(H\left(\rho_E^0\right) + H(\rho_E^1) \right) \\
		&= 1 + h(q).
	\end{align*}
	The last equality follows because of lemma \eqref{lem:consthq}. It is easy to see that, equality for this upper bound is achieved by one of the vector of eigenvalues of $\rho_E$, $\vgmmax = \left( \frac{1 - q}{2}, \frac{1 - q}{2}, \qbtwo, \qbtwo \right)$. Now we explain how this $\vgmmax$ can be derived. Notice that, both $\rho_E^0$ and $\rho_E^1$ have eigenvalues $\{\frac{1 - q}{2}, \frac{q}{2}, 0, 0\}$ as explained in lemma \eqref{lem:consthq}. Let's denote eigenvalues of $\rho_E^0$ as $\vec{\alpha}$ and the eigenvalues of $\rho_E^1$ as $\vec{\beta}$, then, $\vec{\alpha} = \vec{\beta} = \{\frac{1 - q}{2}, \frac{q}{2}, 0, 0\}$. We further denote the eigenvalues of the resulting matrix by $\vec{\gamma}$. Now, one of the inequalities from Horn's theorem says the following:
	\begin{align*}
		\gamma_3 + \gamma_4 \le \alpha_2 + \alpha_3 + \beta_2 + \beta_3.
	\end{align*}
	This inequality upper bounds the third and fourth element of the resulting matrix $\rho_E^0 + \rho_E^1$ by the second and third element of the summands $\rho_E^0$ and $\rho_E^1$. In our context, this means that $\gamma_3 + \gamma_4 \le \frac{q}{2} + \frac{q}{2} = q$. Keeping in mind that $\gamma_4 \le \frac{q}{2}$ from lemma \eqref{lem:getlowupbound} and remembering that uniform distribution maximizes entropy, we determine $\gamma_3 = \frac{q}{2}, \gamma_4 = \frac{q}{2}$. Then we are left with distributing $1 - q$ between $\gamma_1$ and $\gamma_2$. Again, we distribute this weight evenly and determine $\gamma_1 = \gamma_2 = \frac{1- q}{2}$, remembering that $\gamma_1 \ge \frac{1 - q}{2}$ from lemma \eqref{lem:getlowupbound}.
	
	
\end{proof}

\begin{lemma}
	\label{lem:argminmax}
	Given the noise parameter $q$, for all $\omqbtwo \le \gmo \le 1 - q$, the minimum for $H\br{\vgmmin}$ is achieved when $\vgmmin = (1 - q, q, 0, 0)$.
\end{lemma}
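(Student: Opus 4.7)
The plan is to invoke the earlier characterization of $\vgmmin$ from Lemma \ref{lem:argmaxmin}, reduce the entropy of this vector to a binary entropy in a single variable $\gmo$, and then carry out the one-dimensional minimization of the binary entropy function over the admissible range of $\gmo$.

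First, I would recall that by Lemma \ref{lem:argmaxmin}, $\vgmmin = \br{\max(\gmo, 1 - \gmo), \min(\gmo, 1 - \gmo), 0, 0}$. Since the two nonzero components sum to $1$ (because $\max(\gmo,1-\gmo)+\min(\gmo,1-\gmo) = 1$), the Shannon entropy of this four-dimensional probability vector collapses to the binary entropy:
\begin{align*}
H\br{\vgmmin} = -\gmo \log \gmo - (1-\gmo)\log(1-\gmo) = h(\gmo),
\end{align*}
using also that $h(x)=h(1-x)$.

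Second, I would use the admissible range $\omqbtwo \le \gmo \le 1-q$ (as established in Lemma \ref{lem:getlowupbound}) and minimize $h(\gmo)$ over this interval. Since $h$ is strictly concave, symmetric about $1/2$, and strictly decreasing in the distance $|\gmo - 1/2|$, the minimum on a closed interval is attained at whichever endpoint is farther from $1/2$. The left endpoint $\omqbtwo$ has distance $q/2$ from $1/2$, while the right endpoint $1-q$ has distance $1/2 - q$ from $1/2$ (valid since $q \le 1/2$). Comparing, $1/2 - q \ge q/2$ iff $q \le 1/3$, which is comfortably satisfied in our regime $0 \le q \le .1464$. Hence the minimum is attained at $\gmo = 1-q$.

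Finally, substituting $\gmo = 1-q$ into the formula for $\vgmmin$ yields $\vgmmin = (1-q, q, 0, 0)$, completing the proof. I expect no serious obstacle here: the only subtlety is justifying the endpoint comparison, which is immediate once the range of $q$ is used. The lemma is essentially a one-variable calculus fact once the reduction to $h(\gmo)$ is made via Lemma \ref{lem:argmaxmin}.
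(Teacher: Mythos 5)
Your proof is correct, but it takes a genuinely different route from the paper's. You treat the lemma as the one-variable problem it really is: since Lemma \ref{lem:argmaxmin} gives $\vgmmin = \br{\max(\gmo, 1-\gmo), \min(\gmo,1-\gmo),0,0}$, you have $H\br{\vgmmin} = h(\gmo)$, and then concavity and symmetry of $h$ reduce everything to comparing the distances of the two endpoints $\omqbtwo$ and $1-q$ from $\tfrac{1}{2}$, which picks out $\gmo = 1-q$ precisely when $q \le 1/3$. The paper instead first lower-bounds $H(E)_\rho \ge H(B^ZE)_\rho - H(B^Z)_\rho = h(q)$ using Lemma \ref{lem:consthq}, then exhibits the vector $(1-q,q,0,0)$ via Horn's inequality $\gamma_1 \le \alpha_1 + \beta_1 = 1-q$ and argues that this vector majorizes every admissible spectrum of $\rho_E$, hence has minimal entropy among them. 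The paper's argument thus proves minimality over \emph{all} eigenvalue vectors compatible with Horn's theorem, not only over the one-parameter family $\vgmmin(\gmo)$, and ties the minimal value $h(q)$ to a conditional-entropy identity; your argument is more elementary and has the virtue of making explicit the condition $q \le 1/3$ under which the right endpoint wins the comparison -- a restriction that is invisible in the paper's write-up but is comfortably satisfied in the operative regime $0 \le q \le .1464$. Both proofs are valid where the lemma is applied.
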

\begin{proof}
	Similar to the case of lemma \eqref{lem:argmaxmin}, we start with the entropy of matrix $\rho_E = \onebyd \sum_{b, a} \kb{\eba}{\eba}$ and keep in mind that $D_B = 2$.
	\begin{align*}
		H(E)_\rho &\ge H(B^ZE)_\rho - H(B^Z)_\rho \\
		&=  H\left(B^Z\right)_\rho + \onebyd \sum_{b} H\left(\sum_a \kb{\eba}{\eba}\right)_\rho - H\left(B^Z\right)_\rho \\
		&= \onebyd \sum_{b} H\left(\sum_a \kb{\eba}{\eba}\right)_\rho \\
		&= \half (H(\rho_0) + H(\rho_1)) \\
		&= h(q).
	\end{align*}
	The last equality follows because of lemma \eqref{lem:consthq}. To see how a vector with this minimum entropy can be achieved from Horn's theorem, let's remember an upper bound on the first element of the set of eigenvalues of the matrix $\rho_E$, denoted by $\vgm$, which is:
	\begin{align*}
		\gamma_1 \le \alpha_1 + \beta_1.
	\end{align*}
	In our case where $\vec{\alpha} = \{\frac{1 - q}{2}, \frac{q}{2}, 0, 0\}$ and $\vec{\beta} = \{\frac{1 - q}{2}, \frac{q}{2}, 0, 0\}$, this implies $\gamma_1 \le \frac{1 - q}{2} + \frac{1 - q}{2} = 1 - q$. We set $\gamma_1 = 1 - q$ and set the remaining probability weight on $\gamma_2$. Thus getting a vector $\vgmmin = \{1 - q, q, 0, 0\}$. It is easy to see that any other vector $\vgm'$ will be majorized by $\vgmmin$ and will have larger entropy according to the majorization theorem \cite{nielsen2001majorization}, \cite{sagawa2020entropy}.
	
\end{proof}

\begin{lemma}
	\label{lem:getlowupbound}
	For depolarizing channel parameter $q$, the matrix $\rho_E$ found in equation \eqref{eq:defrhoe}, can only have eigenvalues in the following range:
	\begin{align*}
		&\frac{1 - q}{2} \le \gamma_{1} \le 1 - q, \frac{q}{2} \le \gamma_2 \le \half,  \\
		& 0 \le \gamma_3 \le  q,  0 \le \gamma_4 \le \frac{q}{2}.
	\end{align*}
\end{lemma}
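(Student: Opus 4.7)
The plan is to view $\rho_E$ as a sum of two Hermitian matrices whose spectra are already known, and then invoke Weyl's inequalities (a subset of the Horn inequalities already used elsewhere in this section) to read off both the upper and lower bounds on each $\gamma_i$.

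First I would split $\rho_E = \frac{1}{2}\rho_E^0 + \frac{1}{2}\rho_E^1$ where $\rho_E^b := \sum_a \kb{\eba}{\eba}$. By Lemma \ref{lem:consthq} together with the orthogonality assumption $\braket{\eba|\ebap} = 0$ for $a \ne a'$ and the depolarizing channel inner products $\braket{\eba|\eba} \in \{1-q, q\}$, each $\rho_E^b$ is diagonal in the $\{\ket{\eba}/\|\ket{\eba}\|\}_a$ basis with eigenvalues $\{1-q, q, 0, 0\}$ in nonincreasing order. Hence the scaled summands $\frac{1}{2}\rho_E^b$ have eigenvalue vectors $\vec\alpha = \vec\beta = \left(\frac{1-q}{2}, \frac{q}{2}, 0, 0\right)$, which is the input data for Horn/Weyl.

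Second, for each of the four bounds I would choose an appropriate Weyl-type inequality and read off the result. The upper Weyl inequality $\gamma_{i+j-1} \le \alpha_i + \beta_j$ applied with $(i,j) = (1,1), (1,2), (2,2), (2,3)$ yields $\gamma_1 \le 1-q$, $\gamma_2 \le \frac{1}{2}$, $\gamma_3 \le q$, and $\gamma_4 \le \frac{q}{2}$ respectively. The dual Weyl inequality $\gamma_{i+j-n} \ge \alpha_i + \beta_j$ (with $n=4$) applied with $(i,j) = (1,4)$ and $(2,4)$ gives $\gamma_1 \ge \frac{1-q}{2}$ and $\gamma_2 \ge \frac{q}{2}$, while $\gamma_3, \gamma_4 \ge 0$ is immediate from the positive semidefiniteness of $\rho_E$. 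Collecting these eight bounds is exactly the statement of the lemma.

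The main obstacle is really just bookkeeping — there is no deep step here once Lemma \ref{lem:consthq} is in hand. The two points requiring care are (a) ensuring the $\frac{1}{D} = \frac{1}{2}$ scaling factor in the definition of $\rho_E$ is correctly propagated into $\vec\alpha$ and $\vec\beta$ before feeding them into Weyl's inequalities, and (b) checking that the index conventions match the nonincreasing-order convention used in the Horn/Weyl framework. As an aside, several of the bounds (notably $\gamma_1 \le 1-q$ and $\gamma_1 \ge \frac{1-q}{2}$) coincide with the special Horn inequalities already invoked by the authors in Lemmas \ref{lem:argmaxmin} and \ref{lem:argminmax}, so part of the work is literally a restatement of what has already been used.
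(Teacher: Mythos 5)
Your proposal is correct and follows essentially the same route as the paper: the authors likewise decompose $\rho_E = \tfrac{1}{2}\rho_E^0 + \tfrac{1}{2}\rho_E^1$, obtain $\vec{\alpha}=\vec{\beta}=\left(\tfrac{1-q}{2},\tfrac{q}{2},0,0\right)$ from Lemma \ref{lem:consthq}, and read off the eight bounds from the $r=1$ Horn inequalities, which they themselves note are exactly Weyl's inequalities. The only cosmetic difference is the choice of index pairs for some of the lower bounds (e.g.\ the paper uses $\gamma_2 \ge \alpha_4+\beta_2$ where you use $\gamma_2 \ge \alpha_2+\beta_4$), which by symmetry of $\vec\alpha=\vec\beta$ gives identical values.
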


\begin{proof}
	For $D_B = 2$ we can see from equation \eqref{eq:defrhoe} that,
	
	\begin{align}
		\rho_E &= \half  \rho_E^0 + \half \rho_E^1,  \label{eq:defM}
	\end{align}
	where,
	\begin{align*}
		\rho_E^0 = \kb{e_0^0}{e_0^0} + \kb{e_0^1}{e_0^1}, \;\;\;\;\;\;\;\; \rho_E^1 = \kb{e_1^0}{e_1^0} + \kb{e_1^1}{e_1^1}.
	\end{align*}
	Let's remember the definition of $\rho_E$ from equation \eqref{eq:defM}, and apply the inequalities derived from Horn's theorem to bound its eigenvalues from the eigenvalues of $\rho_E^0$ and $\rho_E^1$. Note that the eigenvalues of $\half \rho_E^0$ are simply the same eigenvalues of $\rho_E^0$, scaled by half. So in fact, the two summands $\half \rho_E^0$ and $ \half \rho_E^1$, both have these eigenvalues: $\{\frac{1 - q}{2}, \qbtwo, 0, 0\}$ (see lemma \eqref{lem:consthq}), where the elements are arranged in a non-increasing order. We call these two vectors $\vec{\alpha}$ and $\vec{\beta}$ and the elements by their respective position index. Then let's define the vector of the resulting matrix $\rho_E$'s eigenvalues from these summands as:
	\begin{align*}
		\vgm = \{\gamma_1, \gamma_2, \gamma_3, \gamma_4\},
	\end{align*}
	where the elements are in non-increasing order. First, let's focus on the upper bound of $\gamma_1$. We can easily see that, because $\gamma_1 \le \alpha_1 + \beta_1$ by Horn's theorem,
	\begin{align}
		\gamma_1 \le \frac{1-q}{2} + \frac{1-q}{2} = 1 - q \label{eq:gam1upper}.
	\end{align}
	
	Furthermore, because $\gamma_1 \ge \alpha_1 + \beta_4$,
	\begin{align}
		\gamma_1 \ge \frac{1-q}{2} + 0 = \frac{1-q}{2} \label{eq:gam1lower}.
	\end{align}
	Similarly, from $\gamma_2 \le \alpha_1 + \beta_2$ and $\gamma_2 \ge \alpha_4 + \beta_2$,$\gamma_3 \le \alpha_2 + \beta_2$  and $\gamma_3 \ge \alpha_4 + \beta_3$, $\gamma_4 \le \alpha_2 + \beta_3$ and  $\gamma_4 \ge \alpha_4 + \beta_4$ ,
	\begin{align}
		\qbtwo \le \gamma_2 &\le \frac{1 - q}{2} + \qbtwo = .5 \label{eq:gam2lowup} \\
		0 \le \gamma_3 &\le \qbtwo + \qbtwo \le q\label{eq:gam3lowup} \\
		0 \le \gamma_4 &\le \qbtwo\label{eq:gam4lowup}.
	\end{align}
\end{proof}
\begin{lemma}
	\label{lem:del2upper}
	For the matrix $\Delta_E$ defined in equation \eqref{eq:abscondent}, the following holds for depolarizing noise parameter $q$:
	\begin{align*}
		\|\Delta_E\|_{op} \le \mq.
	\end{align*}
\end{lemma}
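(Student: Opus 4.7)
The plan is to bound $\|\Delta_E\|_{op}$ by decomposing $\Delta_E$ into the four (for $D=2$) individual rank-two operators indexed by $b$, applying the triangle inequality, and then computing the operator norm of each individual piece exactly using the orthogonality assumption together with the Hermiticity of each piece.

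First, I would write, for $D=2$,
\begin{align*}
\Delta_E \;=\; \frac{1}{2}\bigl(X_0 + X_1\bigr), \qquad X_b \;:=\; \ket{e_b^0}\bra{e_b^1} + \ket{e_b^1}\bra{e_b^0},
\end{align*}
and apply the triangle inequality to get $\|\Delta_E\|_{op} \le \tfrac{1}{2}(\|X_0\|_{op} + \|X_1\|_{op})$. Each $X_b$ is Hermitian and supported on the two-dimensional subspace spanned by $\ket{e_b^0}$ and $\ket{e_b^1}$. By Lemma~\ref{lem:evesattackoperator}, these two vectors are orthogonal, with squared norms $\braket{e_b^0|e_b^0} = p(b\mid 0)$ and $\braket{e_b^1|e_b^1} = p(b\mid 1)$. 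Normalizing, $X_b$ acts on this two-dimensional invariant subspace as the matrix
\begin{align*}
\begin{pmatrix} 0 & \sqrt{p(b\mid 0)\,p(b\mid 1)} \\ \sqrt{p(b\mid 0)\,p(b\mid 1)} & 0 \end{pmatrix},
\end{align*}
whose eigenvalues are $\pm\sqrt{p(b\mid 0)\,p(b\mid 1)}$, so $\|X_b\|_{op} = \sqrt{p(b\mid 0)\,p(b\mid 1)}$.

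Next I would specialize to the depolarizing channel. For either choice of $b\in\{0,1\}$, the pair $\{p(b\mid 0), p(b\mid 1)\}$ equals $\{1-q,\,q\}$, so $\|X_b\|_{op} = \sqrt{q(1-q)}$ for both values of $b$. Combining with the triangle inequality yields
\begin{align*}
\|\Delta_E\|_{op} \;\le\; \tfrac{1}{2}\bigl(\sqrt{q(1-q)} + \sqrt{q(1-q)}\bigr) \;=\; \sqrt{q(1-q)},
\end{align*}
which is the claim.

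I do not expect a real obstacle here; the work is all front-loaded in the setup. The only place where one must be a little careful is confirming that $\ket{e_b^0}$ and $\ket{e_b^1}$ are indeed orthogonal for each fixed $b$, which is exactly the orthogonality assumption invoked throughout this section (the $a \ne a'$ case of Lemma~\ref{lem:evesattackoperator}), and making sure the operator-norm identity $\|\,\ket{u}\bra{v} + \ket{v}\bra{u}\,\|_{op} = \|u\|\cdot\|v\|$ is applied to orthogonal $\ket{u}, \ket{v}$ so that the two-dimensional matrix representation above is valid. Everything else is a short triangle-inequality argument and a direct substitution of the depolarization probabilities.
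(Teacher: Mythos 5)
Your proposal is correct and follows essentially the same route as the paper: split $\Delta_E$ into the two Hermitian pieces indexed by $b$, apply the triangle inequality, and use the orthogonality of $\ket{e_b^0}$ and $\ket{e_b^1}$ together with the depolarizing values $\{1-q,\,q\}$ of their squared norms to get operator norm $\sqrt{q(1-q)}$ for each piece. The only cosmetic difference is that you diagonalize each $2\times 2$ block directly while the paper computes $\sqrt{\gamma_{max}(X^\dagger X)}$; these are the same calculation.
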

\begin{proof}
	\begin{align*}
		\|\Delta_E\|_{op} &= \|\onebyd \sum_{b, a \ne a'} \kb{\eba}{\ebap}\|_{op} \\
		&= \onebyd \|\sum_{b, a \ne a'} \kb{\eba}{\ebap}\|_{op} \\
		&\le \half \|\kb{\ezz}{\ezo} + \kb{\ezo}{\ezz}\|_{op} + \half \|\kb{\eoz}{\eoo} + \kb{\eoo}{\eoz}\|_{op}   \text{  [triangle inequality ]}\\  
		&= \half \sqrt{\gamma_{max}((\kb{\ezz}{\ezo} + \kb{\ezo}{\ezz})^\dagger (\kb{\ezz}{\ezo} + \kb{\ezo}{\ezz}))} \\&+ \half \sqrt{\gamma_{max}((\kb{\eoz}{\eoo} + \kb{\eoo}{\eoz})^\dagger (\kb{\eoz}{\eoo} + \kb{\eoo}{\eoz}))}\\  
		&= \half \sqrt{\gamma_{max}((\kb{\ezz}{\ezo} + \kb{\ezo}{\ezz})(\kb{\ezz}{\ezo} + \kb{\ezo}{\ezz}))} \\&+ \half \sqrt{\gamma_{max}((\kb{\eoz}{\eoo} + \kb{\eoo}{\eoz})(\kb{\eoz}{\eoo} + \kb{\eoo}{\eoz}))}\\  
		&= \half \sqrt{\gamma_{max}( \bk{\ezo}{\ezo} \kb{\ezz}{\ezz} + \bk{\ezz}{\ezz}\kb{\ezo}{\ezo})} \\&+ \half \sqrt{\gamma_{max}(\bk{\eoo}{\eoo} \kb{\eoz}{\eoz} + \bk{\eoz}{\eoz} \kb{\eoo}{\eoo})},
	\end{align*}
	where, in the fourth line, we have used the definition of the operator norm, which is the largest singular value of a matrix. Now, we remember that for the depolarizing channel that we are considering, $\bk{\ezz}{\ezz} = 1 - q$ and $\bk{\ezo}{\ezo} = q$, similarly, $\bk{\ezo}{\ezo} = q$ and $\bk{\eoo}{\eoo} = 1 - q$. Then we write $\ket{\ezz} = (1 - q) \ket{m}$ and $\ket{\ezo} = q\ket{m^\perp}$, for some orthonormal basis $ \{ \ket{m}, \ket{m^\perp}\} $. Similarly, we consider  $\ket{\eoz} = q\ket{n}$ and $\ket{\eoo} = (1 - q)\ket{n^\perp}$, for some orthonormal basis $ \{ \ket{n}, \ket{n^\perp}\} $. Then the inequality above can be written as:
	\begin{align*}
		\| \Delta_E \|_{op} &\le \half \sqrt{\gamma_{max}(q(1 - q)\kb{m}{m} + q(1 - q) \kb{m^\perp}{m^\perp})} \\
		&+ \half \sqrt{\gamma_{max}(q(1 - q)\kb{n}{n} + q(1 - q) \kb{n^\perp}{n^\perp})}\\
		&=\mq.
	\end{align*}
	
\end{proof}

\begin{lemma}
	\label{lem:rhosigmabinary}
	\[
	H(E)_{\rho + \Delta} \in \{h(v) | v \in V\},
	\]
	where $V \coloneqq \{v \in [0, 1] \mid  \gamma_1 - \sqrt{q (1 - q)} \le v \le \gamma_1 + \sqrt{q (1 - q)} \}$, and $\gamma_{1}$ is the first element of the vector $\vgm$, which is the vector of eigenvalues of $\rho_E$.
\end{lemma}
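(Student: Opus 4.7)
The plan is to establish two facts: (i) $H(E)_{\sigma} = H(E)_{\rho+\Delta}$ is always of the form $h(v)$ for some $v \in [0,1]$, and (ii) this $v$ must lie within $\sqrt{q(1-q)}$ of $\gamma_1$.

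For (i), I would recall from Equation \eqref{eq:phibe} that before Bob's $\Z$-basis measurement, the joint state $\ket{\phi}_{BE}$ is a \emph{pure} state on $BE$. A $\Z$-basis projective measurement on the $B$ register does not alter the marginal on $E$, so
\begin{align*}
  \sigma_E \;=\; \Tr_B \sigma_{B^ZE} \;=\; \Tr_B \kb{\phi}{\phi}_{BE}.
\end{align*}
Since $D_B = 2$, Schmidt decomposition of $\ket{\phi}_{BE}$ has at most two nonzero coefficients, and therefore $\sigma_E$ has rank at most two with spectrum $\{v,\,1-v,\,0,\,0\}$ for some $v\in[0,1]$. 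Hence $H(E)_\sigma = h(v)$. Choose $v$ to be the larger of the two nonzero eigenvalues (using $h(v)=h(1-v)$ if needed), so that $v = \lambda_{\max}(\sigma_E)$.

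For (ii), I would use the decomposition $\sigma_E = \rho_E + \Delta_E$ from Equation \eqref{eq:defsigmae}. Since both $\rho_E$ and $\sigma_E$ are Hermitian (indeed, density operators), so is $\Delta_E$. Weyl's inequality for Hermitian perturbations then yields
\begin{align*}
  \bigl|\lambda_{\max}(\sigma_E) - \lambda_{\max}(\rho_E)\bigr| \;\le\; \|\Delta_E\|_{op}.
\end{align*}
By definition $\lambda_{\max}(\rho_E) = \gamma_1$, and Lemma \ref{lem:del2upper} gives $\|\Delta_E\|_{op} \le \sqrt{q(1-q)}$. Combining, $|v-\gamma_1|\le\sqrt{q(1-q)}$, i.e.\ $v \in V$, which is exactly what the lemma asserts.

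I do not foresee a genuine obstacle here: the statement is essentially a packaging of Schmidt decomposition (to reduce $H(E)_\sigma$ to binary entropy) plus Weyl's inequality (to transfer the operator-norm bound on $\Delta_E$ from Lemma \ref{lem:del2upper} into an eigenvalue-location bound). The only mild care needed is in choosing which of the two nonzero eigenvalues of $\sigma_E$ to call $v$; picking the maximum eigenvalue is what makes Weyl's inequality line up cleanly with the bound $\gamma_1 \pm \sqrt{q(1-q)}$ that defines $V$.
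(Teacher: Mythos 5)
Your proof is correct and follows essentially the same route as the paper's: Weyl's eigenvalue perturbation theorem combined with the operator-norm bound $\|\Delta_E\|_{op}\le\sqrt{q(1-q)}$ from Lemma \ref{lem:del2upper} to locate $v$ near $\gamma_1$, plus a rank-two argument to reduce $H(E)_\sigma$ to a binary entropy. The only difference is in the rank-two step: the paper shows each block $\tau_b=\sum_{a,a'}\kb{e_b^a}{e_b^{a'}}$ satisfies $\tau_b^2=\tau_b$ with unit trace (using the orthogonality assumption and the depolarizing statistics), whereas you observe that $\sigma_E=\Tr_B\kb{\phi}{\phi}_{BE}$ is the marginal of a pure state with $D_B=2$ and invoke the Schmidt decomposition; your version is marginally cleaner since the rank bound then needs neither the orthogonality assumption nor the channel model.
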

\begin{proof}
	Because $\rho_E$, $\Delta_E$ are hermitian, $\rho_E + \Delta_E$ is also hermitian. Now, according to Weyl's eigenvalue stability theorem \cite{veselic2007spectral, tao2011topics}, the eigenvalues of $\rho_E + \Delta_E$ would not vary too much from the eigenvalues $\rho_E$. More precisely, each pair of eigenvalues of $\rho_E$ and $\rho_E + \Delta_E$, after arranging them in non-decreasing order, is related by the following quantity:
	\begin{align}
		|\mu_k - \delta_k| \le \|\Delta_E\|_{op},
	\end{align}
	where, $\mu_k$ and $\delta_k$ are the eigenvalues of $\rho_E$ and $\rho_E + \Delta_E$ respectively. So, we need to find an upper bound on  $\|\Delta_E\|_{op}$ now. From lemma \eqref{lem:del2upper}, we know that $\|\Delta_E\|_{op} \le \sqrt{q(1 - q)}$. Let's argue further that $\rho_E + \Delta_E$ can have at most two eigenvalues for $D = 2$, because we can notice that, in this dimension:
	\begin{align*}
		\rho_E + \Delta_E &= \half \big( \kb{\ezz}{\ezz} + \kb{\ezz}{\ezo} + \kb{\ezo}{\ezz} + \kb{\ezo}{\ezo} \\
		&\;\;\;\;\;+ \kb{\eoz}{\eoz} + \kb{\eoz}{\eoo} + \kb{\eoo}{\eoz} + \kb{\eoo}{\eoo}\big).
	\end{align*}
	Let $\tau \coloneqq \kb{\ezz}{\ezz} + \kb{\ezz}{\ezo} + \kb{\ezo}{\ezz} + \kb{\ezo}{\ezo}$, i.e., the case when $b = 0$.
	Now, with our assumption that $\bk{\eba}{\ebap} = 0$ when $a \ne a'$, we see that $\tau \times \tau = \tau$ and it has unit trace as a valid quantum state. The same holds for $b = 1$. So, $\rho_E + \Delta_E$ then, is simply an equal mixture of two pure states, and pure states have unit ranks, consequently, $\rho_E + \Delta_E$ has rank 2, i.e., two eigenvalues. Hence, we calculate its von Neumann entropy with the binary entropy function in the range $V$.
\end{proof}

\subsection{Horn's Theorem}
Horn's conjecture \cite{horn1962eigenvalues, fulton2000eigenvalues, bhatia2001linear}, which is now a theorem thanks to Klyachko \cite{klyachko1998stable}, and Knutson and Tao \cite{knutson1999honeycomb}, enables one to bound the eigenvalues of some Hermitian matrix $C = A + B$, where $A$ and $B$ are Hermitian with known spectrum. We can denote the spectrums of $A$ and $B$ by $\alpha = \alpha_1 \ge \alpha_2 \ge ... \ge \alpha_n$, $\beta = \beta_1 \ge \beta_2 \ge ... \ge \beta_n$, and for $C$, by $\gamma = \gamma_1 \ge \gamma_2 \ge ... \ge \gamma_n$. This theorem then prescribes a set of triplets $(\alpha, \beta, \gamma)$ which can be the eigenvalues of $A, B$, and $C = A + B$ respectively. More specifically, it produces sets (following notations from \cite{fulton2000eigenvalues}) $T^n_r$ of triplets $(I, J, K)$ of subsets of $\{1, ... , n\}$, where the cardinalities of each $I, J, K$ are the same and is $r$. This set $T^n_r$ needs to be built recursively. First, one builds the following set:
\begin{align*}
	U_r^n = \left \{(I, J, K) \;\;|\;\; \sum_{i \in I} i + \sum_{j \in J} j = \sum_{k \in K} k + \frac{r(r + 1)}{2}\right \}.
\end{align*}
If, $r = 1$, then we are done, because in this case, $T^n_r = U_r^n$. This basic case of $r = 1$ is also known as Weyl's inequalities \cite{horn2012matrix}. Now, one builds the following set:
\begin{align}
	T^n_r &= \{ (I, J, K) \in U^n_r \;\;|\;\; \text{for all } p < r \text{ and all }(F, G, H) \text{ in } T^r_p \text{ such that }, \nonumber \\
	& \; \;\;\;\;\;\;\;\;\; \; \;\;\;\;\;\;\; \sum_{f \in F} i_f + \sum_{g \in G} j_g \le \sum_{h \in H} k_h + \frac{p(p + 1)}{2} \label{horn_trn} \}.
\end{align}
Now we state the theorem:
\paragraph{Horn's theorem: \cite{horn1962eigenvalues}}
A triple $(\alpha, \beta, \gamma)$ occurs as eigenvalues of Hermitian $n \times n$ matrices $A, B$ and $C$ with $C = A + B$ respectively, if and only if
\begin{align*}
	\sum \gamma_i = \sum \alpha_i + \sum \beta_i,
\end{align*}  and the inequalities
\begin{align*}
	\sum_{k \in K} \gamma_k \le \sum_{i \in I} \alpha_i + \sum_{j \in J} \beta_j,
\end{align*}
hold for every $(I, J, K)$ in $T^n_r$, for all $r < n$.
\newline \newline
Below, we list all the inequalities for upper and lower bounding eigenvalues of the sum matrix $C$, when $n = 4$, which is the case applicable to our lemma. We can easily get the lower bounds for the above inequalities also. Because, every triplet $(I, J, K)$ from \eqref{horn_trn} can be subtracted from the equality condition:
\begin{align*}
	\sum_{i = 1}^n \gamma_i = \sum_{i = 1}^n \alpha_i + \sum_{i = 1}^n \beta_i ,
\end{align*}
to get:
\begin{align*}
	\sum_{k \in K^c} \gamma_k \ge  \sum_{i \in I^c} \alpha_i + \sum_{j \in J^c} \beta_j.
\end{align*}
Here we list all the upper bounds of the eigenvalues of $C = A + B$ from Horn's theorem when they are all hermitian $4 \times 4$ matrices.
For $r = 1$ we have:
\begin{align*}
	\gamma_1\le\alpha_1+\beta_1
	\;\;\;\;\;\;\;\;\;
	\gamma_2\le\alpha_1+\beta_2
	\;\;\;\;\;\;\;\;\;
	\gamma_3\le\alpha_1+\beta_3
	\\
	\gamma_4\le\alpha_1+\beta_4
	\;\;\;\;\;\;\;\;\;
	\gamma_2\le\alpha_2+\beta_1
	\;\;\;\;\;\;\;\;\;
	\gamma_3\le\alpha_2+\beta_2
	\\
	\gamma_4\le\alpha_2+\beta_3
	\;\;\;\;\;\;\;\;\;
	\gamma_3\le\alpha_3+\beta_1
	\;\;\;\;\;\;\;\;\;
	\gamma_4\le\alpha_3+\beta_2
	\\
	\gamma_4\le\alpha_4+\beta_1
\end{align*}
For $r = 2$:
\begin{align*}
	\gamma_1+\gamma_2\le\alpha_1+\alpha_2+\beta_1+\beta_2
	\\
	\gamma_1+\gamma_3\le\alpha_1+\alpha_2+\beta_1+\beta_3
	\;\;\;\;\;\;\;\;\;
	\gamma_1+\gamma_4\le\alpha_1+\alpha_2+\beta_1+\beta_4
	\\
	\gamma_2+\gamma_3\le\alpha_1+\alpha_2+\beta_2+\beta_3
	\;\;\;\;\;\;\;\;\;
	\gamma_2+\gamma_4\le\alpha_1+\alpha_2+\beta_2+\beta_4
	\\
	\gamma_3+\gamma_4\le\alpha_1+\alpha_2+\beta_3+\beta_4
	\;\;\;\;\;\;\;\;\;
	\gamma_1+\gamma_3\le\alpha_1+\alpha_3+\beta_1+\beta_2
	\\
	\gamma_1+\gamma_4\le\alpha_1+\alpha_3+\beta_1+\beta_3
	\;\;\;\;\;\;\;\;\;
	\gamma_2+\gamma_3\le\alpha_1+\alpha_3+\beta_1+\beta_3
	\\
	\gamma_2+\gamma_4\le\alpha_1+\alpha_3+\beta_1+\beta_4
	\;\;\;\;\;\;\;\;\;
	\gamma_2+\gamma_4\le\alpha_1+\alpha_3+\beta_2+\beta_3
	\\
	\gamma_3+\gamma_4\le\alpha_1+\alpha_3+\beta_2+\beta_4
	\;\;\;\;\;\;\;\;\;
	\gamma_1+\gamma_4\le\alpha_1+\alpha_4+\beta_1+\beta_2
	\\
	\gamma_2+\gamma_4\le\alpha_1+\alpha_4+\beta_1+\beta_3
	\;\;\;\;\;\;\;\;\;
	\gamma_3+\gamma_4\le\alpha_1+\alpha_4+\beta_1+\beta_4
	\\
	\gamma_2+\gamma_3\le\alpha_2+\alpha_3+\beta_1+\beta_2
	\;\;\;\;\;\;\;\;\;
	\gamma_2+\gamma_4\le\alpha_2+\alpha_3+\beta_1+\beta_3
	\\
	\gamma_3+\gamma_4\le\alpha_2+\alpha_3+\beta_2+\beta_3
	\;\;\;\;\;\;\;\;\;
	\gamma_2+\gamma_4\le\alpha_2+\alpha_4+\beta_1+\beta_2
	\\
	\gamma_3+\gamma_4\le\alpha_2+\alpha_4+\beta_1+\beta_3
	\;\;\;\;\;\;\;\;\;
	\gamma_3+\gamma_4\le\alpha_3+\alpha_4+\beta_1+\beta_2
\end{align*}
For $r = 3$:
\begin{align*}
	\gamma_1+\gamma_2+\gamma_3\le\alpha_1+\alpha_2+\alpha_3+\beta_1+\beta_2+\beta_3
	\\
	\gamma_1+\gamma_2+\gamma_4\le\alpha_1+\alpha_2+\alpha_3+\beta_1+\beta_2+\beta_4
	\\
	\gamma_1+\gamma_3+\gamma_4\le\alpha_1+\alpha_2+\alpha_3+\beta_1+\beta_3+\beta_4
	\\
	\gamma_2+\gamma_3+\gamma_4\le\alpha_1+\alpha_2+\alpha_3+\beta_2+\beta_3+\beta_4
	\\
	\gamma_1+\gamma_2+\gamma_4\le\alpha_1+\alpha_2+\alpha_4+\beta_1+\beta_2+\beta_3
	\\
	\gamma_1+\gamma_3+\gamma_4\le\alpha_1+\alpha_2+\alpha_4+\beta_1+\beta_2+\beta_4
	\\
	\gamma_2+\gamma_3+\gamma_4\le\alpha_1+\alpha_2+\alpha_4+\beta_1+\beta_3+\beta_4
	\\
	\gamma_1+\gamma_3+\gamma_4\le\alpha_1+\alpha_3+\alpha_4+\beta_1+\beta_2+\beta_3
	\\
	\gamma_2+\gamma_3+\gamma_4\le\alpha_1+\alpha_3+\alpha_4+\beta_1+\beta_2+\beta_4
	\\
	\gamma_2+\gamma_3+\gamma_4\le\alpha_2+\alpha_3+\alpha_4+\beta_1+\beta_2+\beta_3
\end{align*}

Now we list the lower bounds for the eigenvalues of the resulting sum matrix $C = A + B$. For $r = 1$:

\begin{align*}
	\gamma_4\ge\alpha_4+\beta_4
	\;\;\;\;\;\;\;\;\;
	\gamma_3\ge\alpha_4+\beta_3
	\;\;\;\;\;\;\;\;\;
	\gamma_2\ge\alpha_4+\beta_2
	\\
	\gamma_1\ge\alpha_4+\beta_1
	\;\;\;\;\;\;\;\;\;
	\gamma_3\ge\alpha_3+\beta_4
	\;\;\;\;\;\;\;\;\;
	\gamma_2\ge\alpha_3+\beta_3
	\\
	\gamma_1\ge\alpha_3+\beta_2
	\;\;\;\;\;\;\;\;\;
	\gamma_2\ge\alpha_2+\beta_4
	\;\;\;\;\;\;\;\;\;
	\gamma_1\ge\alpha_2+\beta_3
	\\
	\gamma_1\ge\alpha_1+\beta_4
\end{align*}
For $r = 2$:
\begin{align*}
	\gamma_3+\gamma_4\ge\alpha_3+\alpha_4+\beta_3+\beta_4
	\\
	\gamma_2+\gamma_4\ge\alpha_3+\alpha_4+\beta_2+\beta_4
	\;\;\;\;\;\;\;\;\;
	\gamma_2+\gamma_3\ge\alpha_3+\alpha_4+\beta_2+\beta_3
	\\
	\gamma_1+\gamma_4\ge\alpha_3+\alpha_4+\beta_1+\beta_4
	\;\;\;\;\;\;\;\;\;
	\gamma_1+\gamma_3\ge\alpha_3+\alpha_4+\beta_1+\beta_3
	\\
	\gamma_1+\gamma_2\ge\alpha_3+\alpha_4+\beta_1+\beta_2
	\;\;\;\;\;\;\;\;\;
	\gamma_2+\gamma_4\ge\alpha_2+\alpha_4+\beta_3+\beta_4
	\\
	\gamma_2+\gamma_3\ge\alpha_2+\alpha_4+\beta_2+\beta_4
	\;\;\;\;\;\;\;\;\;
	\gamma_1+\gamma_4\ge\alpha_2+\alpha_4+\beta_2+\beta_4
	\\
	\gamma_1+\gamma_3\ge\alpha_2+\alpha_4+\beta_2+\beta_3
	\;\;\;\;\;\;\;\;\;
	\gamma_1+\gamma_3\ge\alpha_2+\alpha_4+\beta_1+\beta_4
	\\
	\gamma_1+\gamma_2\ge\alpha_2+\alpha_4+\beta_1+\beta_3
	\;\;\;\;\;\;\;\;\;
	\gamma_2+\gamma_3\ge\alpha_2+\alpha_3+\beta_3+\beta_4
	\\
	\gamma_1+\gamma_3\ge\alpha_2+\alpha_3+\beta_2+\beta_4
	\;\;\;\;\;\;\;\;\;
	\gamma_1+\gamma_2\ge\alpha_2+\alpha_3+\beta_2+\beta_3
	\\
	\gamma_1+\gamma_4\ge\alpha_1+\alpha_4+\beta_3+\beta_4
	\;\;\;\;\;\;\;\;\;
	\gamma_1+\gamma_3\ge\alpha_1+\alpha_4+\beta_2+\beta_4
	\\
	\gamma_1+\gamma_2\ge\alpha_1+\alpha_4+\beta_1+\beta_4
	\;\;\;\;\;\;\;\;\;
	\gamma_1+\gamma_3\ge\alpha_1+\alpha_3+\beta_3+\beta_4
	\\
	\gamma_1+\gamma_2\ge\alpha_1+\alpha_3+\beta_2+\beta_4
	\;\;\;\;\;\;\;\;\;
	\gamma_1+\gamma_2\ge\alpha_1+\alpha_2+\beta_3+\beta_4
\end{align*}
For $r = 3$:
\begin{align*}
	\gamma_2+\gamma_3+\gamma_4\ge\alpha_2+\alpha_3+\alpha_4+\beta_2+\beta_3+\beta_4
	\\
	\gamma_1+\gamma_3+\gamma_4\ge\alpha_2+\alpha_3+\alpha_4+\beta_1+\beta_3+\beta_4
	\\
	\gamma_1+\gamma_2+\gamma_4\ge\alpha_2+\alpha_3+\alpha_4+\beta_1+\beta_2+\beta_4
	\\
	\gamma_1+\gamma_2+\gamma_3\ge\alpha_2+\alpha_3+\alpha_4+\beta_1+\beta_2+\beta_3
	\\
	\gamma_1+\gamma_3+\gamma_4\ge\alpha_1+\alpha_3+\alpha_4+\beta_2+\beta_3+\beta_4
	\\
	\gamma_1+\gamma_2+\gamma_4\ge\alpha_1+\alpha_3+\alpha_4+\beta_1+\beta_3+\beta_4
	\\
	\gamma_1+\gamma_2+\gamma_3\ge\alpha_1+\alpha_3+\alpha_4+\beta_1+\beta_2+\beta_4
	\\
	\gamma_1+\gamma_2+\gamma_4\ge\alpha_1+\alpha_2+\alpha_4+\beta_2+\beta_3+\beta_4
	\\
	\gamma_1+\gamma_2+\gamma_3\ge\alpha_1+\alpha_2+\alpha_4+\beta_1+\beta_3+\beta_4
	\\
	\gamma_1+\gamma_2+\gamma_3\ge\alpha_1+\alpha_2+\alpha_3+\beta_2+\beta_3+\beta_4
\end{align*}

\subsection{Evaluation}
In the following figure \eqref{fig:fig4oursVsWintersWildes}, we compare our new bound in lemma \eqref{lem:ourlemma}, to Winter's bound \cite{winter2016tight} and the conjecture by Wilde \cite{wilde2020optimal}.
\begin{figure}[h]
	\centering
	\includegraphics[width=.9\linewidth]{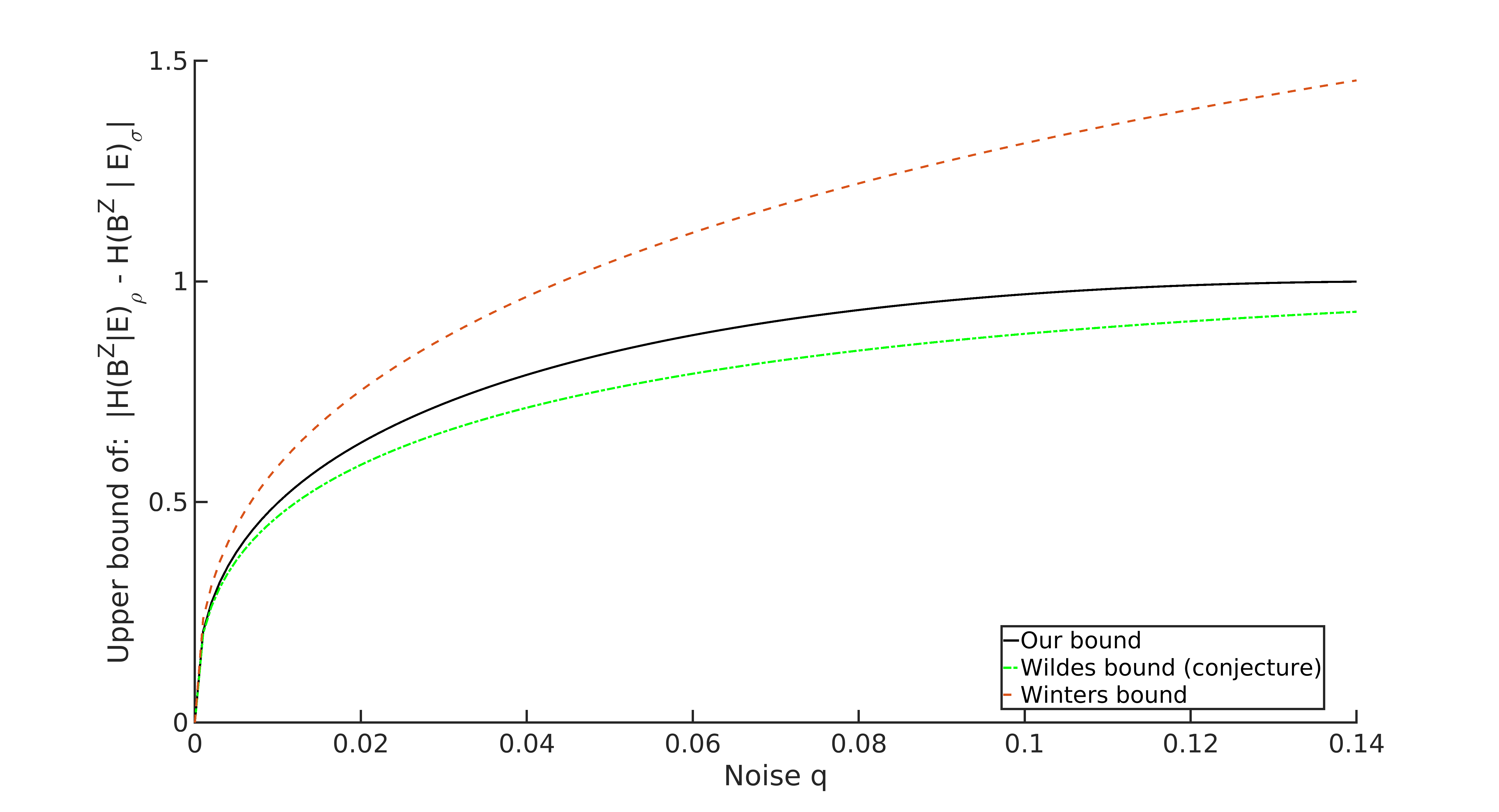}
	\caption{Comparison of our lemma vs Winter’s theorem \cite{winter2016tight} and Wilde’s
		conjecture \cite{wilde2020optimal}, for the upper bound of $|H(B^Z | E)_\rho - H(B^Z | E)_\sigma|$.  Our bound sits between the conjecture (bottom) and the proven result (top); since these are upper-bounds, lower is better. However, we note that our lemma is only for a certain class of state, whereas Winter's result (top), along with the conjectured bound (bottom), is applicable to any state.}
	\label{fig:fig4oursVsWintersWildes}
\end{figure}

Note that, the difficulty in using Horn's theorem for proving our lemma in \eqref{lem:ourlemma} for higher dimensions $(D > 2)$, with an appropriate modification on the right-hand side, lies in the size of the set of inequalities that Horn's theorem prescribes as the dimension increases. However, as pointed out in \cite{fulton2000eigenvalues}, as the dimension increases, the number of redundant inequalities in this set also increases rapidly, so there might be some hope of proving this lemma in higher dimensions too. We leave the discovery of such lemmas like the one we have presented here and their proofs involving Horn's theorem in arbitrary dimensions as interesting future work.

\end{document}